\newtheorem{theorem}{Theorem}
\newtheorem{proposition}{Proposition}
\newtheorem{lemma}{Lemma}
\theoremstyle{definition}
\newtheorem{definition}{Definition}[section]
\newtheorem{example}{Example}[section]
\newcommand{\indep}{\perp \!\!\! \perp}
\newcommand{\EE}{\mathbb{E}}
\def\letterdef#1#2#3{\def\letterdef@##1{\expandafter\def\csname #1\endcsname{#2}}%
  \letterdef@@#3{?\@car{}}\@nil}
\def\letterdef@@#1{\@gobble#1\letterdef@{#1}\letterdef@@}
\newcommand{\FDP}{\textnormal{FDP}}
\newcommand{\FDR}{\textnormal{FDR}}
\newcommand{\BH}{\textnormal{BH}}
\newcommand{\SU}{\textnormal{SU}}
\newcommand{\Bonf}{\textnormal{Bonf}}
\newcommand{\IndBH}{\textnormal{IndBH}}
\newcommand{\MT}{\textnormal{MT}}
\newcommand{\BY}{\textnormal{BY}}
\DeclareMathOperator{\Ind}{Ind}
\newcommand{\Nio}{{N_i^\circ}}
\newcommand{\Njo}{{N_j^\circ}}
\newcommand{\NAo}{{N_A^\circ}}
\newcommand{\NKo}{{N_K^\circ}}
\newcommand{\bone}{\boldsymbol{1}}
\newcommand{\oneNio}{\bone^{\Nio}}
\newcommand{\betaai}{\beta_{\alpha, i}}
\newcommand{\piktozero}[1]{p^{(i, \##1) \gets 0}}
\definecolor{light1}{HTML}{Fee0d2}
\definecolor{medium1}{HTML}{Fc9272}
\definecolor{dark1}{HTML}{de2d26}
\definecolor{light2}{HTML}{deebF7}
\definecolor{medium2}{HTML}{9ecae1}
\definecolor{dark2}{HTML}{3182bd}
\definecolor{overlay12}{HTML}{9c7465}
\definecolor{light0}{HTML}{F2F0F7}
\definecolor{mlight0}{HTML}{9e9ac8}
\definecolor{medium0}{HTML}{6a51a3}
\definecolor{hisat0}{HTML}{4a00F3}
\definecolor{dark0}{HTML}{4a1486}
\tikzset{
  NEhatch/.style={
    pattern={Lines[angle=45,            
                  distance=4.5pt,        
                  line width=0.5pt]},    
    pattern color=gray!70               
  }
}
\tikzset{
  NWhatch/.style={
    pattern={Lines[angle=-45,            
                  distance=4.5pt,         
                  line width=0.5pt]},     
    pattern color=gray!70              
  }
}
\tikzset{
  mydots/.style={
    pattern={Dots[angle=-30,            
                  distance=2.5pt]},     
    pattern color=gray!70              
  }
}
\tikzset{
  doublehatch/.style={
    NEhatch,                      
    preaction={fill, mydots}    
  }
}
\title{Controlling the false discovery rate under a non-parametric graphical dependence model}
\author{Drew T. Nguyen and William Fithian}
\date{\today}
\begin{document}

\maketitle

\begin{abstract}
    We propose sufficient conditions and computationally efficient procedures for false discovery rate control in multiple testing when the $p$-values are related by a known \emph{dependency graph}---meaning that we assume independence of $p$-values that are not within each other's neighborhoods, but otherwise leave the dependence unspecified. Our methods' rejection sets coincide with that of the Benjamini--Hochberg (BH) procedure whenever there are no edges between BH rejections, and we find in simulations and a genomics data example that their power approaches that of the BH procedure when there are few such edges, as is commonly the case. Because our methods ignore all hypotheses not in the BH rejection set, they are computationally efficient whenever that set is small. Our fastest method, the IndBH procedure, typically finishes within seconds even in simulations with up to one million hypotheses.    
\end{abstract}


\section{Introduction}

The false discovery rate (FDR) is a popular error metric in large-scale multiple testing. Given independent, or arbitrarily dependent, $p$-values corresponding to null hypotheses, it can be controlled at a level $\alpha \in (0,1)$ by the Benjamini--Hochberg (BH) procedure, though arbitrary dependence requires an adjustment to the $\alpha$ level known as the Benjamini--Yekutieli (BY) correction. Though there are now many ways to control FDR, the BH procedure remains the most frequently used, due to its simplicity, speed, and availability in software. The impact of \citet{benjaminiControllingFalseDiscovery1995} is felt widely, but especially in ``-omics'' sciences such as genomics, transcriptomics, and proteomics. 

In all these applications, it is rare to apply the BY correction for arbitrary dependence. This owes at least partly to a belief that the correction, originally derived by \citet{benjaminiControlFalseDiscovery2001}, makes BH over-conservative in practical settings, and so is not worth the significant loss in power. Indeed, the full BY correction is only \emph{known} to be necessary under specially crafted settings involving strong, unusual correlations. Simulation studies typically show that, under more realistic conditions, uncorrected BH does control FDR---for example, the work of \citet{kimEffectsDependenceHighdimensional2008}, which is inspired by the analysis of microarrays. 

Unfortunately, finite-sample theoretical results are unavailable for many real-world dependence settings. Previous analyses have focused on a restrictive set of dependence conditions, such as positive regression dependence or parametric dependence. So the claim that BH is safe to use in practice is not, yet, rigorously founded---a state of affairs also remarked upon by \citet{suFDRLinkingTheorem2018} and \citet{chi2022multiple}, both of whom call it ``unsettling''. After all, the use of uncorrected BH throughout the scientific literature is, implicitly, a bet on this claim of safety. 

Our contribution to the literature is a way to control FDR when the analyst cannot assume full independence among all the $p$-values, but is willing to assume some of the $p$-values are independent of some others. We formalize this with the notion of a \emph{dependency graph} for the $p$-values. In practice, analysts can often safely assume a considerable degree of independence, constituting a sparse dependency graph, while any leftover dependence does not require the analyst to specify a parametric form. Given this graph, we propose a new procedure called Independent Set BH, or IndBH. It controls FDR, reduces to BH in the special case of full independence, and with further computational effort, its power can be improved. 

\begin{figure}[tbp]
    \centering
\includegraphics[width = \textwidth]{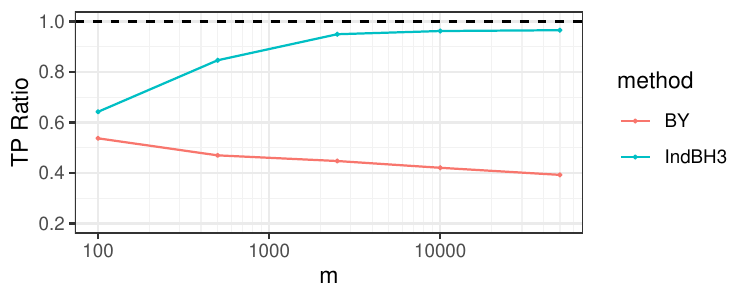}\caption{The expected true positive ratio, relative to BH, of our $\IndBH^{(3)}$ method against the $\BY$ correction ($\alpha = 0.1$). We test the hypotheses $H_i: \mu_i = 0$ against two-sided alternatives with $X \sim \cN_m(\mu, \Sigma)$, with non-null proportion $10 \%$ and non-null means $\mu_i = 3$ distributed uniformly at random, where  $\Sigma$ has equicorrelated blocks ($\rho = 0.5$). For every $m$, the blocks are of fixed size $100$, so the dependency graph becomes sparse for large $m$.}
\label{fig:powercomp}
\end{figure}

Figure \ref{fig:powercomp} illustrates the performance of one such improvement, the $\IndBH^{(3)}$ procedure, under a dependency graph model with equicorrelated blocks of test statistics. When the graph is sparse, our method has nearly the same power as the BH procedure, whereas the BY correction of $L_m := \sum_{j = 1}^m (1/j)$ is overly conservative. (This may seem unfair because BY is not graph-aware, but even a graph aware correction is likely an overcorrection.) Computationally, all of our methods at level $\alpha$ can be implemented by running them at the adjusted level $\alpha R/m$ on just the $p$-values for hypotheses in the BH rejection set. On real data, this set is often small, so this means they can scale to typical data problems with a very large number of hypotheses.

In Section \ref{sec:fdrctrl}, we present our nonparametric dependency graph model and give sufficient conditions for procedures to guarantee FDR control that extend the conditions of \citet{blanchardTwoSimpleSufficient2008b}. Section \ref{sec:method} details our main examples of FDR controlling procedures, including several that are implemented in the R package \texttt{depgraphFDR}. Section \ref{sec:computation} outlines our computational strategy. Section \ref{sec:experiments} illustrates our method in simulations and real data, and Section \ref{sec:discussion} concludes. 

\subsection{Prior work.}

We study multiple testing procedures with finite-sample FDR control using a set of $m$ $p$-values, assuming a nonparametric restriction on their dependency structure. \citet{benjaminiControlFalseDiscovery2001} initiated this line of work by studying the BH procedure, with their results subsequently generalized by \citet{sarkarResultsFalseDiscovery2002a} and \citet{blanchardTwoSimpleSufficient2008b}, 
among others. With few exceptions, however, these results typically focus only on three dependence conditions: independence, positive regression dependence (PRD), and arbitrary dependence, as originally introduced by \citet{benjaminiControlFalseDiscovery2001}. Both independence and PRD are restrictive, and arbitrary dependence is overly pessimistic for most applications.

We study a setting which differs from these three conditions. Several other works have also recently done so; \citet{guoAdaptiveControlsFWER2020} study block dependence, \citet{chi2022multiple} study negative dependence, and the approach of \citet{fithianConditionalCalibrationFalse2020b} is applicable to specific parametric and nonparametric settings. \citet{sarkarControllingFalseDiscovery2023a} showed that a variant of BH controls FDR on two-sided $p$-values for multivariate Gaussian means, when the covariance matrix is known up to a constant. 

Our approach is connected to concepts described by \citet{blanchardTwoSimpleSufficient2008b} and \citet{fithianConditionalCalibrationFalse2020b}, as we discuss further in Section \ref{sec:discussion}.

Asymptotic approaches provide a complementary way to study the FDR of the BH procedure under weak dependence. One example is \citet{farcomeniResultsControlFalse2007a}, who show that BH controls FDR as the number of hypotheses $m \to \infty$, if the dependence decays quickly enough in $m$. However, it can be unclear whether the dependence is weak enough, relative to $m$, for the asymptotic regime to apply to the data problem at hand. On the other hand, our procedures can take any known dependency graph as a hyper-parameter, control FDR for all finite $m$, and reduce to BH when the dependency graph has no edges. 


\section{FDR control in dependency graphs}\label{sec:fdrctrl}

\subsection{Multiple testing and the FDR}
\label{sec:multitestsetup}

The multiple testing problem is to control a form of type I error while rejecting a subset $S \subset \{1, \dots, m\}$ of null hypotheses $H_1, \ldots, H_m$, based on data $X \sim P$. Generically, we take a \emph{multiple testing procedure} $\textup{MT}(\alpha)$ to be represented by some rejection function, denoted $\cR^\textup{MT}_\alpha: X \mapsto S$, where $\alpha \in [0,1]$ will come to denote the level of the procedure.

Let $\cP$ be a family of distributions such that $P \in \cP$, and each null hypothesis $H_i \subsetneq \cP$ to be a subfamily. Denote the true null hypotheses $\cH_0(P) = \{i: P \in H_i\}$ to be the subfamilies that actually contain $P$, and let $m_0 = |\cH_0|$ be the number of true nulls. 

In the present work, we assume that $X$, the data used for testing, comes in the form of $p$-values $p = (p_1, \dots, p_m) \in [0,1]^m$, such that $p_i$ is stochastically larger than $\text{Unif}[0,1]$, or \emph{super-uniform}, whenever $i \in \cH_0$. The procedure $\cR$ is then a set-valued function on $[0,1]^m$. 

A notion of type I error, the \emph{false discovery rate} (FDR) of a procedure $\textup{MT}(\alpha)$, is defined by \citet{benjaminiControllingFalseDiscovery1995} as
\[
\FDR_P(\cR^\MT_\alpha) = \EE_{p \sim P}\big[\FDP(\cR^\MT_\alpha(p))\big] \quad \text{where} \quad \FDP(\cR^\MT_\alpha(p)) = \frac{|\cR^\MT_\alpha(p) \cap \cH_0|}{|\cR^\MT_\alpha(p)|}
\]
with the convention that $0/0 = 1$ inside the expectation. (We omit the dependence on $P$ if it is clear from context.) 

In multiple testing, a standard goal is to design a procedure with high power which satisfies $\sup_{P \in \cP} \FDR_P(\cR^\textup{MT}_\alpha) \leq \alpha$ for every fixed level $\alpha \in [0,1]$ or, more simply put, that $\FDR(\cR^\textup{MT}_\alpha) \leq \alpha$ whenever the distribution of $p$ satisfies certain assumptions encoded by $\cP$, such as mutual independence or positive regression dependence. We then say that $\operatorname{MT}(\alpha)$ \emph{controls} $\FDR$ at level $\alpha$ under these assumptions.

\subsection{Testing in the dependency graph model}

In this work, we formalize ``local dependence'' via the dependency graph model, as in previous work on
central limit theorems \citep{chenNormalApproximationLocal2004} and
Hoeffding-type inequalities \citep{jansonLargeDeviationsSums2004} under local dependence.

Recall that the $p$-value vector $p$ has length $m$, and let $\mathbb{D}$ be an undirected graph with nodes $\{1, \dots, m\}$. For each $i$, let $N_{\mathbb{D},i} \subset \{1, \dots, m\}$ denote the {\em neighborhood} of node $i$ in $\mathbb{D}$, i.e. the set of nodes with edges to $i$. We suppress the dependence on $\mathbb{D}$ if it is clear from context. If $A \subset \{1, \dots, m\}$, let $A^\mathsf{c} = \{1, \dots, m \} \setminus A$. 

\begin{definition}\label{def:dependencygraph}
    $\mathbb{D}$ is a \emph{dependency graph} for the model $\cP$ if, whenever $P \in \cP$ and we draw $p \sim P$, $p_i$ is independent of $(p_j)_{j \in N_{\mathbb{D},i}^\mathsf{c}}$ for every $i$.
\end{definition}

When $\mathbb{D}$ is a dependency graph, we call $N_i$ the \emph{dependency neighborhood} of node $i$. Note that any dependency graph must trivially include all self-edges since $p_i$ cannot be independent of itself. We generally ignore these in the exposition, and in our figures, but the reader may assume they are always present; in particular, the dependency neighborhood $N_i$ always includes $i$ itself.

The dependency graph, used as an input to our methods, encodes known independence; a \emph{sparse} graph says that the existing dependence is weak. Note that if $\mathbb{D}'$ is a dependency graph for $\cP$, then any denser graph $\mathbb{D}$ is also a dependency graph for the same $\cP$, where denser means that the set of edges of $\mathbb{D}$ contains that of $\mathbb{D}'$. This means that our methods achieve FDR control even when the graphs are not the sparsest possible, though power improves from having a sparser graph. 

Later, we will also need the graph-theoretic notion of an \emph{independent set} in $\mathbb{D}$, meaning a set $I \subset \{1, \dots, m\}$ such every distinct $i, j \in I$ are
not connected by an edge in $\mathbb{D}$, and we say $I$ is \emph{maximal} if it is not strictly contained by any other independent set. 

This setting of dependency graphs must be distinguished from the more common assumption of undirected graphical models. In a graphical model, the graph $\mathbb{D}$ instead encodes \emph{conditional} independence, and in particular, the neighborhood $N_i$ satisfies $p_i \indep p_{N_i^\mathsf{c}}$ \emph{conditionally} on $p_{N_i \setminus \{i\}}$. Dependency graphs involve marginal independence and can be more relevant in generic multiple testing, when the goal is not explicitly prediction or model selection.

These definitions imply that, when $\mathbb{D}$ is a dependency graph, the $p$-values $(p_i)_{i \in I}$ are mutually independent whenever $I$ is an independent set, so the probabilistic and graph-theoretic uses of ``independent'' coincide. 

Before proceeding, we settle some notation to state our later results. 

\subsubsection{Notation}

First, we define an important modification of the neighborhoods. If $\mathbb{D}$ is a graph with neighborhoods $N_1, \dots, N_m$, let $\Nio := N_i \setminus \{i\}$ denote the ``punctured neighborhood'' of $i$. This contains
all nodes in $i$'s neighborhood except $i$ itself. We will often mask the $p$-values corresponding to $\Nio$ in the following sense: 
 Let $\bone^A: [0,1]^m \to [0,1]^m$ be the $p$-value transformation setting the indices in $A$ to $1$, and leaving the rest unchanged, that is:
\[
  (\bone^A p)_i := \begin{cases}
                        1 & i \in A \\
                        p_i & i \in A^\mathsf{c}
                  \end{cases}
\]
This represents masking the entries $A$ of the $p$-value vector $p$.  Such masking 
can eliminate dependencies between $p_i$ and the rest of $p$, since $p_i$ is independent of $\bone^{N_i} p$. Often, we will see $A = \Nio$, which masks all the $p$-values in the dependent neighborhood of a given $p_i$, except $i$ itself. 

We make use of the shorthands
\begin{gather*}
    p_{-i} := (p_1, \dots, p_{i-1}, p_{i+1}, \dots, p_m), \\
    S_i := (p_j)_{j \in N_i^\mathsf{c}}.
\end{gather*}
In particular, when $\mathbb{D}$ is a dependency graph, note that $p_i$ is independent of $S_i$. 

For $p$-value vectors $p$ and $q$, we use the symbol~$\preceq$ to mean element-wise comparison, meaning $p \preceq q$ if $p_i \leq q_i$ for each $i$. For a subset $B \subset [m]$, let $\mathbb{D}[B]$ be the subgraph of $\mathbb{D}$ induced by restricting to nodes that are also in $B$, and if $x$ is a vector in $\mathbb{R}^m$, let $x_B$ denote the sub-vector with indices in $B$. 

Let $2^{[m]}$ be the power set of $[m] := \{1, \dots, m\}$. Throughout, $\cR_\alpha: [0,1]^m \to 2^{[m]}$ represents a general multiple testing procedure that takes $p$-values as input and returns a rejection set. If it is a specific multiple testing procedure, say $\cR_\alpha^{\textup{MT}}$, we will also refer to it as $\operatorname{MT}(\alpha)$ For example, $\BH(\alpha)$ is the BH procedure at level $\alpha$, also written $\cR^\BH_\alpha$, and $\operatorname{Bonf}(\alpha)$ is the Bonferroni correction, also written $\cR^\text{Bonf}_\alpha$. Occasionally, we will also use as input a $p$-value vector of length different than $m$, which will be clear from context.

\begin{figure}[ht]
    \centering

    \begin{minipage}[t]{0.96\textwidth}
    \centering
    \begin{tikzpicture}[scale=1.5, every node/.style={circle, draw, minimum size=10mm, font=\Large}]
      \node (1) at (1.5,0.8) {1};
      \node[fill=gray!30] (2) at (1.5,-0.8) {2};
      \node[fill=gray!30] (3) at (3,0) {3};
      \node (4) at (4.5,0.8) {4};
      \node (5) at (4.5,-0.8) {5};

      \draw (3) -- (1) -- (2) -- (3);
      \draw (3) -- (4);
      \draw (3) -- (5);
      \draw[thick, dashed] ($(1)$) circle[radius=0.44cm];

      \node[above=-10pt of 1, draw=none, inner sep=0pt, font=\normalsize] {$p_1$'s node};
      \node[below=-10pt of 2, draw=none, inner sep=0pt, font=\normalsize] {$p_2$'s node};
      \node[below=-10pt of 3, draw=none, inner sep=0pt, font=\normalsize] {$p_3$'s node};
      \node[above=-10pt of 4, draw=none, inner sep=0pt, font=\normalsize] {$p_4$'s node};
      \node[below=-10pt of 5, draw=none, inner sep=0pt, font=\normalsize] {$p_5$'s node};  
    \end{tikzpicture}

    \end{minipage}
    \caption{Some notation depicted on a dependency graph with $m = 5$ $p$-values, which will serve as a running example in this paper. We circled the node $i = 1$ and shaded the nodes for $\Nio = \{2,3\}$. The statistic $S_i = (p_4, p_5)$ corresponds to nodes outside of these.} 
    \label{fig:notation}
\end{figure}
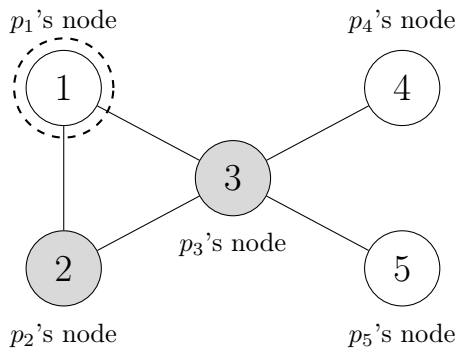

Figure \ref{fig:notation} depicts some of the important notation so far. 

\subsection{Simple sufficient conditions for FDR control}

With only three simple conditions, we can obtain FDR control under the dependency graph. We say that $\cR_\alpha$ is a \emph{$\mathbb{D} $-adapted procedure} if it satisfies these conditions for all $p$. 

\begin{enumerate}[label=(P\arabic*)]
    \item \label{item:SC} (Self-consistency.) If $i \in \cR_\alpha(p)$, then $p_i \leq \alpha |\cR_\alpha(p)|/m$. 
    \item \label{item:Mon} (Monotonicity.) If $p' \preceq p$, then $\cR_\alpha(p) \subseteq \cR_\alpha(p')$.
    \item \label{item:NB} (Neighbor-blindness.) $i \in \cR_\alpha(p) \iff i \in \cR_\alpha(\oneNio p)$. 
\end{enumerate}

Property \ref{item:SC} was first introduced in a more general form by 
\citet{blanchardTwoSimpleSufficient2008b}, and property \ref{item:Mon} 
has been studied for FDR control since at least \citet{tamhane2008weighted}. The condition 
\ref{item:NB} is new, giving a clean solution to FDR control in our setting.

\subsubsection{Main results}\label{sec:Dadaptiveproof}

\begin{theorem}\label{thm:control} Suppose $\cR_\alpha : [0,1]^m \to 2^{[m]}$ is a $\mathbb{D}$-adapted multiple testing procedure, and that $p_i$ is superuniform whenever $i \in \cH_0$. Then whenever $\mathbb{D}$ is a dependency graph for $p$, the procedure $\cR_\alpha$ controls the FDR at level $\alpha |\cH_0|/m$ on $p$. 
\end{theorem}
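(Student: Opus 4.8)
The plan is to bound the FDR by a sum of per-hypothesis contributions and to show that each true-null contribution is at most $\alpha/m$, echoing the self-consistency argument of \citet{blanchardTwoSimpleSufficient2008b} but using neighbor-blindness to absorb the local dependence. Writing the false discovery proportion as $\FDP(\cR_\alpha(p)) = \sum_{i \in \cH_0} \mathbf{1}\{i \in \cR_\alpha(p)\}/|\cR_\alpha(p)|$ (each summand read as $0$ when $i \notin \cR_\alpha(p)$, so the no-rejection event contributes nothing), linearity of expectation reduces the theorem to proving $\EE[\mathbf{1}\{i \in \cR_\alpha(p)\}/|\cR_\alpha(p)|] \leq \alpha/m$ for a single fixed true null $i$, since summing over the $|\cH_0|$ true nulls then gives the claim.

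Fix such an $i$ and set $q := \oneNio p$, the vector with $i$'s punctured neighborhood masked to $1$. Neighbor-blindness \ref{item:NB} rewrites the numerator as $\mathbf{1}\{i \in \cR_\alpha(p)\} = \mathbf{1}\{i \in \cR_\alpha(q)\}$, while monotonicity \ref{item:Mon} controls the denominator: masking only raises coordinates, so $q \succeq p$ forces $\cR_\alpha(q) \subseteq \cR_\alpha(p)$ and hence $|\cR_\alpha(q)| \leq |\cR_\alpha(p)|$. Since a nonzero indicator already guarantees $|\cR_\alpha(q)| \geq 1$, these combine to give $\mathbf{1}\{i \in \cR_\alpha(p)\}/|\cR_\alpha(p)| \leq \mathbf{1}\{i \in \cR_\alpha(q)\}/|\cR_\alpha(q)|$. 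The right-hand side depends on $p$ only through the pair $(p_i, S_i)$ with $S_i = p_{N_i^\mathsf{c}}$, because the coordinates indexed by $\Nio$ have been masked away.

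I would then condition on $S_i$. The dependency-graph hypothesis (Definition \ref{def:dependencygraph}) gives $p_i \indep S_i$, so conditionally on $S_i$ the coordinate $p_i$ keeps its super-uniform marginal while $\cR_\alpha(q)$ is a function of $p_i$ alone; write $R(t)$ and $r(t) = |R(t)|$ for its value and size at $p_i = t$. By monotonicity \ref{item:Mon}, $r$ is non-increasing and $\{t : i \in R(t)\}$ is a down-set, so I set $\tau = \tau(S_i) := \sup\{t : i \in R(t)\}$. The crux is a uniform lower bound on $r$: self-consistency \ref{item:SC} gives $r(s) \geq ms/\alpha$ at every $s$ with $i \in R(s)$, and since such $s$ fill all of $[0,\tau)$ and $r$ is non-increasing, letting $s \uparrow \tau$ yields $r(t) \geq m\tau/\alpha$ for every $t$ in the rejection set. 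Consequently $\EE[\mathbf{1}\{i \in R(p_i)\}/r(p_i) \mid S_i] \leq \frac{\alpha}{m\tau}\,\PP(p_i \leq \tau \mid S_i) = \frac{\alpha}{m\tau} F_i(\tau) \leq \alpha/m$, the last inequality using super-uniformity $F_i(\tau) \leq \tau$ (the degenerate $\tau = 0$ case being handled by $F_i(0) \leq 0$). Averaging over $S_i$ and summing over $i \in \cH_0$ then closes the argument.

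I expect the main obstacle to be precisely this uniform lower bound $r(t) \geq m\tau/\alpha$ over the entire interval $[0,\tau)$ rather than only at the threshold: a naive use of self-consistency only bounds $1/r(t) \leq \alpha/(mt)$, whose integral against $dF_i$ diverges near $t=0$, so one must invoke monotonicity of $r$ to transport the bound back to the single worst point $\tau$. The other delicate point is structural rather than analytic—checking that masking together with \ref{item:NB} genuinely renders the rejection indicator a function of $(p_i,S_i)$ with $p_i \indep S_i$, which is what licenses the conditioning. That step is the new ingredient beyond the classical independent-$p$-value proof, and it is where Definition \ref{def:dependencygraph} and condition \ref{item:NB} do the essential work.
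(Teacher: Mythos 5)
Your proof is correct and takes essentially the same route as the paper's: the same FDR decomposition into per-null terms, the same masked vector $q = \oneNio p$, the use of \ref{item:NB} and \ref{item:Mon} to pass from $\cR_\alpha(p)$ to $\cR_\alpha(q)$, and the same finish via conditioning on $S_i$ and super-uniformity. Your threshold $\tau(S_i)$ is exactly the paper's local threshold $\alpha\beta_i(q_{-i})/m$ from \eqref{eq:condtest}, and your limiting argument establishing $r(t) \geq m\tau/\alpha$ on the rejection event simply makes explicit the step the paper records as \eqref{eq:rejlowerbound}.
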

Even though the result is implied by Theorem \ref{thm:pprdcontrol}, 
we provide a self-contained proof. It demonstrates
how any $\mathbb{D}$-adapted procedure compares each $p_i$ to a threshold 
which depends only on 
$p$-values independent of $p_i$. 
\begin{proof}
Let $\cR = \cR_\alpha$. Because 
    \[
    \FDR = \sum_{i \in \cH_0} \EE\left[
    \frac{1\{i \in \cR(p)\}}{|\cR(p)|}
        \right],
    \]
it is sufficient to show that 
\begin{equation}\label{eq:toshow} 
        \EE\left[
    \frac{1\{i \in \cR(p)\}}{|\cR(p)|}
        \right]
        \leq \alpha/m
\end{equation}
for an arbitrary $i \in \cH_0$. Let $q = \oneNio p$, which depends on $p$ only through $p_i$ and $S_i$, so that $q_i = p_i$, 
and $q_{-i}$ is a function of $S_i$ alone.

We first characterize the rejection event in terms of local thresholds, using all three of
the conditions. Observe by \ref{item:Mon} that there exists a function $\beta_i: [0,1]^{m-1} \to \mathbb{R}$ satisfying
    \begin{equation}\label{eq:condtest}
        i \in \cR_\alpha(p) \Leftrightarrow p_i \leq \frac{\alpha \beta_i(p_{-i})}{m}, 
    \end{equation}
    where by \ref{item:SC},
    \begin{equation}\label{eq:rejlowerbound}
       \beta_i(p_{-i}) \leq |\cR_\alpha(p)|  \quad  \text{ whenever } i \in \cR_\alpha(p),
    \end{equation}
    and by \ref{item:NB}, 
    \begin{equation}\label{eq:nbthresh}
       \beta_i(p_{-i}) = \beta_i(q_{-i}).
    \end{equation}
In other words, $\beta_i(p_{-i})$ represents a lower bound on the number of rejections we eventually make, on the event where $H_i$ is rejected, and it depends only on the other $p$-values $p_{-i}$. In fact, \eqref{eq:nbthresh} implies that $\beta_i(p_{-i})$ depends only on $S_i$, the $p$-values {\em outside of $i$'s dependency neighborhood}, and is therefore independent of $p_i$.

As a result, we have that for $i \in \cH_0$,
\begin{align*}
     \EE\left[
    \frac{1\{i \in \cR(p)\}}{|\cR(p)|}
        \right]
    &\leq \EE\left[
    \frac{1 \{i \in \cR(p)\}}{|\cR(q)| } 
            \right] &&\text{\ref{item:Mon}}\\ 
    &\leq \EE\left[
    \frac{1 \{p_i \leq \alpha \beta_i(q_{-i}) \big/ m\}}{\beta_i(q_{-i}) } 
        \right] &&\text{Eqs. } \eqref{eq:condtest}-\eqref{eq:nbthresh} \\
    &= \EE\left[ \EE\left[
    \frac{1 \{p_i \leq \alpha \beta_i(q_{-i}) \big/ m\}}{\beta_i(q_{-i}) } 
         \mid S_i \right] \right] &&\text{(tower rule)} \\
    &\leq \EE\left[  
    \frac{\alpha \beta_i(q_{-i}) \big/ m }{\beta_i(q_{-i})}  \right]   &&\text{(super-uniformity)} \\
    &= \alpha/m.
\end{align*}
\end{proof}

Next, recall from the multiple testing literature that $p$ is \emph{positive regression dependent} (PRD) on a set $A \subset [m]$ if for any \emph{increasing} set $K \subset \mathbb{R}^m$, meaning one such that $p \in K,\, p \preceq p'$ implies $p' \in K$, the probability $\mathbb{P}(p \in K \mid p_i = t)$ is increasing in $t$ whenever $i \in A$. 

We say that $p$ is \emph{partially positive regression dependent} (PPRD) on $A$ with respect to the graph $\mathbb{D}$ if, for any increasing set $K \subset \mathbb{R}^{|N_i^\mathsf{c}|}$,  the probability $\mathbb{P}(p_{N_i^\mathsf{c}} \in K \mid p_i = t)$ is increasing in $t$ whenever $i \in A$, a strictly weaker assumption than PRD for any $\mathbb{D}$. Now we can strengthen the result of Theorem \ref{thm:control}. 

\begin{restatable}{theorem}{pprd}\label{thm:pprdcontrol} 
Suppose $\cR_\alpha : [0,1]^m \to 2^{[m]}$ is a $\mathbb{D}$-adapted multiple testing procedure, and that $p_i$ is superuniform whenever $i \in \cH_0$. Then whenever the input vector $p$ is PPRD on $\cH_0$ with respect to $\mathbb{D}$, the procedure $\cR_\alpha$ controls the FDR at level $\alpha |\cH_0|/m$. 
\end{restatable}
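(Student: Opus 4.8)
The plan is to mirror the proof of Theorem~\ref{thm:control} as far as possible, diverging only at the single step where it invoked the independence of $p_i$ and $S_i$. As in that proof, it suffices to bound $\EE[1\{i\in\cR(p)\}/|\cR(p)|]$ by $\alpha/m$ for each fixed null $i\in\cH_0$. The construction of the threshold function $\beta_i$ from \ref{item:Mon}, together with \eqref{eq:condtest}--\eqref{eq:nbthresh}, never used the dependency-graph assumption: it is purely a consequence of \ref{item:SC}--\ref{item:NB} and the definition of $q=\oneNio p$. I would therefore carry the argument verbatim through the first two inequalities of the displayed chain, arriving at
\[
\EE\!\left[\frac{1\{i\in\cR(p)\}}{|\cR(p)|}\right]\;\le\; \EE\!\left[\frac{1\{p_i\le \alpha\beta_i(q_{-i})/m\}}{\beta_i(q_{-i})}\right],
\]
where $\beta_i(q_{-i})$ is a function of $S_i$ alone. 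In Theorem~\ref{thm:control} the right-hand side was dispatched by conditioning on $S_i$ and using $p_i\indep S_i$; under PPRD this is exactly the step that fails, since $p_i$ need not remain super-uniform given $S_i$.

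The replacement rests on two observations. First, I claim $g(S_i):=\beta_i(q_{-i})$ is coordinatewise non-increasing in $S_i$. This follows from \ref{item:Mon}: lowering any coordinate $p_j$ with $j\in N_i^\mathsf{c}$ only enlarges the rejection set, hence only raises the largest threshold below which $i$ is rejected, so $\beta_i$ decreases as $S_i$ increases. (The punctured-neighborhood coordinates of $q_{-i}$ are frozen at $1$, so $q_{-i}$, and thus $g$, genuinely depends on $S_i$ alone and inherits this monotonicity.) Second, $S_i=p_{N_i^\mathsf{c}}$ is precisely the vector appearing in the definition of PPRD, so the hypothesis says exactly that the conditional law of $S_i$ given $p_i=t$ is stochastically non-decreasing in $t$.

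The crux is then to show $\EE[1\{p_i\le \alpha g(S_i)/m\}/g(S_i)]\le \alpha/m$ from these two facts, and the obstacle is that, for fixed $p_i$, the integrand is \emph{not} monotone in $S_i$: as $S_i$ grows, $g$ shrinks, so $1/g$ grows but the indicator eventually switches off, and one cannot apply stochastic dominance directly. I would resolve this with a monotone coupling. By Strassen's theorem the stochastically increasing family $\{\mathrm{Law}(S_i\mid p_i=t)\}_t$ admits a representation $S_i=\tilde S(p_i,U)$ with $U\indep p_i$ and $t\mapsto \tilde S(t,U)$ non-decreasing almost surely. Writing $G_U(t):=g(\tilde S(t,U))$, which is non-increasing in $t$ because $g$ is non-increasing, and applying Fubini, it suffices to prove for each fixed realization of $U$ the one-dimensional inequality
\[
\EE\!\left[\frac{1\{p_i\le \alpha G(p_i)/m\}}{G(p_i)}\right]\le \frac{\alpha}{m}
\]
for an arbitrary deterministic non-increasing $G$ and super-uniform $p_i$. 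This last inequality has a one-line proof: letting $t^\ast=\sup\{t: t\le \alpha G(t)/m\}$, monotonicity of $G$ gives $G(t)\ge G(t^\ast)\ge mt^\ast/\alpha$ throughout the rejection region $\{p_i\le t^\ast\}$, so the integrand is at most $\tfrac{\alpha}{mt^\ast}1\{p_i\le t^\ast\}$, and super-uniformity yields $\tfrac{\alpha}{mt^\ast}\PP(p_i\le t^\ast)\le \alpha/m$. The main work, and the only genuinely new idea relative to Theorem~\ref{thm:control}, is the coupling that converts the PPRD hypothesis into this deterministic monotone-threshold statement.
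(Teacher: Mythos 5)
Your reduction is carried out correctly and your route is genuinely different from the paper's, so it is worth saying what each does. The paper's proof also passes to $q = \oneNio p$, but then simply observes that PPRD of $p$ makes the masked vector $q$ PRD on $\cH_0$, and invokes the superuniformity lemma (Lemma~\ref{lem:supuni}, from \citet{ramdasUnifiedTreatmentMultiple2019a}) with $f(q) = \alpha|\cR_\alpha(q)|/m$, which is coordinatewise non-increasing by \ref{item:Mon}; that lemma is proved by an Abel-summation/level-set argument over the finitely many values of the rejection count, with no coupling anywhere. You instead set out to re-prove the needed inequality from scratch. Your observations that $g(S_i) := \beta_i(q_{-i})$ is coordinatewise non-increasing in $S_i$, and that PPRD is exactly stochastic monotonicity of $\mathrm{Law}(S_i \mid p_i = t)$ in $t$, are both correct, and your one-dimensional threshold lemma is essentially right. (One small slip there: $G(t^\ast) \ge m t^\ast/\alpha$ can fail when the supremum defining $t^\ast$ is not attained; what is true, and all you need, is that $G(t) \ge m t^\ast/\alpha$ for every $t$ in the rejection region, since for any $s \ge t$ also in that region, $G(t) \ge G(s) \ge ms/\alpha$, and one takes the supremum over such $s$.)

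The genuine gap is the coupling step, which you yourself identify as the only new idea. Strassen's theorem couples \emph{two} stochastically ordered measures; it does not provide a single representation $S_i = \tilde{S}(p_i, U)$ with $U \indep p_i$ and $t \mapsto \tilde{S}(t, U)$ non-decreasing almost surely, simultaneously over a continuum of conditioning values. What you are asserting is \emph{realizable monotonicity} of the family of conditional laws, and it is known (Fill and Machida, Annals of Probability, 2001) that stochastic monotonicity does not imply realizable monotonicity for general poset-indexed families; since $S_i$ takes values in $\RR^{|N_i^{\mathsf{c}}|}$ with $|N_i^{\mathsf{c}}|$ typically larger than one, the one-dimensional conditional-quantile construction is unavailable. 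Because your index set $[0,1]$ is totally ordered, the representation you want does exist, but proving it is substantive: one must glue pairwise Strassen couplings Markov-wise over finite chains, pass to a countable dense set of $t$ values (augmented by the atoms of the law of $p_i$) via a tightness/projective-limit argument, extend to all $t$ by right limits, and then check that the resulting law agrees with $\mathrm{Law}(S_i \mid p_i = t)$ for almost every $t$---for instance by noting that $t \mapsto \EE\bigl[\sum_j \arctan(\tilde{S}_j(t,U))\bigr]$ is monotone, hence continuous off a countable set, and that a strictly increasing functional separates stochastically ordered laws. None of this appears in your proposal, so as written the proof is incomplete precisely at its crux. A short repair is available: since the relevant functional of $S_i$ (the rejection lower bound) takes finitely many values, you could instead expand $\EE[1\{p_i \le \alpha g(S_i)/m\}/g(S_i)]$ as $\sum_{k} k^{-1}\PP(g(S_i)=k,\ p_i \le \alpha k/m)$ and run the standard Abel-summation argument on the decreasing events $\{g(S_i) \ge k\}$---which is, in effect, to reprove Lemma~\ref{lem:supuni} the way the literature does, or simply to quote it as the paper does.
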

These two results generalize a standard pair of results from the literature, as discovered by \citet{benjaminiControlFalseDiscovery2001} and generalized by \citet{blanchardTwoSimpleSufficient2008b}: namely, that the BH procedure controls FDR under independence and positive regression dependence. 

The proof is deferred to Appendix \ref{app:proofs}, using a related 
leave-one-out technique based on the superuniformity lemma of \citet{ramdasUnifiedTreatmentMultiple2019a}.

\subsubsection{Remarks}
\label{sec:fdrctrlremarks}

\paragraph{Dependency graphs in practice.} Given a vector of $p$-values, a practitioner can often identify a dependency graph by simply identifying, for each $p_i$, which $p_j$'s are independent of $p_i$. For example, in spatiotemporal settings, independent $p$-values are distant in space or time. A particularly useful special case is \emph{block dependence}, when there exist blocks of $p$-values with mutual independence between but not necessarily within blocks, see e.g. assumption 2(a) of \citet{ignatiadisCovariatePoweredCrossweighted2021}, which is equivalent to a dependency graph $\mathbb{D}$ whose connected components are all cliques. Such blocks arise in genome-wide testing, corresponding to $p$-values from different chromosomes, or microplate assays, corresponding to $p$-values from different plates. 

\paragraph{A BY correction for the graph?} A seemingly natural approach would be to find the worst-case FDR of the $\BH(\alpha)$ procedure under a dependency graph, and use this to correct the $\alpha$-level of the BH procedure. But we find that the bounds we are able to derive are, like BY, overly conservative, while our preferred methods typically deliver finite-sample guarantees with good power. A tight estimate would nevertheless be of independent interest. Define $\mathcal{P}_\mathbb{D}$ be the family of distributions $P$ for which $\mathbb{D}$ is a dependency graph, and let the worst-case FDR be
\begin{equation}\label{eq:worstcasefdr}
    \gamma_\mathbb{D}(\alpha) := \sup \big\{\FDR_P(\cR^\BH_\alpha) : P \in \mathcal{P}_\mathbb{D}\big\}.
\end{equation}
We do not compute $\gamma_\mathbb{D}(\alpha)$, but in Appendix \ref{app:BYD} can be found upper and lower bounds on it, accompanied by some discussion. 

\paragraph{Local thresholds.} Equation \eqref{eq:condtest} says that any monotone procedure can be viewed as a thresholding procedure on every $p$-value separately. Together with \eqref{eq:rejlowerbound} and \eqref{eq:nbthresh}, we see that $\mathbb{D}$-adapted procedures simply ``reject $H_i$ whenever $p_i \leq c_i(S_i)$'', for a certain function $c_i$. We call $c_i(S_i) = \alpha \beta_i(p_{-i})/m$ the procedure's $i$th local threshold, where $\beta_i(p_{-i})$ only depends on $p_i$ through $S_i = p_{N_i^\mathsf{c}}$ by \eqref{eq:nbthresh}. We also call $\beta_i(p_{-i})$ the procedure's $i$th \emph{rejection lower bound}~(r.l.b.), as it satisfies equation \eqref{eq:rejlowerbound}. This local threshold representation leads to an effective computational strategy for our procedures (see Section \ref{sec:computation}).

\paragraph{Self-consistency gap.} The bound \eqref{eq:rejlowerbound} can be quite loose, especially if the r.l.b. $\beta_i(p_{-i})$ tends to be far from $|\cR_\alpha(p)|$. In that case, the underlying procedure is conservative. The gap in \eqref{eq:rejlowerbound} can be interpreted as a ``self-consistency gap''---the looseness with which the self-consistency criterion is satisfied. This gap is some sense optimizable, allowing us to increase the local threshold and mitigate the conservatism (see Section \ref{sec:gapchase}).

\subsection{Parametric examples}

Though assumptions on dependency graphs can often be made without a specific
parametric model for the $p$-values, a few parametric models nevertheless 
fall naturally into our framework, as we now describe. 
\begin{example}[Multivariate $z$-statistics]
    If $Z \sim \mathcal{N}_m(\mu, \Sigma)$ is a multivariate normal random vector
of length $m$, then if $\Sigma$ is known, then the graph $\mathbb{D}$ 
given by $N_i = \{j: \Sigma_{ij} \neq 0\}$ is a dependency graph for $Z$, and therefore also for one- or two-sided $p$-values based on the coordinates of $Z$. Additionally, 
$Z$ is PPRD with respect to the graph with neighborhoods $N_i = \{j: \Sigma_{ij} < 0\}$, and so are one-sided $p$-values based on its coordinates.  
\end{example}

\begin{example}[Multivariate $t$-statistics]
    Now suppose that $Z \sim \mathcal{N}_m(\mu, \sigma^2 \Psi)$, where $\Psi$ is
    known and $\sigma^2$ is not, but we can use the independent random variable 
    $\nu {\hat\sigma}^2 \sim \sigma^2 \chi^2_{\nu}$ to estimate it. Then 
    for $T_i = Z_i / \sqrt{\Psi_{i,i} \hat\sigma^2}$, we justify the following result in Appendix \ref{app:proofs}. 
    
    \begin{restatable}{proposition}{pprdtstat}\label{prop:pprdtstat}
    Let $\mathbb{D}$ be the graph with neighborhoods 
    $N_i = \{j : \Psi_{ij} \neq 0\}$ for $i = 1, \dots, m$. Then the test statistics $T^2 = (T_1^2, \dots, T_m^2)$, 
    are PPRD on $[m]$ with respect to $\mathbb{D}$.
    \end{restatable}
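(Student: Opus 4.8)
The plan is to exploit the scale-mixture structure created by the shared denominator $\hat\sigma$. Write $\nu\hat\sigma^2/\sigma^2 = W \sim \chi^2_\nu$, independent of $Z$, and set $U_j := Z_j^2/(\sigma^2\Psi_{jj})$ so that $T_j^2 = \nu U_j/W$ for every $j$. The dependency graph enters exactly as needed: for $j \in \Nic$ we have $\Psi_{ij} = 0$, so $Z_i$ is uncorrelated with---hence, being jointly Gaussian, independent of---$Z_j$; thus $Z_i \perp (Z_j)_{j \in \Nic}$ jointly, and since $W \perp Z$ we obtain the factorizations $U_i \perp (U_{\Nic}, W)$ and $U_{\Nic} \perp (U_i, W)$. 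In particular, conditionally on $W = w$ the two blocks $T_i^2 = \nu U_i/w$ and $T^2_{\Nic} = \nu U_{\Nic}/w$ are independent, and each coordinate of $T^2_{\Nic}$ is a decreasing function of $w$.

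First I would record the consequence that the family $w \mapsto \cL(T^2_{\Nic} \mid W = w)$ is stochastically decreasing in $w$. Because $U_{\Nic}$ is independent of $(U_i, W)$, conditioning on $\{T_i^2 = t\}$ leaves the law of $U_{\Nic}$ untouched, so I can represent $T^2_{\Nic} \mid \{T_i^2 = t\} \stackrel{d}{=} \nu U_{\Nic}/\widetilde W_t$, where $\widetilde W_t \sim \cL(W \mid T_i^2 = t)$ is drawn independently of $U_{\Nic}$. A monotone coupling then reduces the entire proposition to a one-dimensional claim: if $\cL(W \mid T_i^2 = t)$ is stochastically decreasing in $t$, then $\nu U_{\Nic}/\widetilde W_t$ is stochastically increasing in $t$, which is exactly PPRD with respect to $\mathbb{D}$. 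So the whole argument rests on showing that conditioning on a larger value of $T_i^2$ pushes the shared denominator $W$ stochastically downward.

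To verify this I would change variables via $U_i = T_i^2 W/\nu$ to obtain the conditional density $q_t(w) \propto w^{(\nu-1)/2}\exp\{-\tfrac12(1 + t/\nu)w\}\cosh(\delta_i\sqrt{tw/\nu})$, where $\delta_i^2 = \mu_i^2/(\sigma^2\Psi_{ii})$ is the noncentrality of $U_i$. For a null coordinate $i \in \cH_0$ we have $\delta_i = 0$, the $\cosh$ factor disappears, and $q_t$ is a Gamma density whose rate $\tfrac12(1 + t/\nu)$ increases in $t$; such a scale family is monotone-likelihood-ratio decreasing, hence stochastically decreasing, which delivers PPRD on $\cH_0$---precisely the hypothesis that Theorem \ref{thm:pprdcontrol} requires. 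The hard part is the general coordinate: when $\delta_i \neq 0$ the increasing factor $\cosh(\delta_i\sqrt{tw/\nu})$ competes against the rate, and the likelihood-ratio monotonicity of $q_t$ can break down for strong signals. Handling this step is where I expect the main difficulty to lie; I would attempt it through a direct total-positivity analysis of the noncentral $\chi^2_1(\delta_i^2)$ density---controlling the sign of $\partial^2_{t,w}\log q_t(w)$, equivalently the monotonicity of $x \mapsto x\,(\log f_{U_i})'(x)$---and I would expect to need either an additional constraint tying the signal size to $\nu$ or a more delicate stochastic-ordering argument that bypasses full likelihood-ratio monotonicity.
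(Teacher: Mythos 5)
Your reduction is sound as far as it goes, but it does not prove the proposition as stated. The claim is PPRD on all of $[m]$, and your argument is complete only for coordinates with $\mu_i = 0$: there the conditional density $q_t(w) \propto w^{(\nu-1)/2}e^{-\frac{1}{2}(1+t/\nu)w}$ is a Gamma family with fixed shape and rate increasing in $t$, hence likelihood-ratio (and so stochastically) decreasing, and your monotone-coupling step is valid because $U_{\Nic} \perp (U_i, W)$. For non-null $i$ you leave the argument open, and your own diagnosis of why is correct: as $w \to 0$,
\[
\partial_w \log\bigl[q_{t'}(w)/q_t(w)\bigr] \;\longrightarrow\; \frac{(t'-t)(\delta_i^2-1)}{2\nu},
\]
so for $\delta_i^2 > 1$ the likelihood ratio is increasing near the origin and MLR genuinely fails; your route would need a weaker stochastic-ordering argument that you do not supply. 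Since the proposition asserts PPRD on $[m]$, this is a real gap---although, as you correctly note, the part you did prove (PPRD on $\cH_0$) is exactly what Theorem \ref{thm:pprdcontrol} consumes, so the downstream FDR guarantee is untouched.

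It is worth seeing how the paper sidesteps your computation entirely. It conditions not on $W$ but on the pair $(U_i, V_i)$, with $U_i = Z_{-i} - \Psi_{-i,i}\Psi_{i,i}^{-1}Z_i$ and $V_i = \nu\hat\sigma^2 + Z_i^2/\Psi_{i,i}$, and invokes the identity \eqref{eq:tstatrelation}: when $\Psi_{ij} = 0$ it reduces to $T_j^2 = U_{i,j}^2\,(\nu + T_i^2)/(\Psi_{jj}V_i)$, a \emph{deterministic increasing} function of $T_i^2$ given the conditioning variables. The conditional probability of an increasing event is then an indicator $1\{s \geq \tau(U_i,V_i,K_i)\}$, and monotonicity in $s$ follows by averaging over $(U_i,V_i)$---no density analysis at all. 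But note the trade: that averaging step silently requires $(U_i,V_i)$ to be independent of $T_i^2$. Writing $A = Z_i^2/(\sigma^2\Psi_{ii})$ and $B = \nu\hat\sigma^2/\sigma^2$, this is the Beta--Gamma independence of $A/B$ and $A+B$, which by Lukacs' characterization holds exactly when $A$ is a (central) Gamma variable---i.e.\ for null coordinates---and fails when $A$ is noncentral. So the obstruction you isolated at non-null coordinates is not an artifact of your parametrization: the paper's conditioning dissolves it cleanly for the nulls, but its extension to all of $[m]$ rests on an independence that the paper does not verify and that fails under noncentrality.
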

    Consequently, two-sided $p$-values based on the coordinates of $T^2$ are PPRD with respect to $\mathbb{D}$.
\end{example}

\begin{example}[Linear models]
    The Gaussian linear model for a fixed design matrix $\mathbf{X} \in \mathbb{R}^{n \times m}$ is a special case of the previous example. We observe a response $\mathbf{y} \sim \mathcal{N}_n(\mathbf{X} \beta, \sigma^2 I_m)$ where $\beta \in \mathbb{R}^m, \sigma^2 > 0$ are 
    both unknown, and supposing that $\mathbf{X}$ satisfies $n > m$ 
    with full column rank, we compute the ordinary least squares estimate and residual
    sum of squares
    \[
        \hat \beta =  
            (\mathbf{X}^\top \mathbf{X})^{-1} \mathbf{X}^\top \mathbf{y} 
            \sim \mathcal{N}_m(\beta, \sigma^2 (\mathbf{X}^\top \mathbf{X})^{-1}), 
        \quad \text{and} \quad 
        (n-m) \hat \sigma^2 = \texttt{RSS} = 
            \|\mathbf{y} - \mathbf{X} \hat \beta \|^2 \sim \sigma^2 \chi_{n - m}^2.
    \]
    Then using the previous example, we know that for 
    $T_i = \hat \beta_i / \sqrt{(\mathbf{X}^\top\mathbf{X})^{-1}_{i,i} \hat\sigma^2}$, 
    the test statistic $T^2$ is PPRD on $[m]$ with respect to the graph defined
    by $N_i = \{j: (\mathbf{X}^\top\mathbf{X})^{-1}_{i,j} = 0\}$. This may be reasonable in an approximate sense; when the rows of $\mathbf{X}$ are iid with finite second moment, the entries of $(\mathbf{X}^\top\mathbf{X})^{-1}$ go to zero as $n$ grows. 
\end{example}


\section{Graph-adapted procedures}\label{sec:method}

Theorem \ref{thm:control} guarantee FDR control for procedures adapted to the underlying dependency graph. This section presents a family of graph-adapted procedures that trade off statistical power with computational efficiency. Note that, while we focus our exposition on independence relations, any graph-adapted procedure also controls FDR under partial positive dependence by Theorem \ref{thm:pprdcontrol}.

Before presenting our first procedure, we recall some preliminaries. Recall that the BH procedure at level $\alpha$, which we refer to as $\BH(\alpha)$, returns a rejection set according to the following rule:
Let $p_{(1)} \leq \cdots \leq p_{(m)}$ be the $p$-value order statistics. Then  
\begin{equation}\label{eq:BHdef}
    \cR^{\BH}_\alpha(p) := \{i : p_i \leq \alpha r^*/m \}, \quad \text{where} \quad r^* := \max\{r : p_{(r)} \leq \alpha r/m \}.
\end{equation}
The Bonferroni correction at level $\alpha$, abbreviated $\Bonf(\alpha)$, is simply
\[ 
    \cR^{\Bonf}_\alpha(p) := \{i : p_i \leq \alpha /m \}.
\]
These are in fact $\mathbb{D}$-adapted procedures, but for special choices of the graph $\mathbb{D}$: $\BH(\alpha)$ is adapted (only) for the \emph{empty graph} where $N_i = \{i\}$ for all $i$, and $\Bonf(\alpha)$ is adapted to any graph, but in particular the \emph{complete graph} where $N_i = \{1, \dots, m\}$ for all $i$. Indeed, each of the procedures below will reduce to BH when $\mathbb{D}$ is the empty graph, and to Bonferroni when $\mathbb{D}$ is the complete graph.

Between these two extremes, we will introduce several graph-adapted procedures 
for a general graph $\mathbb{D}$.
A first idea would be 
to recall, from the theory of multiple testing, how we can express BH via the 
local thresholds
\[
    \cR^\BH_\alpha(p) = \{i : p_i \leq \alpha |\cR^\BH_\alpha(p^{i \gets 0})|/ m\}.
\]
where $p^{i \gets 0}$ is equal to $p$ except with $p_i$ replaced by 0. Then, given a dependency graph $\mathbb{D}$, we could adjust the BH procedure's local threshold for $p_i$ by masking the $p$-values in its neighborhood, leading to the method:
\[
    \cR^{\textup{naiv}_\mathbb{D}}_\alpha(p) = \{i : p_i \leq \alpha |\cR^\BH_\alpha(\bone^{\Nio} p^{i \gets 0})|/ m\}.
\]
In calculating $p_i$'s threshold, recall that the operator $\bone^{\Nio}$ masks all the $p$-values in $p_i$'s neighborhood $N_i$, except for $p_i$ itself, which makes the local threshold independent of $p_i$---a property that BH has under full independence.

This procedure interpolates between $\BH$ and Bonferroni, as we wished, but unfortunately, it is not self-consistent according to \ref{item:SC}. Even worse, it does not control FDR, as the following proposition shows (see Appendix \ref{app:naive} for the proof and further intuition).
\begin{restatable}{proposition}{naivefails}\label{prop:naivefails}
    There exists a distribution $P$ on $m = 3$ 
    $p$-values with dependency graph $\mathbb{D}$ such that $\FDR_P(\cR^{\textup{naiv}_\mathbb{D}}) > \alpha$. 
\end{restatable}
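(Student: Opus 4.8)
The plan is to exhibit an explicit three-point law on the simplest nontrivial graph: take $\mathbb{D}$ with node $1$ isolated and a single edge joining $2$ and $3$, so $N_1=\{1\}$ and $N_2=N_3=\{2,3\}$. I would make all three hypotheses \emph{null}, so that every rejection is false; then $\FDP\equiv 1$ on any nonempty rejection set and $\FDR_P(\cR^{\textup{naiv}_\mathbb{D}})$ is just the probability of making at least one rejection. Spelling out the local thresholds, node $1$ (isolated, nothing to mask) rejects iff $p_1\le \alpha\beta_1/3$ with $\beta_1=|\cR^\BH_\alpha(0,p_2,p_3)|$, while node $2$ masks $p_3$ and rejects iff $p_2\le\alpha\beta_2/3$ with $\beta_2=|\cR^\BH_\alpha(p_1,0,1)|$, i.e.\ at threshold $2\alpha/3$ when the unmasked non-neighbor satisfies $p_1\le 2\alpha/3$ and at $\alpha/3$ otherwise (symmetrically for node $3$). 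The lever is the failure of self-consistency \ref{item:SC}: I want an event on which $\beta_1=3$, so node $1$'s threshold is the maximal value $\alpha$, while nodes $2$ and $3$ are simultaneously \emph{not} rejected, so node $1$ is rejected alone with a grossly inflated threshold and $|\cR^{\textup{naiv}_\mathbb{D}}|=1\ll\beta_1$.

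This over-rejection is produced by a ``both-moderate'' configuration $p_2,p_3\in(2\alpha/3,\alpha]$: then $\BH$ applied to $(0,p_2,p_3)$ rejects all three coordinates, so $\beta_1=3$ and node $1$ rejects whenever $p_1\le\alpha$, yet $p_2,p_3>2\alpha/3$ exceed the $2\alpha/3$ thresholds of nodes $2,3$, so neither of those is rejected. Concretely I would take $p_1\sim\mathrm{Unif}[0,1]$ independent of $(p_2,p_3)$, and give $(p_2,p_3)$ atoms of mass $\alpha/3$ each at $(\tfrac\alpha3,1)$, $(1,\tfrac\alpha3)$, $(\tfrac{2\alpha}3,1)$, $(1,\tfrac{2\alpha}3)$, $(\alpha,\alpha)$, with the remaining mass $1-\tfrac{5\alpha}{3}$ at $(1,1)$. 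The first four atoms are negatively dependent ``one-small'' events that reliably force a rejection of node $2$ or $3$ (and sometimes node $1$), and it is their negative dependence that lets rejections accumulate; the fifth atom is the positively dependent ``both-moderate'' event that triggers the lone inflated rejection of node $1$.

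Carrying out the proof then amounts to conditioning on the $(p_2,p_3)$ atom, reading off the (now deterministic) $\BH$ counts on each masked vector, and integrating the resulting rejection probability over $p_1$. For instance, on $(\tfrac\alpha3,1)$ node $2$ is rejected for every $p_1$; on $(\tfrac{2\alpha}3,1)$ a rejection occurs iff $p_1\le 2\alpha/3$; on $(\alpha,\alpha)$ a (lone) rejection of node $1$ occurs iff $p_1\le\alpha$. Summing the six contributions gives $\FDR=\alpha+\tfrac29\alpha^2>\alpha$. The atom masses are chosen precisely so that each marginal meets $\PP(p_j\le\tfrac\alpha3)=\tfrac\alpha3$, $\PP(p_j\le\tfrac{2\alpha}3)=\tfrac{2\alpha}3$, $\PP(p_j\le\alpha)=\alpha$ with equality, which certifies super-uniformity of the nulls and legality of $\mathbb{D}$ as a dependency graph (since $p_1\indep(p_2,p_3)$).

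The hard part is the super-uniformity budget. Because each null obeys $\PP(p_i\le\alpha)\le\alpha$ and every local threshold is at most $\alpha$, any single over-rejected null contributes at most $\alpha$; indeed the both-moderate mechanism in isolation yields only $\tfrac23\alpha^2+\tfrac\alpha3<\alpha$, and a purely negatively dependent adjacent pair yields only about $\alpha-\tfrac29\alpha^2<\alpha$. Thus neither mechanism alone suffices, and the delicate point is to \emph{superpose} them inside one law—the lone inflated rejection of the isolated node and the accumulated rejections of the adjacent pair—so that the second-order terms add rather than cancel. The care required is exactly in not overspending the three marginal budgets while doing so, and in the bookkeeping of the $\BH$ rejection counts on each of the masked vectors $(0,p_2,p_3)$, $(p_1,0,1)$, $(p_1,1,0)$.
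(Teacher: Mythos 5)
Your proposal is correct, and at its core it is the same counterexample strategy as the paper's: the same graph (isolated node $1$, one edge $\{2,3\}$), the global null so that the FDR is just the probability of making any rejection, and a joint law for $(p_2,p_3)$ that superposes negatively dependent ``one-small'' events (which accumulate rejections of nodes $2$ and $3$) with a ``both-moderate'' event on which node $1$ alone is rejected at the inflated threshold $\alpha$, exposing the failure of self-consistency. Where you differ is only in the instantiation: the paper recycles its Appendix \ref{app:bydlower} construction (exactly uniform null marginals, $p$-values drawn uniformly from interval slices) and reads the FDR off a case table (Table \ref{tab:NBcounterex}), obtaining $\alpha\bigl(1 + \tfrac{2\alpha^2(1-\alpha)}{3(3-2\alpha)^2}\bigr)$, whereas you hand-craft a purely atomic law with super-uniform (not exactly uniform) null marginals and integrate over $p_1$ directly. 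Your arithmetic checks out: the local thresholds you derive are right ($\beta_1=|\cR^\BH_\alpha(0,p_2,p_3)|$; node $2$'s threshold is $2\alpha/3$ iff $p_1\le 2\alpha/3$, else $\alpha/3$), the six conditional rejection probabilities are $1,\,1,\,\tfrac{2\alpha}{3},\,\tfrac{2\alpha}{3},\,\alpha,\,\tfrac{\alpha}{3}$, and summing against the masses gives $\alpha+\tfrac{2}{9}\alpha^2>\alpha$ --- an inflation that is in fact of larger order ($\alpha^2$) than the paper's ($\alpha^3$). The only caveat worth recording is that your atom masses require $1-\tfrac{5\alpha}{3}\ge 0$, i.e.\ $\alpha\le 3/5$, so your counterexample covers only that range of levels (which is all one needs to refute FDR control), while the paper's construction is valid for all $\alpha\le 1$.
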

This does not mean that our sufficient conditions are necessary, and in fact they are not (see Appendix \ref{app:notnecessary}). However, it does 
demonstrate that simply modifying BH may not lead to FDR control---new ideas will be necessary. For us, this will be the graph-theoretic notion of independent sets. 

We now move to our first procedure satisfying the sufficient conditions of Section \ref{sec:fdrctrl}.

\subsection{IndBH}\label{sec:indbh}
Recall that an independent set in $\mathbb{D}$ is a subset $I \subset [m]$ of nodes, where no two distinct nodes are connected. Let $\Ind(\mathbb{D})$ be the collection of independent sets in $\mathbb{D}$. 

The Independent Set BH procedure (at level $\alpha$, for graph $\mathbb{D}$), or $\IndBH_\mathbb{D}(\alpha)$, simply re-runs BH on a masked $p$-value vector for every independent set in $\mathbb{D}$ and unions the result. Formally, it is defined by
\begin{equation}\label{eq:indbhdef}
    \cR_\alpha^{\IndBH_\mathbb{D}}(p) := \bigcup_{I \in \Ind(\mathbb{D})} \mathcal{R}_\alpha^\BH(\bone^{I^\mathsf{c}} p),
\end{equation}
and we suppress $\mathbb{D}$ in the notation if it is clear from context. This controls FDR, as we show next. 
\begin{proposition}\label{prop:IndBH1}
    The $\IndBH_\mathbb{D}(\alpha)$ procedure is $\mathbb{D}$-adapted, and hence controls FDR at level $\alpha$ when $\mathbb{D}$ is a dependency graph for $p$. 
\end{proposition}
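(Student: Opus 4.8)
The plan is to reduce the claim to the three defining properties of a $\mathbb{D}$-adapted procedure: once self-consistency \ref{item:SC}, monotonicity \ref{item:Mon}, and neighbor-blindness \ref{item:NB} are verified, the FDR bound $\alpha|\cH_0|/m \le \alpha$ is immediate from Theorem \ref{thm:control}. Writing $\cR_\alpha := \cR^{\IndBH_\mathbb{D}}_\alpha$ for brevity, I would lean throughout on two standard facts about BH applied to the masked inputs $\bone^{I^\mathsf{c}} p$: that BH is self-consistent (it rejects $i$ only if the input's $i$th coordinate is at most $\alpha|\cR^\BH_\alpha|/m$) and monotone ($p' \preceq p \Rightarrow \cR^\BH_\alpha(p) \subseteq \cR^\BH_\alpha(p')$), together with the fact that masking is order-preserving, i.e. $p' \preceq p$ implies $\bone^{I^\mathsf{c}} p' \preceq \bone^{I^\mathsf{c}} p$. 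A recurring but trivial edge case is $\alpha = 1$, where BH and hence $\cR_\alpha$ reject everything and all three properties hold vacuously; so I would assume $\alpha < 1$, in which case a coordinate masked to $1$ can never be rejected by BH, since its threshold $\alpha|\cR^\BH_\alpha|/m \le \alpha < 1$. This last observation is what lets me assert that $i \in \cR^\BH_\alpha(\bone^{I^\mathsf{c}} p)$ forces $i \in I$.

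For monotonicity, given $p' \preceq p$ I would push masking and BH-monotonicity through each term to get $\cR^\BH_\alpha(\bone^{I^\mathsf{c}} p) \subseteq \cR^\BH_\alpha(\bone^{I^\mathsf{c}} p')$ for every $I$, and then take the union over $I \in \Ind(\mathbb{D})$ to conclude $\cR_\alpha(p) \subseteq \cR_\alpha(p')$. Self-consistency is analogous: if $i \in \cR_\alpha(p)$, then $i \in \cR^\BH_\alpha(\bone^{I^\mathsf{c}} p)$ for some $I \ni i$, so BH's self-consistency gives $p_i = (\bone^{I^\mathsf{c}} p)_i \le \alpha|\cR^\BH_\alpha(\bone^{I^\mathsf{c}} p)|/m$; since that rejection set is one of the sets in the defining union, its cardinality is at most $|\cR_\alpha(p)|$, and \ref{item:SC} follows.

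The crux is neighbor-blindness, and this is where the independent-set structure does the real work. The key identity I would isolate is that whenever $i \in I$ and $I$ is independent, no other member of $I$ neighbors $i$, so $\Nio \cap I = \emptyset$, i.e. $\Nio \subseteq I^\mathsf{c}$; consequently $\bone^{I^\mathsf{c}}(\oneNio p) = \bone^{I^\mathsf{c}} p$, because $\bone^{I^\mathsf{c}}$ already sets every coordinate of $\Nio$ to $1$. To prove $i \in \cR_\alpha(p) \Leftrightarrow i \in \cR_\alpha(\oneNio p)$ I would treat the two directions separately. The direction $i \in \cR_\alpha(\oneNio p) \Rightarrow i \in \cR_\alpha(p)$ is free from monotonicity, since $\oneNio p \succeq p$. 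For the reverse, if $i \in \cR_\alpha(p)$ then a witnessing independent set satisfies $i \in \cR^\BH_\alpha(\bone^{I^\mathsf{c}} p)$ with $i \in I$ (using $\alpha<1$), and the masking identity rewrites this as $i \in \cR^\BH_\alpha(\bone^{I^\mathsf{c}} \oneNio p)$, whence $i \in \cR_\alpha(\oneNio p)$.

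I expect the only real subtlety to be the bookkeeping around which independent sets can witness a rejection of $i$---namely, that only sets containing $i$ are relevant---and the attendant $\alpha = 1$ degeneracy; everything else is a routine transfer of BH's self-consistency and monotonicity through the masking operator $\bone^{I^\mathsf{c}}$. Having verified \ref{item:SC}--\ref{item:NB}, I would close by invoking Theorem \ref{thm:control} to obtain $\FDR(\cR_\alpha) \le \alpha|\cH_0|/m \le \alpha$ whenever $\mathbb{D}$ is a dependency graph for $p$.
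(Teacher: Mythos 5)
Your proof is correct and follows essentially the same route as the paper's: the paper factors the argument into the observation that each masked procedure $\cR^\BH_\alpha(\bone^{I^\mathsf{c}}\,\cdot\,)$ is $\mathbb{D}$-adapted plus a closure lemma that unions of $\mathbb{D}$-adapted procedures are $\mathbb{D}$-adapted, while you inline both steps and verify \ref{item:SC}--\ref{item:NB} directly for the union. Your explicit treatment of the masking identity $\bone^{I^\mathsf{c}}\oneNio p = \bone^{I^\mathsf{c}} p$ for $i \in I$ independent, and of the $\alpha = 1$ degeneracy needed to conclude $i \in I$ from $i \in \cR^\BH_\alpha(\bone^{I^\mathsf{c}} p)$, supplies exactly the details the paper compresses into ``first observe'' and ``can be seen directly.''
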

\begin{proof}
First observe that, for any independent set $I \subseteq \mathbb{D}$,  the procedure $\mathcal{R}_\alpha^\BH(\bone^{I^\mathsf{c}} p)$ is $\mathbb{D}$-adapted. Thus, it suffices to show that a union of $\mathbb{D}$-adapted procedures is $\mathbb{D}$-adapted. To this
    end, let $\cR^{(1)}_\alpha$ and $\cR^{(2)}_\alpha$ be $\mathbb{D}$-adapted, and let $\cR^\textup{un}_\alpha(p) = \cR^{(1)}_\alpha(p) \cup  \cR^{(2)}_\alpha(p)$. We will show $\cR^\textup{un}_\alpha$ is $\mathbb{D}$-adapted.
    
    Monotonicity and neighborblindness of $\cR^\textup{un}_\alpha(p)$ can be seen directly, while self-consistency holds because 
    \[
    i \in \cR^\textup{un}_\alpha(p) \Rightarrow p_i \leq \frac{\alpha}{m} \max\{|\cR_\alpha^{(1)}(p)|, |\cR_\alpha^{(2)}(p)|\} \Rightarrow p_i \leq \frac{\alpha}{m} |\cR^\textup{un}_\alpha(p)|. 
    \]
    where the first implication is by self-consistency of both $\cR_\alpha^{(1)}$ and $\cR_\alpha^{(2)}$. Hence $\cR^\textup{un}_\alpha$ is $\mathbb{D}$-adapted. 
\end{proof}
Note that the collection
of all independent sets $\IndBH(\mathbb{D})$ can be replaced with the collection of maximal independent
sets only without changing the procedure. Computing maximal independent sets 
is NP-hard, but as mentioned in the Introduction, we can ignore all $p$-values not in the BH rejection set, so on real data, this problem is typically tractable, with further computational tricks described in Section \ref{sec:computation}.

Let us now understand this method intuitively. When the graph $\mathbb{D}$ is sparse, 
the (maximal) independent sets can be large, making this procedure
powerful. When $\mathbb{D}$ is completely sparse, i.e. the empty graph, then $[m]$ is the only maximal independent set, recovering $\BH(\alpha)$, and when $\mathbb{D}$ is completely dense, i.e. the complete graph, the only independent sets are the singletons $(\{i\})_{i = 1}^m$, recovering $\Bonf(\alpha)$. 

When working by hand, a slightly different way to 
think about the rejection set can be useful.
Let us say a non-empty independent set $C$ is a 
\emph{certificate set} for IndBH if $p_j \leq \alpha |C|/m$ for all $j \in C$. This 
holds if and only if $C \subset \mathcal{R}_\alpha^\BH(\bone^{I^\mathsf{c}} p)$ for some $I \in \Ind(\mathbb{D})$, 
so that equivalently, $\IndBH(\alpha)$ rejects $H_i$ whenever there exists 
a certificate set $C$ containing $i$. For small graphs, we can go node by node
and look for such a $C$ to compute rejections. 

As an example, consider the $p$-values and dependency graph in Figure \ref{fig:indbh-certificate} with $m = 5$ p-values. When running $\IndBH(\alpha)$ with $\alpha = 0.05$, there are three certificate sets, namely $\{1, 4\}$, $\{2, 4\}$, and $\{3\}$. These ``certify'' the rejection of $H_1, \dots, H_4$, but $H_5$ remains unrejected (even though it would be rejected by $\BH(\alpha)$). If $p_3$ were increased beyond $0.01$, $H_3$ would
not be rejected either, as $\{3\}$ would no longer be a certificate set.

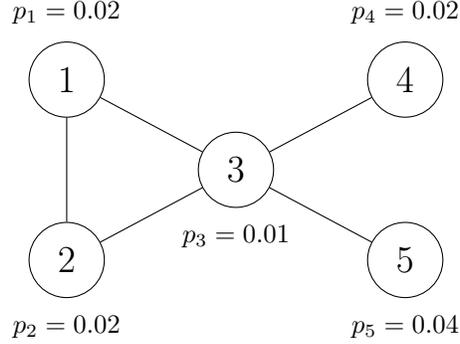
\begin{figure}[ht]
    \centering

    \begin{minipage}[t]{0.96\textwidth}
    \centering
    \begin{tikzpicture}[scale=1.5, every node/.style={circle, draw, minimum size=10mm, font=\Large}]
      \node (1) at (1.5,0.8) {1};
      \node (2) at (1.5,-0.8) {2};
      \node (3) at (3,0) {3};
      \node (4) at (4.5,0.8) {4};
      \node (5) at (4.5,-0.8) {5};

      \draw (3) -- (1) -- (2) -- (3);
      \draw (3) -- (4);
      \draw (3) -- (5);

      \node[above=-10pt of 1, draw=none, inner sep=0pt, font=\normalsize] {$p_1 = 0.02$};
      \node[below=-10pt of 2, draw=none, inner sep=0pt, font=\normalsize] {$p_2 = 0.02$};
      \node[below=-10pt of 3, draw=none, inner sep=0pt, font=\normalsize] {$p_3 = 0.01$};
      \node[above=-10pt of 4, draw=none, inner sep=0pt, font=\normalsize] {$p_4 = 0.02$};
      \node[below=-10pt of 5, draw=none, inner sep=0pt, font=\normalsize] {$p_5 = 0.04$};  
    \end{tikzpicture}
    \end{minipage}
    \caption{An example realization of $m = 5$ $p$-values on the graph from Figure \ref{fig:notation}. If $\IndBH(0.05)$ were run with this graph and $p$-values, then it would reject $H_1, \dots H_4$, while $\BH(0.05)$ would reject everything.}
    \label{fig:indbh-certificate}
\end{figure}

Could there be some way to reject $H_5$?
Computing the certificate sets $\{1, 4\}$ and $\{2,4\}$ did not 
depend on any $p$-values from $p_5$'s neighborhood, so perhaps we can leverage
these sets together, rather than separately, to reject $H_5$ while still retaining neighbor-blindness. 
We discuss this next.

\subsection{Improving IndBH}\label{sec:improving}
We now describe ways to improve $\IndBH$ with additional computation. 
Most important is Section \ref{sec:indbhk}, which describes an approach
we can run in practice. Section \ref{sec:stepup} characterizes the 
optimal $\mathbb{D}$-adapted procedure, which we do not run in 
practice. All proofs are deferred to Section \ref{sec:gapchase}. 

\subsubsection{$\text{IndBH}^{(k)}$}\label{sec:indbhk}
For $k \geq 1$, we define the $\text{IndBH}^{(k)}(\alpha)$ procedure by the rejection set
\begin{equation}\label{eq:indbhkdef}
     \cR_\alpha^{\IndBH^{(1)}}(p) := \cR_\alpha^{\IndBH}(p),\  
    \cR_\alpha^{\IndBH^{(k+1)}}(p) := \left\{i : p_i \leq \frac{\alpha |\{i\} \cup  \cR^{\IndBH^{(k)}}_\alpha(\bone^{\Nio} p)|}{m} \right\}.
\end{equation}
 $\IndBH^{(k+1)}$ amounts to re-running $\IndBH^{(k)}$ on a masked $p$-value vector (to enforce neighborblindness), and plugging the results into to a new threshold for $p_i$. It controls FDR, and always improves over $\IndBH^{(k)}$ by the following proposition.
\begin{proposition}\label{prop:IndBHk}
    The $\IndBH^{(k)}_\mathbb{D}(\alpha)$ procedure is $\mathbb{D}$-adapted for $k \geq 2$, and hence controls FDR at level $\alpha$ when $\mathbb{D}$ is a dependency graph for $p$. Also, we have 
    $\cR_\alpha^{\IndBH^{(k-1)}}(p) \subset \cR_\alpha^{\IndBH^{(k)}}(p)$. 
\end{proposition}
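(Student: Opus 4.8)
The plan is to prove both assertions by induction on $k$, deriving the nesting relation $\cR^{(k-1)}_\alpha(p)\subseteq\cR^{(k)}_\alpha(p)$ first, since it is needed to verify self-consistency at the next level; FDR control will then follow by feeding the verified procedure into Theorem~\ref{thm:control}. Writing $\cR^{(k)}:=\cR_\alpha^{\IndBH^{(k)}}$ for brevity, recall that $i\in\cR^{(k+1)}(p)$ exactly when $p_i\le\alpha\,|\{i\}\cup\cR^{(k)}(\oneNio p)|/m$, and that $\IndBH^{(1)}=\IndBH$ is already $\mathbb{D}$-adapted by Proposition~\ref{prop:IndBH1}.

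For the nesting claim I would induct on $k$. The base case $k=2$ uses only the properties of $\IndBH^{(1)}$: if $i\in\cR^{(1)}(p)$, then \ref{item:NB} gives $i\in\cR^{(1)}(\oneNio p)$, and \ref{item:SC} applied at the vector $\oneNio p$—using $(\oneNio p)_i=p_i$ because $i\notin\Nio$—yields $p_i\le\alpha\,|\cR^{(1)}(\oneNio p)|/m\le\alpha\,|\{i\}\cup\cR^{(1)}(\oneNio p)|/m$, i.e.\ $i\in\cR^{(2)}(p)$. For the inductive step I would apply the hypothesis $\cR^{(k-1)}(q)\subseteq\cR^{(k)}(q)$ at $q=\oneNio p$; this can only enlarge the masked count in the threshold, so every $i\in\cR^{(k)}(p)$ also meets the (weakly larger) threshold defining $\cR^{(k+1)}(p)$.

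With nesting established, I would verify that $\IndBH^{(k+1)}$ inherits $\mathbb{D}$-adaptedness from $\IndBH^{(k)}$ by checking the three conditions. Neighbor-blindness \ref{item:NB} is structural: the rejection condition for $i$ depends on $p$ only through $p_i$ and $\oneNio p$, and replacing $p$ by $\oneNio p$ changes neither ($i\notin\Nio$, and $\oneNio$ is idempotent), so $i\in\cR^{(k+1)}(p)\iff i\in\cR^{(k+1)}(\oneNio p)$. Monotonicity \ref{item:Mon} follows because $p'\preceq p$ gives $\oneNio p'\preceq\oneNio p$, whence \ref{item:Mon} for $\IndBH^{(k)}$ yields $\cR^{(k)}(\oneNio p)\subseteq\cR^{(k)}(\oneNio p')$; together with $p'_i\le p_i$ the threshold only grows, so $\cR^{(k+1)}(p)\subseteq\cR^{(k+1)}(p')$.

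The self-consistency step \ref{item:SC} is the crux and the main obstacle, precisely because the threshold is defined through the masked, one-level-lower set $\cR^{(k)}(\oneNio p)$ rather than through $\cR^{(k+1)}(p)$ itself. Suppose $i\in\cR^{(k+1)}(p)$. I would show $\{i\}\cup\cR^{(k)}(\oneNio p)\subseteq\cR^{(k+1)}(p)$ via the chain $\cR^{(k)}(\oneNio p)\subseteq\cR^{(k)}(p)\subseteq\cR^{(k+1)}(p)$, where the first inclusion is \ref{item:Mon} for $\IndBH^{(k)}$ (since $p\preceq\oneNio p$) and the second is the nesting claim, while $i\in\cR^{(k+1)}(p)$ holds by assumption. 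Passing to cardinalities upgrades the defining inequality to $p_i\le\alpha\,|\cR^{(k+1)}(p)|/m$, which is \ref{item:SC}. This is exactly why nesting must be proved first: verifying self-consistency requires simultaneously undoing the mask (monotonicity) and climbing one level (nesting). Having confirmed all three conditions, FDR control at level $\alpha$ follows from Theorem~\ref{thm:control}.
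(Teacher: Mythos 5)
Your proof is correct and takes essentially the same approach as the paper, which obtains this proposition as the special case $\cR^{(1)}_\alpha = \cR^{\IndBH}_\alpha$ of its gap-chasing theorem (Theorem \ref{thm:gapchase}\ref{item:gapchaseD}): there too, the inclusion $\cR^{(k)}_\alpha \subseteq \cR^{(k+1)}_\alpha$ is derived from neighbor-blindness plus self-consistency of the previous iterate, and self-consistency of the new iterate from that inclusion together with monotonicity, with FDR control then following from Theorem \ref{thm:control}. The only cosmetic difference is in the self-consistency step, where you unmask via monotonicity ($\cR^{(k)}_\alpha(\oneNio p) \subseteq \cR^{(k)}_\alpha(p)$) and then apply nesting at $p$, whereas the paper applies nesting at the masked vector and then uses neighbor-blindness and monotonicity of the new iterate.
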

\begin{proof}
    A direct consequence of Theorem \ref{thm:gapchase} from Section \ref{sec:gapchase}. 
\end{proof}
At the bottom of the recursion, $\IndBH^{(k)}$ could make exponentially many calls to $\IndBH$ as $k$ increases, in addition to the complexity already inherent in $\IndBH$ as $m$ increases. This limits our ability to run $\IndBH^{(k)}$ for large $k$. In our simulations, we found  
that running $\IndBH^{(3)}$ targets a nice tradeoff between power and computational effort.

In Figure \ref{fig:indbh2-certificate}, we update Figure \ref{fig:indbh-certificate} to 
show how $H_5$ can be rejected by $\IndBH^{(2)}$, despite being not rejected 
by $\IndBH$. 

\begin{figure}[ht]
    \centering

    \begin{minipage}[t]{0.96\textwidth}
    \centering
    \begin{tikzpicture}[scale=1.5, every node/.style={circle, draw, minimum size=10mm, font=\Large}]
      \node (1) at (1.5,0.8) {1};
      \node (2) at (1.5,-0.8) {2};
      \node[fill=gray!30] (3) at (3,0) {3};
      \node (4) at (4.5,0.8) {4};
      \node (5) at (4.5,-0.8) {5};

      \draw (3) -- (1) -- (2) -- (3);
      \draw (3) -- (4);
      \draw (3) -- (5);
      \draw[thick, dashed] ($(5)$) circle[radius=0.44cm];

      \node[above=-10pt of 1, draw=none, inner sep=0pt, font=\normalsize] {$p_1 = 0.02$};
      \node[below=-10pt of 2, draw=none, inner sep=0pt, font=\normalsize] {$p_2 = 0.02$};
      \node[below=-10pt of 3, draw=none, inner sep=0pt, font=\normalsize] {$p_3 = 0.01$};
      \node[below=+2pt of 3, draw=none, inner sep=0pt, font=\normalsize] {(masked)};      
      \node[above=-10pt of 4, draw=none, inner sep=0pt, font=\normalsize] {$p_4 = 0.02$};
      \node[below=-10pt of 5, draw=none, inner sep=0pt, font=\normalsize] {$p_5 = 0.04$};  
    \end{tikzpicture}
    \end{minipage}
    \caption{The same graph and $p$-values from Figure \ref{fig:indbh-certificate}, but circling $i = 5$, with its punctured neighborhood $N_5^\circ = \{3\}$ shaded.
    Even after masking $p_3$ (i.e. replacing it with $1$), we still have the certificate sets $\{1,4\}$ and $\{2,4\}$, so $\cR^\IndBH_\alpha(\bone^{N_5^\circ}p) = \{1,2,4\}$  at $\alpha = 0.05$. As a result, the iterated procedure $\IndBH^{(2)}(\alpha)$ 
    rejects $H_5$, because $p_5 \leq \alpha|\{1,2,4,5\}|/m = 0.04$.} 
    \label{fig:indbh2-certificate}
\end{figure}
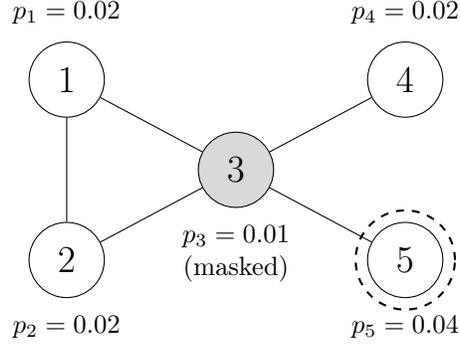

\subsubsection{The optimal procedure}\label{sec:stepup}

It turns out that $\IndBH^{(k)}$ is \emph{not} the optimal $\mathbb{D}$-adapted procedure for 
any $k$, even when taking $k \to \infty$, as we show in Appendix \ref{app:indbh-notoptimal}. (Here, ``optimal'' means ``most liberal''). Instead, consider the iterates
  \begin{equation}\label{eq:SUdef}
    \cR^{(1)}_\alpha = \cR^\BH_\alpha, \quad 
    \cR^{(k+1)}_\alpha(p) := \left\{i: p_i 
    \leq \frac{\alpha |\{i\} \cup \cR^{(k)}_\alpha(\bone^{\Nio} p)|}{m}  \right\}. 
  \end{equation}
  This is the \emph{same iteration} used for $\IndBH^{(k)}$, just with a different starting point. Now, define the \emph{$\mathbb{D}$-adapted step-up procedure} $\SU_\mathbb{D}(\alpha)$ to be the fixed point of this iteration, that is,
\[
    \cR^{\SU_\mathbb{D}}_\alpha(p) := \cR^{(k^*)}_\alpha(p), \quad \text{whenever $k^*$ satisfies $\cR^{(k^*)}_\alpha(p) = \cR^{(k^*+1)}_\alpha(p)$.}
\] 
We do not use the name ``step-up'' to imply ``stepping up'' from the least significant $p$-value to the most significant as, for example, \citet{romanoStepupProceduresControl2006} use the term. Our intent is to align with how \citet{blanchardTwoSimpleSufficient2008b} use it: they mean the most liberal of all self-consistent procedures, and we mean the most liberal of all $\mathbb{D}$-adapted procedures.

The next 
proposition, whose proof we again defer to Section \ref{sec:gapchase}, 
says that such a $k^*$ always exists and that $\SU_\mathbb{D}$ is optimal. 

\begin{proposition}\label{prop:SUopt}
    There exists some $k^* \geq 1$ such that $\cR^{(k^*)}_\alpha(p) = \cR^{(k^*+1)}_\alpha(p)$. The $\SU_\mathbb{D}(\alpha)$ procedure is then $\mathbb{D}$-adapted and optimal in the sense that $\cR^{\SU_\mathbb{D}}_\alpha(p) \supset \cR_\alpha^{\mathbb{D}}(p)$ where $\cR_\alpha^{\mathbb{D}}(p)$ is any other $\mathbb{D}$-adapted procedure. 
\end{proposition}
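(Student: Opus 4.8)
The plan is to read the common iteration \eqref{eq:SUdef} as repeated application of a single operator $T$ on procedures, defined by $(T\cR)_\alpha(p) := \{i : p_i \le \alpha|\{i\} \cup \cR_\alpha(\bone^{\Nio} p)|/m\}$, so that $\cR^{(k+1)}_\alpha = (T\cR^{(k)})_\alpha$ with $\cR^{(1)}_\alpha = \cR^\BH_\alpha$. The first thing to record is that $T$ is \emph{monotone}: if $\cR_\alpha(q) \subseteq \cR'_\alpha(q)$ for every input $q$, then enlarging the rejection set only enlarges each local threshold, so $(T\cR)_\alpha(q) \subseteq (T\cR')_\alpha(q)$. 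I would then prove the proposition in three stages: (i) the iterates decrease to a fixed point, so $k^*$ exists and $\SU_\mathbb{D}$ is well defined; (ii) the fixed point satisfies \ref{item:SC}--\ref{item:NB}; and (iii) it dominates every $\mathbb{D}$-adapted procedure.

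For (i) the crucial lemma is that the very first step already decreases the procedure, i.e. $\cR^{(2)}_\alpha(p) \subseteq \cR^{(1)}_\alpha(p) = \cR^\BH_\alpha(p)$ for all $p$, which I would prove by direct counting. Fix $i \in \cR^{(2)}_\alpha(p)$, write $q = \bone^{\Nio} p$ (so $q \succeq p$ and $q_i = p_i$), and set $B = \cR^\BH_\alpha(q)$ with $s = |B|$; then $p_i \le \alpha|\{i\} \cup B|/m$ and every $j \in B$ satisfies $p_j \le q_j \le \alpha s/m$ because masking only raises $p$-values. If $i \in B$ then $|\{i\}\cup B| = s$ and the $s$ indices of $B$ force $p_{(s)} \le \alpha s/m$; if $i \notin B$ then $|\{i\}\cup B| = s+1$ and those $s$ indices together with $i$ force $p_{(s+1)} \le \alpha(s+1)/m$. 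Either way $r^* \ge |\{i\}\cup B|$ for $p$, so $p_i \le \alpha r^*/m$ and $i \in \cR^\BH_\alpha(p)$. Applying the monotone $T$ repeatedly yields a nonincreasing chain $\cR^{(k+1)}_\alpha(p) \subseteq \cR^{(k)}_\alpha(p)$; since each set lies in the finite $[m]$, the cardinalities stabilize after at most $m$ steps and, being nested, the sets are then equal. Hence $\cR^{(k)}_\alpha(q) = \cR^{(m+1)}_\alpha(q)$ for all $k \ge m+1$ and \emph{every} input $q$, which supplies a uniform $k^* = m+1$, establishes that the required $k^*$ exists, and makes $\cR^{\SU_\mathbb{D}}_\alpha := \cR^{(m+1)}_\alpha$ a genuine fixed point of $T$:
\begin{equation*}
  i \in \cR^{\SU_\mathbb{D}}_\alpha(p) \iff p_i \le \frac{\alpha |\{i\} \cup \cR^{\SU_\mathbb{D}}_\alpha(\bone^{\Nio} p)|}{m}. \tag{$\star$}
\end{equation*}

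Given $(\star)$, stage (ii) is quick. Neighbor-blindness \ref{item:NB} holds because masking is idempotent, $\bone^{\Nio}(\bone^{\Nio} p) = \bone^{\Nio} p$, and $(\bone^{\Nio} p)_i = p_i$, so applying $(\star)$ at $p$ and at $\bone^{\Nio} p$ gives the identical condition on $p_i$. Monotonicity \ref{item:Mon} I would get by inducting on the iterates: $\cR^{(1)} = \cR^\BH$ is monotone, and if $\cR^{(k)}$ is monotone then for $p' \preceq p$ the relations $\bone^{\Nio}p' \preceq \bone^{\Nio} p$ and the inductive hypothesis only enlarge the threshold defining $\cR^{(k+1)}$, so $\cR^{(k+1)}$, and hence $\cR^{\SU_\mathbb{D}} = \cR^{(m+1)}$, is monotone. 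Self-consistency \ref{item:SC} then follows: if $i \in \cR^{\SU_\mathbb{D}}_\alpha(p)$, monotonicity applied to $\bone^{\Nio} p \succeq p$ gives $\cR^{\SU_\mathbb{D}}_\alpha(\bone^{\Nio}p) \subseteq \cR^{\SU_\mathbb{D}}_\alpha(p)$, so $\{i\} \cup \cR^{\SU_\mathbb{D}}_\alpha(\bone^{\Nio}p) \subseteq \cR^{\SU_\mathbb{D}}_\alpha(p)$ and the right side of $(\star)$ is at most $\alpha|\cR^{\SU_\mathbb{D}}_\alpha(p)|/m$.

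For optimality (iii), let $\cR^{\mathbb{D}}_\alpha$ be any $\mathbb{D}$-adapted procedure; I would show $\cR^{\mathbb{D}}_\alpha(p) \subseteq \cR^{(k)}_\alpha(p)$ for all $k$ and all $p$ by induction, then take $k = m+1$. The base case is the standard fact (implicit in \citet{blanchardTwoSimpleSufficient2008b}) that $\cR^\BH$ is the most liberal self-consistent procedure: \ref{item:SC} forces $\cR^{\mathbb{D}}_\alpha(p) \subseteq \{i : p_i \le \alpha|\cR^{\mathbb{D}}_\alpha(p)|/m\}$, which bounds $|\cR^{\mathbb{D}}_\alpha(p)| \le r^*$ and hence $\cR^{\mathbb{D}}_\alpha(p) \subseteq \cR^\BH_\alpha(p) = \cR^{(1)}_\alpha(p)$. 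For the step, take $i \in \cR^{\mathbb{D}}_\alpha(p)$; by \ref{item:NB} also $i \in \cR^{\mathbb{D}}_\alpha(\bone^{\Nio}p)$, so \ref{item:SC} at $\bone^{\Nio}p$ gives $p_i \le \alpha|\cR^{\mathbb{D}}_\alpha(\bone^{\Nio}p)|/m = \alpha|\{i\}\cup\cR^{\mathbb{D}}_\alpha(\bone^{\Nio}p)|/m$, and the inductive hypothesis at input $\bone^{\Nio}p$ bounds this by $\alpha|\{i\}\cup\cR^{(k)}_\alpha(\bone^{\Nio}p)|/m$, which is exactly the threshold placing $i \in \cR^{(k+1)}_\alpha(p)$. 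I expect the main obstacle to be stage (i): proving $\cR^{(2)}_\alpha \subseteq \cR^{(1)}_\alpha$ and converting it into a well-defined, uniformly attained fixed point is where the real work lies, since na\"ive attempts to compare $\cR^\BH_\alpha(\bone^{\Nio}p)$ directly with $\cR^\BH_\alpha(p)$ fail (lowering masked coordinates can cascade extra BH rejections), and it is the counting argument above that rescues the claim.
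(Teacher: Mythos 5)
Your overall architecture coincides with the paper's: Proposition \ref{prop:SUopt} is proved there as a corollary of Theorem \ref{thm:gapchase}\ref{item:gapchaseBH}, whose proof has exactly your three ingredients --- the iterates started at $\cR^\BH_\alpha$ are nonincreasing, every $\mathbb{D}$-adapted procedure is contained in every iterate (by the same induction you give, applying \ref{item:SC} and \ref{item:NB} at the masked input and then the inductive hypothesis at $\bone^{\Nio}p$), and a fixed point is $\mathbb{D}$-adapted. Your counting argument for $\cR^{(2)}_\alpha(p) \subseteq \cR^\BH_\alpha(p)$ is correct and is a legitimate, self-contained replacement for the paper's terser appeal to ``monotonicity of BH'' (which implicitly uses the representation $i \in \cR^\BH_\alpha(p) \Leftrightarrow p_i \le \alpha |\cR^\BH_\alpha(p^{i \gets 0})|/m$), and your stage (iii) is the paper's first claim verbatim.

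The genuine gap is in stage (i), in the sentence ``the cardinalities stabilize after at most $m$ steps \dots hence $\cR^{(k)}_\alpha(q) = \cR^{(m+1)}_\alpha(q)$ for all $k \ge m+1$ and every input $q$.'' A nonincreasing integer sequence in $\{0,\dots,m\}$ has at most $m$ strict decreases, but nothing forces those decreases to occur within the first $m$ steps: the iteration at a fixed input $p$ is not autonomous, since $\cR^{(k+1)}_\alpha(p)$ is driven by $\cR^{(k)}_\alpha(\bone^{\Nio}p)$, so the chain at $p$ can pause (two consecutive iterates equal at $p$) while the chains at the masked inputs are still shrinking, and then resume decreasing later. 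So the uniform bound $k^* = m+1$ is unjustified, and this matters for your proof: the identification $\cR^{\SU_\mathbb{D}}_\alpha = \cR^{(m+1)}_\alpha$ and the fixed-point identity $(\star)$ --- on which your self-consistency argument in stage (ii) relies --- require an index lying beyond stabilization at $p$ \emph{and} at every $\bone^{\Nio}p$ simultaneously, which is precisely what the uniform claim was supposed to deliver. Two repairs are available. (a) The inputs reachable from $p$ by repeated masking form a finite family (masks are unions of punctured neighborhoods, so at most $2^m$ of them) that is closed under the iteration; the total cardinality $\sum_q |\cR^{(k)}_\alpha(q)|$ over this family is nonincreasing, and once it is equal at two consecutive steps the whole vector of rejection sets is a genuine fixed point thereafter, giving uniform stabilization by step $m2^m + 1$ --- any finite bound suffices for your argument. (b) Follow the paper: take any $k^*$ with $\cR^{(k^*)}_\alpha(p) = \cR^{(k^*+1)}_\alpha(p)$, which exists pointwise because a nested nonincreasing chain in a finite universe cannot decrease strictly forever; then self-consistency at $p$ needs only the identity $\cR^{(k^*)}_\alpha(p) = \{i : p_i \le \alpha|\{i\}\cup\cR^{(k^*)}_\alpha(\bone^{\Nio}p)|/m\}$ together with monotonicity of the single iterate $\cR^{(k^*)}_\alpha$, which gives $\{i\}\cup\cR^{(k^*)}_\alpha(\bone^{\Nio}p) \subseteq \cR^{(k^*)}_\alpha(p)$ whenever $i \in \cR^{(k^*)}_\alpha(p)$; no stabilization at the masked inputs is needed, and neighbor-blindness holds for every iterate $k \ge 2$ by construction.
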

\begin{proof}
    A direct consequence of Theorem \ref{thm:gapchase} from Section \ref{sec:gapchase}. 
\end{proof}

Letting $\cR^{(k)}$ be defined as in the iterations \eqref{eq:SUdef}, and letting 
$\cR^{\IndBH_\mathbb{D}^{(\infty)}}_\alpha(p)$ be the fixed point of the iterations \eqref{eq:indbhkdef}, 
we hence have the following relations between the procedures defined thus far:
\begin{gather*}
    \cR^{\IndBH_\mathbb{D}}_\alpha(p) \subseteq
    \cR^{\IndBH_\mathbb{D}^{(2)}}_\alpha(p) \subseteq 
    \dots \subseteq
    \cR^{\IndBH_\mathbb{D}^{(\infty)}}_\alpha(p) \subseteq 
    \cR^{\SU_\mathbb{D}}_\alpha(p), \text{ and} \\
    \cR^{\SU_\mathbb{D}}_\alpha(p) \subseteq 
    \dots \subseteq 
    \cR^{(2)}_\alpha(p) \subseteq 
    \cR^{(1)}_\alpha(p) = \cR^\BH_\alpha(p).
\end{gather*}
The first line of procedures are all FDR controlling under $\mathbb{D}$. Note that the inclusion 
$\cR^{\IndBH_\mathbb{D}^{(\infty)}}_\alpha(p) \subseteq 
\cR^{\SU_\mathbb{D}}_\alpha(p)$ can be strict.

We do not implement $\SU_\mathbb{D}$, as there can always be cases 
where $k^*$ is too large to be reasonably computed, and stopping the iteration early does not yield a $\mathbb{D}$-adapted procedure. 
But for any $k \geq 2$ one can always 
cap the number of iterations to be at most $k$ and still obtain FDR control from by randomly pruning the set $\cR^{(k)}(p)$, an idea which comes from \citet{fithianConditionalCalibrationFalse2020b}. In light of our empirical observation that the non-randomized procedure $\IndBH^{(3)}(\alpha)$ seems to perform nearly as well as $\BH(\alpha)$ in practical settings, we do not recommend  using a randomized procedure; we refer the reader to Appendix \ref{app:randomprune} for details on the randomized approach. 

\subsection{The gap chasing update}\label{sec:gapchase}

In this section, we do not propose new methods. Instead, we 
prove and discuss the results on the shared iteration in \eqref{eq:indbhkdef}
and \eqref{eq:SUdef}, which we call the ``gap chasing'' update. 

\begin{theorem}[Gap chasing]\label{thm:gapchase}
    For some initial monotone multiple testing procedure $\cR^{(1)}_\alpha$, consider the 
    iterates
    \begin{equation}\label{eq:gapchasingiterate}
        \cR^{(k+1)}_\alpha(p) := \left\{i: p_i 
        \leq \frac{\alpha |\{i\} \cup \cR^{(k)}_\alpha(\bone^{\Nio} p)|}{m}  \right\}. 
    \end{equation}
    for some dependency graph $\mathbb{D}$ and its neighborhoods $N_i$, and 
    let $\cR^{\mathbb{D}}_\alpha$ be any $\mathbb{D}$-adapted procedure.

    \begin{enumerate}[label=(\alph*)]
        \item \label{item:gapchaseD}   When $\cR^{(1)}_\alpha = \cR^{\mathbb{D}}_\alpha$ then for $k \geq 1$, 
        $\cR^{(k)}_\alpha(p)$ is $\mathbb{D}$-adapted, and 
        $\cR_\alpha^{(k)}(p) \subset \cR_\alpha^{(k+1)}(p)$. 
        \item \label{item:gapchaseBH}     When $\cR^{(1)}_\alpha = \cR^\BH_\alpha$, then for $k \geq 1$,
        $\cR_\alpha^{(k)}(p) \supset \cR^\mathbb{D}_\alpha$, and $\cR_\alpha^{(k)}(p) \supset \cR_\alpha^{(k+1)}(p)$. Also, there exists $k^* \geq 2$ such that $\cR_\alpha^{(k^*)}(p) = \cR_\alpha^{(k^*+1)}(p)$, and $\cR_\alpha^{(k^*)}(p)$ is  $\mathbb{D}$-adapted.
    \end{enumerate}
\end{theorem}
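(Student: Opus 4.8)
The plan is to treat the update in \eqref{eq:gapchasingiterate} as a single operator on procedures. Write $\Phi$ for the map sending a monotone procedure $\cR_\alpha$ to the procedure $\Phi(\cR)_\alpha(p) := \{i : p_i \le \alpha|\{i\}\cup\cR_\alpha(\oneNio p)|/m\}$, so that $\cR^{(k+1)}_\alpha = \Phi(\cR^{(k)})_\alpha$. Three structural facts follow immediately from the definition and drive the whole argument. (A) $\Phi(\cR)$ is always neighbor-blind, because $\oneNio$ is idempotent and fixes coordinate $i$, so the defining inequality for $i$ is identical at $p$ and at $\oneNio p$. (B) $\Phi$ preserves monotonicity: if $p'\preceq p$ then $\oneNio p'\preceq\oneNio p$, so monotonicity of $\cR$ only enlarges the threshold at $p'$ and the inequality is easier to satisfy. (C) $\Phi$ is order-preserving as an operator: if $\cR_\alpha(p)\subseteq\cR'_\alpha(p)$ for all $p$, then enlarging the set $\cR_\alpha(\oneNio p)$ only enlarges the threshold, so $\Phi(\cR)_\alpha\subseteq\Phi(\cR')_\alpha$.

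For part \ref{item:gapchaseD} I would induct, the base case being $\cR^{(1)}=\cR^\mathbb{D}$, assumed $\mathbb{D}$-adapted. The inductive step has two halves. The expansion $\cR\subseteq\Phi(\cR)$ for a $\mathbb{D}$-adapted $\cR$ uses neighbor-blindness and self-consistency: if $i\in\cR_\alpha(p)$ then \ref{item:NB} gives $i\in\cR_\alpha(\oneNio p)$, so \ref{item:SC} applied to $\oneNio p$ yields $p_i\le\alpha|\cR_\alpha(\oneNio p)|/m=\alpha|\{i\}\cup\cR_\alpha(\oneNio p)|/m$, i.e. $i\in\Phi(\cR)_\alpha(p)$. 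The subtler half is self-consistency of $\Phi(\cR)$: for $i\in\Phi(\cR)_\alpha(p)$, monotonicity of $\cR$ gives $\cR_\alpha(\oneNio p)\subseteq\cR_\alpha(p)$ (since $\oneNio p\succeq p$), and the expansion just proved gives $\cR_\alpha(p)\subseteq\Phi(\cR)_\alpha(p)$, so $\{i\}\cup\cR_\alpha(\oneNio p)\subseteq\Phi(\cR)_\alpha(p)$ and the defining threshold is at most $\alpha|\Phi(\cR)_\alpha(p)|/m$. Combined with facts (A) and (B), this shows $\Phi(\cR)$ is $\mathbb{D}$-adapted, and the expansion gives $\cR^{(k)}_\alpha\subseteq\cR^{(k+1)}_\alpha$.

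For part \ref{item:gapchaseBH} I would first record that $\cR^\BH_\alpha(p)$ is the \emph{maximal} self-consistent set at level $\alpha$ (a union of self-consistent sets is self-consistent), so any $\mathbb{D}$-adapted $\cR^\mathbb{D}$, being self-consistent, satisfies $\cR^\mathbb{D}_\alpha\subseteq\cR^\BH_\alpha=\cR^{(1)}_\alpha$; combining fact (C) with the expansion from part \ref{item:gapchaseD} and inducting then gives $\cR^\mathbb{D}_\alpha\subseteq\cR^{(k)}_\alpha$ for all $k$. The crux is the first contraction $\Phi(\cR^\BH)\subseteq\cR^\BH$. For $i\in\Phi(\cR^\BH)_\alpha(p)$, set $R'=\cR^\BH_\alpha(\oneNio p)$; since masking only raises coordinates, $p_j\le(\oneNio p)_j\le\alpha|R'|/m\le\alpha|\{i\}\cup R'|/m$ for every $j\in R'$, while $p_i\le\alpha|\{i\}\cup R'|/m$ by hypothesis, so $\{i\}\cup R'$ is self-consistent on $p$ and hence lies inside the maximal self-consistent set $\cR^\BH_\alpha(p)$, giving $i\in\cR^\BH_\alpha(p)$. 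Applying fact (C) inductively propagates this to $\cR^{(k+1)}_\alpha\subseteq\cR^{(k)}_\alpha$ for all $k$.

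Finally, for fixed $p$ the sets $\cR^{(k)}_\alpha(p)$ form a nested decreasing chain in the finite lattice $2^{[m]}$, so it stabilizes after at most $m+1$ steps, giving some $k^*\ge 2$ with $\cR^{(k^*)}_\alpha(p)=\cR^{(k^*+1)}_\alpha(p)$. At such a fixed point $\cR^{(k^*)}=\Phi(\cR^{(k^*)})$: neighbor-blindness holds by fact (A), monotonicity by fact (B) propagated from $\cR^\BH$, and the fixed-point identity together with monotonicity yields self-consistency exactly as in part \ref{item:gapchaseD}, so $\cR^{(k^*)}_\alpha$ is $\mathbb{D}$-adapted. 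I expect the main obstacle to be the interlocking of expansion and self-consistency in part \ref{item:gapchaseD}, where self-consistency of $\Phi(\cR)$ cannot be verified directly but must be bootstrapped through the already-established expansion $\cR\subseteq\Phi(\cR)$, together with correctly invoking the maximal-self-consistent-set characterization of $\BH$ to justify the first contraction in part \ref{item:gapchaseBH}.
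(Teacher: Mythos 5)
Your proposal is correct and follows essentially the same route as the paper's proof: induction on the gap-chasing update, with expansion $\cR \subseteq \Phi(\cR)$ derived from neighbor-blindness plus self-consistency, self-consistency of the updated procedure bootstrapped through that expansion together with monotonicity, the contraction in part \ref{item:gapchaseBH} propagated by order-preservation of the update, and the fixed point obtained from the finite decreasing chain. The only real difference is cosmetic and slightly in your favor: you justify the base-case contraction $\cR^{(2)}_\alpha \subseteq \cR^\BH_\alpha$ explicitly, by showing $\{i\} \cup \cR^\BH_\alpha(\bone^{\Nio}p)$ is self-consistent at $p$ and invoking the maximal-self-consistent-set characterization of BH, whereas the paper compresses this step into an appeal to ``monotonicity of $\BH$''; both arguments are standard and valid.
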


\begin{proof}
    \emph{\ref{item:gapchaseD}.} It suffices to prove the result for $k=1$. Let $\cR_\alpha := \cR^{(1)}_\alpha$ and $\cR^+_\alpha := \cR^{(2)}_\alpha$. 
        First, observe that
        because $\cR_\alpha$ is monotone, $\cR^+_\alpha$ is also monotone, and 
        also $\cR^+_\alpha$ is neighborblind by construction. Next, note that, by neighbor-blindness and self-consistency of $\cR_\alpha$, we 
        have
        \begin{align*}
            i \in \cR_\alpha(p) 
                &\Leftrightarrow i \in \mathcal{R}_\alpha(\bone^{\Nio} p) \\
                &\Rightarrow p_i \leq \alpha |\mathcal{R}_\alpha(\bone^{\Nio} p)|/m \\
                &\Rightarrow p_i \leq \alpha |\{i\} \cup\mathcal{R}_\alpha(\bone^{\Nio} p)|/m,
        \end{align*}
        implying that
        \begin{equation}\label{eq:gapchase-inclusion}
            \cR_\alpha(p) \subset \cR^+_\alpha(p).
        \end{equation}
        We can use this to show the self-consistency:
        \begin{align*}
            i \in \cR^+_\alpha(p) 
                &\Leftrightarrow p_i \leq \alpha |\{i\} \cup\mathcal{R}_\alpha(\bone^{\Nio} p)|/m, \  \text{and} \  i \in \cR^+_\alpha(p) \\
                &\Rightarrow p_i \leq \alpha |\{i\} \cup \mathcal{R}^+_\alpha(\bone^{\Nio} p)|/m, \  \text{and} \  i \in \cR^+_\alpha(p) \\
                &\Leftrightarrow p_i \leq \alpha | \mathcal{R}^+_\alpha(\bone^{\Nio} p)|/m, \ \text{and} \  i \in \cR^+_\alpha(p).
        \end{align*}
        This shows the first claim, and the second claim was established
        in \eqref{eq:gapchase-inclusion}.

    \emph{\ref{item:gapchaseBH}}. First, suppose for induction
    that $\cR^\mathbb{D}_\alpha(p) \subset  \cR^{(k)}_\alpha(p)$. Then we have
    \[
        \cR^\mathbb{D}_\alpha(p)  \subset \left\{i: p_i 
        \leq \frac{\alpha |\{i\} \cup \cR^\mathbb{D}_\alpha(\bone^{\Nio} p)|}{m}  \right\}
        \subset   \left\{i: p_i 
        \leq \frac{\alpha |\{i\} \cup \cR^{(k)}_\alpha(\bone^{\Nio} p)|}{m}  \right\} = \cR^{(k+1)}_\alpha(p). 
    \]
    By self-consistency of $\cR^\mathbb{D}$ and BH, the base case $\cR^\mathbb{D}_\alpha(p) \subset  \cR^{(1)}_\alpha(p)$ 
    holds, showing the first claim. 

    Next, suppose for induction that 
    $\cR^{(k-1)}_\alpha(p) \supset \cR^{(k)}_\alpha(p)$. Then
    \[
        \cR^{(k)}_\alpha(p) = \left\{i: p_i 
        \leq \frac{\alpha |\{i\} \cup \cR^{(k-1)}_\alpha(\bone^{\Nio} p) |}{m}  \right\}
        \supset   \left\{i: p_i 
        \leq \frac{\alpha |\{i\} \cup \cR^{(k)}_\alpha(\bone^{\Nio} p)|}{m}  \right\} 
        = \cR^{(k+1)}_\alpha(p).
    \]
    The base case holds by monotonicity of $\BH$, showing the second claim. 

    Finally, because 
    $\cR_\alpha^{(k)}(p) \supset \cR_\alpha^{(k+1)}(p)$ for all $k$ and 
    all sets are finite, 
    there must be a $k^*$ such that $\cR_\alpha^{(k^*)}(p) = \cR_\alpha^{(k^*+1)}(p)$. 
    For every $k \geq 2$, the iterates $\cR_\alpha^{(k)}$ are neighbor-blind by definition 
    and monotone by induction, and whenever $i \in \cR_\alpha^{(k^*)}(p)$,
    \[
        \cR_\alpha^{(k^*)}(p) = \left\{i: p_i 
        \leq \frac{\alpha |\{i\} \cup \cR^{(k^*)}_\alpha(\bone^{\Nio} p)|}{m}  \right\} \subset 
        \left\{i: p_i 
        \leq \frac{\alpha |\cR^{(k^*)}_\alpha(p)|}{m}  \right\}
    \]
    so $\cR_\alpha^{(k^*)}$ is $\mathbb{D}$-adapted, showing the last claim.
\end{proof}

Intuitively, the gap chasing update leading to $\cR_\alpha^{(k+1)}$ in 
\eqref{eq:gapchasingiterate} enforces neighborblindness by masking the $p$-value vector, 
while reducing the ``self-consistency gap'' of $\cR_\alpha^{(k)}$ described in Section 
\ref{sec:fdrctrlremarks}. In the case of full independence (empty dependency graph), the fixed point $\cR^{(k^*)}_\alpha$ of the iteration
satisfies
\[
    \cR_\alpha^{(k^*)}(p) =
            \left\{i : p_i \leq \frac{\alpha |\{i\} \cup \cR^{(k^*)}_\alpha(p)|}{m}  \right\} = 
            \left\{i : p_i \leq \frac{\alpha |\cR^{(k^*)}_\alpha(p)|}{m}  \right\}.
\]
Hence property \ref{item:SC} is satisfied with logical equivalence, and the self-consistency gap has been removed.
If $\cR^{(1)}_\alpha = \cR^{\Bonf}_\alpha$, the fixed point under full independence is
\begin{equation}\label{eq:stepdownBHdef}
    \cR^{\BH^-}_\alpha(p) := \{i : p_i \leq \alpha r^*/m \}, \quad \text{where} \quad r^* := \max\{r : p_{(k)} \leq \alpha k/m \;\;\forall\,k \leq r \},
\end{equation}
the ``step-down'' variant of the BH procedure, defined by e.g. \citet{finnerFalseDiscoveryRate2001}.

Finally, we briefly note that Theorem \ref{thm:gapchase}\ref{item:gapchaseD} is a consequence
of general closure properties for $\mathbb{D}$-adapted procedures. 
An investigation of these is relegated to Appendix \ref{app:closure}.



\section{Computation}\label{sec:computation}

In this section, we summarize our computational strategy: full details appear in Appendix \ref{app:computation}. We focus mostly on the computation of $\IndBH$, since $\IndBH^{(k)}$ may call $\IndBH$ recursively many times. 

Computing $\IndBH$ involves listing the maximal independent sets of graphs, which is NP-hard (because it solves the clique decision problem, which is NP complete). Unfortunately, we do need the \emph{exact} results---approximations do not suffice for FDR control. Though this may seem intractable, we have already mentioned that $\IndBH$ can ignore all hypotheses outside the BH rejection set. With the additional tweaks in this section, $\IndBH$, $\IndBH^{(2)}$, and even $\IndBH^{(3)}$ seem to be tractable on real data. They ultimately rely on off-the-shelf implementations of the independent set routines, but we call them only when needed. 

We now settle some notation for this section. Take $\mathtt{LargestInd}(\mathbb{K})$ to denote \emph{any} largest independent in a graph $\mathbb{K}$---for us, it does not matter which---and let us define the sublevel set $Q(r)$, and a certain independent set $I_i(r)$ that contains $i$:
\begin{gather}
    Q(r) = \{j : p_j \leq \alpha r/m\},  \text{ and } \\
    I_i(r) = \{i\} \cup \mathtt{LargestInd}(\mathbb{D}[Q_{-i}(r)]) \quad \text{where} \quad Q_{-i}(r) = \{j \notin N_i : p_j \leq \alpha r/m\}, \label{eq:indepsetwithi}
\end{gather}
suppressing dependence on the $p$-values and $\alpha$. This just says that $Q(r)$ is that index set with $p$-values smaller than $\alpha r/m$, and  $I_i(r)$ is the largest independent set containing $i$ among the graph nodes $Q(r)$. 

As an important preliminary, recall that any $\mathbb{D}$-adapted procedure has 
a local threshold representation as described in Section \ref{sec:fdrctrlremarks}. In the case of IndBH, 
we can write the following proposition, proved in Appendix \ref{app:proofs}. Recall that 
$\mathbb{D}[B]$ is the induced subgraph of $\mathbb{D}$ from nodes $B \subset [m].$ 

\begin{restatable}{proposition}{indbhrep}\label{prop:indbhrep}
    $\IndBH(\alpha)$ rejects $H_i$ if and only if $p_i \leq \alpha \beta_{\alpha,i}^\IndBH(p_{-i})/m$, where
    \begin{gather*}
        \beta_{\alpha,i}^\IndBH(p_{-i}) := \max \{r :  |I_i(r)| \geq r \}.
    \end{gather*}
    \end{restatable}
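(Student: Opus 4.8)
The plan is to reduce everything to the \emph{certificate set} characterization of $\IndBH$ recorded just above the statement: $\IndBH(\alpha)$ rejects $H_i$ if and only if there is an independent set $C \ni i$ with $p_j \le \alpha |C|/m$ for every $j \in C$. The quantity $I_i(r)$ is engineered precisely to detect such sets. Because every node of $Q_{-i}(r)$ lies outside $N_i$ (which, by the self-edge convention, contains $i$ and all its neighbors), it is non-adjacent to $i$, so adjoining $i$ to any independent set of $\mathbb{D}[Q_{-i}(r)]$ leaves it independent; thus $I_i(r)$ is a largest independent set containing $i$ all of whose other members have $p$-value at most $\alpha r/m$. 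I would first dispatch two preliminaries: that $|I_i(r)|$ is a well-defined function of $p_{-i}$ alone (indeed of $S_i$, since $i \notin Q_{-i}(r)$, and the arbitrary choice inside $\mathtt{LargestInd}$ does not affect cardinality), and that $\{r : |I_i(r)| \ge r\}$ is a finite, nonempty set of positive integers—$r = 1$ always qualifies—so that $\beta_{\alpha,i}^\IndBH(p_{-i}) = \max\{r : |I_i(r)| \ge r\}$ is well-defined. Write $r^* := \beta_{\alpha,i}^\IndBH(p_{-i})$.

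For the ``if'' direction, I would assume $p_i \le \alpha r^*/m$ and take $C := I_i(r^*)$. By the definition of $r^*$ we have $|C| \ge r^*$. Every $j \in C \setminus \{i\}$ lies in $Q_{-i}(r^*)$, so $p_j \le \alpha r^*/m \le \alpha |C|/m$, and the hypothesis gives $p_i \le \alpha r^*/m \le \alpha |C|/m$ as well. Hence $C$ is a certificate set containing $i$, and $\IndBH$ rejects $H_i$.

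For the ``only if'' direction, suppose $\IndBH$ rejects $H_i$, so there is a certificate set $C \ni i$; put $s := |C|$, giving $p_j \le \alpha s/m$ for all $j \in C$. Each $j \in C \setminus \{i\}$ is distinct from $i$ and non-adjacent to it (as $C$ is independent), hence lies outside $N_i$, and has $p_j \le \alpha s/m$, so $C \setminus \{i\} \subseteq Q_{-i}(s)$ and is independent in $\mathbb{D}[Q_{-i}(s)]$. Therefore $|I_i(s)| \ge 1 + |C \setminus \{i\}| = s$, which places $s$ in $\{r : |I_i(r)| \ge r\}$ and gives $r^* \ge s$. Consequently $p_i \le \alpha s/m \le \alpha r^*/m$, as required.

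I expect the only real obstacle to be the bookkeeping forced by the self-edge convention: keeping straight the equivalence ``$j \ne i$ and $j$ is non-adjacent to $i$'' $\iff$ ``$j \notin N_i$'', which is exactly what lets me move members of a certificate set into $Q_{-i}(\cdot)$ in one direction and adjoin $i$ to an independent set of $\mathbb{D}[Q_{-i}(\cdot)]$ in the other. Everything else is a matter of comparing the two thresholds $\alpha s/m$ and $\alpha r^*/m$ together with the inequality $|I_i(r^*)| \ge r^*$; there are no probabilistic or analytic difficulties, since the claim is a purely combinatorial identity about the deterministic map $p \mapsto \cR_\alpha^\IndBH(p)$.
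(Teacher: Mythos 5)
Your proof is correct and takes essentially the same route as the paper's: both directions go through the certificate-set characterization of $\IndBH$, with the ``only if'' direction observing that $C \setminus \{i\} \subseteq Q_{-i}(|C|)$ forces $|I_i(|C|)| \geq |C|$, and the ``if'' direction noting that $I_i(\beta_{\alpha,i}^\IndBH(p_{-i}))$ is itself a certificate set containing $i$. The only (cosmetic) difference is that you work with the inequality $|I_i(r^*)| \geq r^*$ where the paper asserts exact equality, which if anything saves a small justification.
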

This representation
directly associates IndBH with a computational strategy: for every $i$, 
compute $I_i(r)$ for all $r$ using an algorithm that implements $\texttt{LargestInd}$ on a subgraph. By testing on each $H_i$ separately, this strategy is also parallelizable (though we did not use the parallelized versions to measure running time). We will refine this strategy
in the ensuing sections.

\subsection{Reducing the graph size}\label{sec:graphreduce}

Our first major computational saving is to discard all the $p$-values $p_i$ where $H_i$ was not rejected by BH, reducing to the 
graph $\mathbb{D}[\cR^\BH_\alpha(p)]$ and its $p$-values. Running IndBH with this graph and $p$-values, at the adjusted level $\alpha |\cR^\BH_\alpha(p)|/m$, gives the same rejection set. Essentially, this is possible because IndBH is a union of BH rejection sets, and BH ignores all $p$-values not in its rejection set.

In practice, it is common that most hypotheses are null---a sparse 
signal pattern. In this case, $\cR^\BH_\alpha(p)$ makes very few rejections relative to $m$,
allowing us to drastically shrink the size of the problem. We now state a formal and slightly more general version of this claim. 
\begin{restatable}{proposition}{ignorebh}\label{prop:ignorebh}
    Suppose we have $\bar r$ such that whenever $p_i > \alpha \bar r/m$, the
    $\IndBH^{(k)}_\mathbb{D}(\alpha)$ procedure fails to reject $H_i$. 
    Then $\IndBH^{(k)}_\mathbb{D}(\alpha)$ 
    run on the $p$-values $(p_i)_{i = 1}^m$ is equivalent to $\IndBH^{(k)}_{\mathbb{D}'}(\alpha')$
    run on $(p_i)_{i \in Q(\bar r)}$, using the subgraph $\mathbb{D}' = \mathbb{D}[Q(\bar r)]$ and adjusted level $\alpha' = \alpha |Q(\bar r)|/m$. 
    In particular, this holds for $\bar r = |\cR^\BH_\alpha(p)|$, which has 
    $Q(\bar r) = \cR^\BH_\alpha(p)$. 
\end{restatable}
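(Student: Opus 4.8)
The plan is to prove a slightly more general \emph{restriction lemma} and then specialize. Concretely, I would show by induction on $k$ that for \emph{any} vertex set $B \subseteq [m]$ satisfying $\cR_\alpha^{\IndBH^{(k)}_{\mathbb{D}}}(p) \subseteq B$, writing $m_B := |B|$ and $\alpha_B := \alpha m_B / m$, one has $\cR_\alpha^{\IndBH^{(k)}_{\mathbb{D}}}(p) = \cR_{\alpha_B}^{\IndBH^{(k)}_{\mathbb{D}[B]}}(p_B)$. The proposition is then the case $B = Q(\bar r)$, which is an admissible choice precisely by the stated hypothesis that nothing with $p_i > \alpha \bar r/m$ is rejected. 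The final claim about $\bar r = |\cR^\BH_\alpha(p)|$ follows because $Q(|\cR^\BH_\alpha(p)|) = \cR^\BH_\alpha(p)$ by the definition \eqref{eq:BHdef} of BH, while $\cR_\alpha^{\IndBH^{(k)}_{\mathbb{D}}}(p) \subseteq \cR^\BH_\alpha(p)$ by the nesting chain of Section~\ref{sec:stepup}. The single identity doing all the work is the scaling $\alpha_B / m_B = \alpha/m$: it leaves every per-hypothesis threshold $\alpha r/m$ and every certificate threshold unchanged, so the only effect of restriction is to delete coordinates, and deleted coordinates sit above all relevant thresholds.

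For the base case $k = 1$ I would argue through the certificate sets from Section~\ref{sec:indbh}. For $i \in B$, $\IndBH_{\mathbb{D}}(\alpha)$ rejects $H_i$ iff some certificate set $C \ni i$ exists, i.e.\ a nonempty independent set of $\mathbb{D}$ with $p_j \le \alpha|C|/m$ for all $j \in C$. Because a certificate set certifies the rejection of \emph{every} one of its members, $C \subseteq \cR_\alpha^{\IndBH_{\mathbb{D}}}(p) \subseteq B$; hence $C$ is an independent set of the induced graph $\mathbb{D}[B]$, and by the scaling identity the condition $p_j \le \alpha|C|/m = \alpha_B|C|/m_B$ makes $C$ a certificate set of the restricted problem. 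The converse is immediate, since any independent set of $\mathbb{D}[B]$ is one of $\mathbb{D}$ and the thresholds agree. Thus the two rejection sets coincide on $B$, and both are contained in $B$.

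For the inductive step $k \to k+1$ I would use the defining recursion \eqref{eq:indbhkdef}. The key containment is that the masked sub-rejections never leave $B$: since masking only raises $p$-values, monotonicity of the $\mathbb{D}$-adapted procedure $\IndBH^{(k)}$ (Propositions~\ref{prop:IndBH1} and~\ref{prop:IndBHk}) gives $\cR_\alpha^{\IndBH^{(k)}_{\mathbb{D}}}(\bone^{\Nio}p) \subseteq \cR_\alpha^{\IndBH^{(k)}_{\mathbb{D}}}(p)$, which by the nesting $\cR_\alpha^{\IndBH^{(k)}_{\mathbb{D}}} \subseteq \cR_\alpha^{\IndBH^{(k+1)}_{\mathbb{D}}}$ of Proposition~\ref{prop:IndBHk} lies in $\cR_\alpha^{\IndBH^{(k+1)}_{\mathbb{D}}}(p) \subseteq B$. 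Hence the same $B$ is admissible for the level-$k$ procedure applied to the vector $\bone^{\Nio}p$, and the inductive hypothesis applies to that vector. It remains to check that masking and restriction commute: for $i \in B$, the restriction $(\bone^{\Nio}p)_B$ equals $\bone^{\Nio \cap B}(p_B)$, and $\Nio \cap B$ is exactly the punctured neighborhood of $i$ in $\mathbb{D}[B]$. Combined with the inductive hypothesis this yields $\cR_\alpha^{\IndBH^{(k)}_{\mathbb{D}}}(\bone^{\Nio}p) = \cR_{\alpha_B}^{\IndBH^{(k)}_{\mathbb{D}[B]}}(\bone^{\Nio \cap B}(p_B))$ as subsets of $B$, so the counts $|\{i\}\cup(\cdot)|$ in the two recursions are identical; with $\alpha_B/m_B = \alpha/m$ the thresholds match and the level-$(k+1)$ rejection sets coincide on $B$, agreeing trivially off $B$ (both empty there).

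The main obstacle I anticipate is bookkeeping rather than a deep idea: propagating the containment ``$\subseteq B$'' through the masking operator and down the recursion so that a single $B$ may be reused at every level. This hinges on combining monotonicity (masking shrinks the rejection set) with the nesting of the $\IndBH^{(k)}$ iterates, together with the observation that a masked coordinate and a deleted coordinate are interchangeable for these procedures, since neither can enter a sublevel set $Q(r)$ for $r \le m$. The base case hides the only genuinely procedure-specific content---that the effective BH threshold is unchanged when one drops coordinates lying above it---but the certificate-set language lets us state it without any order-statistic computation.
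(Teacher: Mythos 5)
Your proof is correct, but it reaches the result through a different decomposition than the paper, so a comparison is worthwhile. The paper's route is a leave-one-out lemma (Lemma \ref{lem:settoone}): whenever $i \notin \cR^{\IndBH^{(k)}}_\alpha(p)$, one has $\cR^{\IndBH^{(k)}}_\alpha(p) = \cR^{\IndBH^{(k)}}_\alpha(\bone^{\{i\}}p) = \cR^{\IndBH^{(k)}}_{\alpha',-i}(p_{-i})$ with $\alpha' = \alpha(m-1)/m$, and the proposition then follows by applying this lemma once for every $i \notin Q(\bar r)$. You instead delete all of $Q(\bar r)^\setcomp$ in a single stroke, via a restriction lemma quantified over an arbitrary set $B$ containing the rejection set. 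The inductive skeletons coincide: both arguments induct on $k$ through the recursion \eqref{eq:indbhkdef}, and both propagate non-rejection through the masking operator using exactly the same two facts --- monotonicity, and the nesting $\cR_\alpha^{\IndBH^{(k)}} \subseteq \cR_\alpha^{\IndBH^{(k+1)}}$ of Proposition \ref{prop:IndBHk}. The base cases genuinely differ: the paper works from the union-of-BH representation \eqref{eq:indbhdef} together with the classical fact that deleting an unrejected $p$-value leaves BH unchanged, whereas you work from the certificate-set characterization of Section \ref{sec:indbh} together with the invariance $\alpha_B/m_B = \alpha/m$; the content is equivalent, but your version needs no order-statistic reasoning, and it correctly isolates the one spot where the induced-subgraph hypothesis is essential (independent sets of $\mathbb{D}[B]$ must lift to independent sets of $\mathbb{D}$, which fails for non-induced subgraphs). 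What each buys: your single-shot lemma removes the bookkeeping of iterated deletion --- each application of the paper's lemma changes $m$, $\alpha$, and the index set, and one must note that previously unrejected indices remain unrejected in the reduced problem (immediate from the identity, but still something to track) --- while the paper's finer-grained lemma earns its keep elsewhere, since it is also the main ingredient in Proposition \ref{prop:altindbhk}, the masking detail ($N_i$ versus $\Nio$) on which Algorithm \ref{alg:IndbhK} relies; your restriction lemma does not directly yield that result.
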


Because of this reduction, the size of the BH rejection set $\cR^\BH_\alpha(p)$ is more important than the number of hypotheses $m$ for the computational efficiency of our methods. For example, let us perturb slightly the setting of Figure \ref{fig:powercomp}, which tests the hypotheses $H_i : \mu_i = 0$ under Gaussian block dependence, where the blocks have equal size 100 and $\mu_i = 3$ for every non-null $H_i$. We will check the effect of the size of $\cR^\BH_\alpha(p)$ on the runtime. 

Suppose there are 10\% randomly placed non-nulls among $m = 2 \cdot 10^5$ hypotheses. Then on an M1 Macbook Pro, our implementation of $\IndBH(0.1)$ runs in about a second on average, and $\IndBH^{(3)}(0.1)$ in about 7 minutes. On the other hand, if the non-null percentage is only 1\%, then even when $m = 10^6$, $\IndBH(0.1)$ can run in about a second, but $\IndBH^{(3)}(0.1)$ takes only about five seconds. In the latter case, the $\BH$ rejection set is small, and our methods see the accompanying speedup. 

By Proposition \ref{prop:ignorebh}, any time the input graph $\mathbb{D}$ is mentioned from this point, the reader can
safely substitute instead the subgraph $\mathbb{D}' = \mathbb{D}[\cR_\alpha^\BH(p)]$ and its corresponding $p$-values, as long as the subsitutions $m' = |\cR_\alpha^\BH(p)|$ and $\alpha' = \alpha m'/m$ are 
also made for $m$ and $\alpha$. 
\subsection{Speedups based on connected components}

Our second major computational saving comes from the fact that, especially after the reduction of Section \ref{sec:graphreduce}, the resulting graph has many small connected components. 

Let $\mathbb{D}_1, \dots \mathbb{D}_{\bar k}$ be all $\bar k$ connected components of $\mathbb{D}$. 
In this section, we show how to combine computations done separately on each of these connected components. 

\subsubsection{Caching independence numbers from each component}\label{sec:caching}

Let $\kappa[i]$ be the index for the component such that $\mathbb{D}_{\kappa[i]}$ contains $i$ as a node. Crucially, because $I_i(r)$ is a largest independent set, we can write it as a union of largest independent sets from each of its component graphs:
\[
I_i(r) = \{i\} \cup \mathtt{LargestInd}(\mathbb{D}_{\kappa[i]}[Q_{-i}(r)]) \cup   \bigcup_{k \neq \kappa[i]} \mathtt{LargestInd}(\mathbb{D}_k[Q(r)]),
\]
from which it follows that
\[
|I_i(r)| = 1 +  \mathtt{IndNum}(\mathbb{D}_{\kappa[i]}[Q_{-i}(r)]) + \sum_{k \neq \kappa[i]} \mathtt{IndNum}(\mathbb{D}_k[\cQ(r)]),
\]
where $\mathtt{IndNum}(\mathbb{K})$ denotes the size of the largest independent set of a graph $\mathbb{K}$, called its \emph{independence number}. Computing independence numbers is NP-hard with respect to graph size, though it is more efficient with small connected components. 

Importantly, observe that the terms $\mathtt{IndNum}(\mathbb{D}_k[\cQ(r)])$ in the summation do \emph{not} depend on the hypothesis $i \in Q(\bar r)$, so those values can be computed once, cached, and reused for every $i$. Details on the caching can be found in Appendix \ref{app:computation}. 

We still may need to compute $\mathtt{IndNum}(\mathbb{D}_{\kappa[i]}[Q_{-i}( \cdot )])$ for each $i$, but not always, 
as described next.

\subsubsection{Cheap checks from the components}\label{sec:cheapcheck}

We can avoid the computation of $\mathtt{IndNum}(\mathbb{D}_{\kappa[i]}[Q_{-i}(\cdot )])$
most of the time. Assuming we have the cached
values $\mathtt{IndNum}(\mathbb{D}_k[\cQ(r)]) := \mathbf{V}_{k, r}$, define
\begin{gather*}
    \beta^+ = \max\left\{r \leq \bar r: \sum_{k = 1}^{\bar k} \mathbf{V}_{k, r} \geq r\right\}, \text{ and }\\
    \beta^- = \max\left\{r \leq \bar r: \sum_{k = 1}^{\bar k} \mathbf{V}_{k, r} - \max_{k'} \mathbf{V}_{k, r} + 1 \geq r\right\},
\end{gather*}
which satisfy $\beta^+ \geq \betaai^\IndBH(p_{-i})$ and 
$\beta^- \leq \betaai^\IndBH(p_{-i})$. It follows that $\IndBH$ can reject $H_i$ if $p_i \leq \alpha \beta^-/m$ 
and fails to reject $H_i$ if $p_i > \alpha \beta^+/m$. These two checks can be done on all hypotheses at once, 
and takes care of most of them.

Further details on the checks we use can be found in Appendix \ref{app:computation}. 

\subsubsection{When the components are fully connected}\label{sec:fullyconnectedshortcut}

If $\mathbb{D}_k$ is fully connected, then $\mathtt{IndNum}(\mathbb{D}_k[Q(r)])$ is easy to compute; it is equal to 
$1$ if the minimum $p$-value in $\mathbb{D}_k$ is below $\alpha r / m$. Additionally, $\mathtt{IndNum}(\mathbb{D}_{\kappa[i]}[Q_{-i}(r)])$ becomes identically zero, since the graph $\mathbb{D}_{\kappa[i]}[Q_{-i}(r)]$ has no nodes. 

When the graph $\mathbb{D}$ encodes block dependence, every $\mathbb{D}_k$ is fully connected, and we have the following proposition, proven in Appendix \ref{app:proofs}. 
\begin{restatable}{proposition}{indbhclique}\label{prop:indbhclique}
Suppose blocks $B_1, \dots, B_{\bar k} \subset [m]$ 
form a disjoint partition of
$[m]$, and that 
the neighborhoods $(N_i)_{i=1}^m$ of a graph $\mathbb{D}$ satisfy $N_i = B_k$ whenever
$i \in B_k$. 
Defining
\[
    j^*_k = \operatorname*{argmin}_{j \in B_k} p_j  \quad \text{ and } \quad  M = \bigcup_{k = 1}^{\bar k} \{j^*_k\},
\]
we have
\[
    \cR^{\textup{\IndBH}_\mathbb{D}}_\alpha(p) = \{i : p_i \leq \alpha \big| \cR^\BH_\alpha(\bone^{M^\mathsf{c}} p) \big| / m\}.
\]
\end{restatable}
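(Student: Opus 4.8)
The plan is to reduce everything to the local-threshold representation of $\IndBH$ established in Proposition~\ref{prop:indbhrep}, evaluate the relevant independence numbers explicitly for a disjoint union of cliques, and then show that the two resulting thresholds coincide. Write $\kappa[i]$ for the index of the block containing $i$, and abbreviate $N(r) := |\{k : p_{j^*_k} \le \alpha r/m\}|$, the number of blocks whose minimum $p$-value lies below $\alpha r/m$. By Proposition~\ref{prop:indbhrep}, $\IndBH$ rejects $H_i$ exactly when $p_i \le \alpha \betaai^\IndBH(p_{-i})/m$, where $\betaai^\IndBH(p_{-i}) = \max\{r : |I_i(r)| \ge r\}$. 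So the first task is to compute $|I_i(r)|$, the second is to identify $|\cR^\BH_\alpha(\bone^{M^\mathsf{c}} p)|$, and I expect both to be governed by a single index determined by $N(\cdot)$.

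For the right-hand side, note that the masked vector $\bone^{M^\mathsf{c}} p$ has the block minima as its only entries below $1$. Since $\alpha r/m < 1$ for all $r \le m$, no masked coordinate is ever rejected, so the usual count-below-threshold identity for BH gives $|\cR^\BH_\alpha(\bone^{M^\mathsf{c}} p)| = \rho$, where $\rho := \max\{r : N(r) \ge r\}$. For the left-hand side, since $\mathbb{D}$ is a disjoint union of cliques and $Q_{-i}(r)$ excludes the entire block $N_i = B_{\kappa[i]}$, a largest independent set in $\mathbb{D}[Q_{-i}(r)]$ contains exactly one node from each \emph{other} block whose minimum falls below $\alpha r/m$. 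Hence $|I_i(r)| = 1 + N_{-i}(r)$, where $N_{-i}(r) := |\{k \neq \kappa[i] : p_{j^*_k} \le \alpha r/m\}| = N(r) - 1\{p_{j^*_{\kappa[i]}} \le \alpha r/m\}$, and therefore $\betaai^\IndBH(p_{-i}) = \max\{r : N_{-i}(r) \ge r - 1\}$.

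It then remains to verify the set equality $\{i : p_i \le \alpha \betaai^\IndBH/m\} = \{i : p_i \le \alpha\rho/m\}$. For the inclusion $\supseteq$, if $p_i \le \alpha\rho/m$ I would use $N(\rho) \ge \rho$ to conclude $N_{-i}(\rho) \ge N(\rho) - 1 \ge \rho - 1$, so that $r = \rho$ is admissible in the definition of $\betaai^\IndBH$ and hence $\betaai^\IndBH \ge \rho$; this gives $p_i \le \alpha\rho/m \le \alpha\betaai^\IndBH/m$, so $H_i$ is rejected by $\IndBH$.

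The hard direction is $\subseteq$, and this is where I expect the main obstacle. Suppose $H_i$ is rejected, so $p_i \le \alpha\betaai^\IndBH/m$, and write $r_0 = \betaai^\IndBH$. I would argue by contradiction that $r_0 \le \rho$. If instead $r_0 > \rho$, then $N(r_0) < r_0$ by maximality of $\rho$, while $N_{-i}(r_0) \ge r_0 - 1$ by definition of $r_0$; combined with the trivial bound $N_{-i}(r_0) \le N(r_0)$, this squeezes $N(r_0) = N_{-i}(r_0) = r_0 - 1$, forcing $p_{j^*_{\kappa[i]}} > \alpha r_0/m$. But rejection gives $p_i \le \alpha r_0 / m$, and since $p_{j^*_{\kappa[i]}} = \min_{j \in B_{\kappa[i]}} p_j \le p_i$, we would have $p_{j^*_{\kappa[i]}} \le \alpha r_0/m$ --- a contradiction. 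Hence $r_0 \le \rho$ and $p_i \le \alpha r_0/m \le \alpha\rho/m$, completing the inclusion. The crux is precisely this final contradiction: rejecting $i$ necessarily pulls its own block minimum below the active threshold, which is incompatible with $i$'s block being the single "missing" block in the squeeze $N(r_0) = N_{-i}(r_0)$.
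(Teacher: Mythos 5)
Your proof is correct, but it takes a recognizably different route from the paper's. The paper argues directly with the certificate-set characterization of $\IndBH$ from Section~\ref{sec:indbh}: it identifies $C^* = \cR^\BH_\alpha(\bone^{M^\mathsf{c}}p)$ as the maximum-cardinality certificate set, so that the inclusion $\cR^{\IndBH_\mathbb{D}}_\alpha(p) \subseteq \{i : p_i \leq \alpha|C^*|/m\}$ follows immediately from maximality, and it obtains the reverse inclusion by the swap $C_i = (C^*\setminus\{j^*_{\kappa[i]}\})\cup\{i\}$, which remains a certificate set containing $i$. You instead route everything through the threshold representation of Proposition~\ref{prop:indbhrep}, compute $|I_i(r)| = 1 + N_{-i}(r)$ explicitly for a disjoint union of cliques, identify the right-hand side via the BH counting identity $|\cR^\BH_\alpha(\bone^{M^\mathsf{c}}p)| = \max\{r : N(r)\ge r\} = \rho$, and close the hard direction with the squeeze $r_0 - 1 \le N_{-i}(r_0) \le N(r_0) \le r_0 - 1$ combined with the observation $p_{j^*_{\kappa[i]}} \le p_i$. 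The two arguments exploit the same structural facts --- in a union of cliques only block minima matter, and $i$ can always stand in for its own block's minimum --- but they package them differently: your easy direction ($\beta_{\alpha,i}^{\IndBH} \ge \rho$ via the ``$+1$ for $i$'') plays the role of the paper's swap, and your squeeze plays the role of the paper's maximality claim. Notably, the paper asserts without proof that $C^*$ is the largest certificate set; your contradiction argument is, in effect, the missing verification that no certificate set can have cardinality exceeding $\rho$, so your write-up is more self-contained at the cost of the bookkeeping with $N$, $N_{-i}$, and $\rho$. (Both proofs share the benign implicit assumption $\alpha < 1$, so that masked coordinates are never rejected by BH.)
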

Under block dependence, this says that we can compute IndBH almost as efficiently as BH itself. This form of IndBH was first discovered by \citet{guoAdaptiveControlsFWER2020}, who proved its FDR control in this special case. For us, we note that block dependence is not strictly required to apply this computational shortcut; as long as the components of the graph are cliques after the reduction of Section \ref{sec:graphreduce}, we can apply Proposition \ref{prop:indbhclique}.

When the components of the graph are not cliques, we could treat them as cliques to cheaply compute a large subset of the IndBH rejection set. Though the checks in Section \ref{sec:cheapcheck} are better for this purpose, they are more expensive, so
we use this shortcut to help compute $\IndBH^{(k)}$ as explained in Appendix \ref{app:computation}. 

\subsection{Considerations for $\mathbf{IndBH}^{(2)}$}\label{sec:considerindbh2}

Now we briefly consider computation for $\IndBH^{(2)}$. Similar commentary applies to $\IndBH^{(k)}$ where $k > 2$.

Recall that $\IndBH^{(2)}(\alpha)$ rejects $H_i$ if and only if $p_i \leq \alpha |\{i\} \cup \cR^{\IndBH}_\alpha(\bone^{\Nio} p)|/m$. This naively requires us to compute $\cR^{\IndBH}_\alpha(\bone^{\Nio} p)$ for every $i$, but
we can skip this calculation for most $i$ by making the following observations. 

First, we can compute $|\cR^{\IndBH}_\alpha(p)|$ once and check whether $i \in \cR^{\IndBH}_\alpha(p)$. If it is, then we immediately know that $i \in \cR_\alpha^{\IndBH^{(2)}}(p)$ by Proposition \ref{prop:IndBHk}. Also, if $p_i > \alpha |\{i\} \cup \cR^{\IndBH}_\alpha(p)| / m$, we immediately know that $i \notin \cR_\alpha^{\IndBH^{(2)}}(p)$, by monotonicity of $\cR_\alpha^{\IndBH}$. 

For the remaining hypotheses where we must compute $\cR^{\IndBH}_\alpha(\bone^{\Nio} p)$, we do not need to redo all the independent set calculations from $\cR^{\IndBH}_\alpha(p)$---most of the intermediate computations cached in Section \ref{sec:caching} can be reused. Again, full details are deferred to Appendix \ref{app:computation}. 




\section{Experiments}\label{sec:experiments}

In this section, we have selected four experimental settings to illustrate one of our main points: even though IndBH and its improvements are always less powerful than BH, they tend to be about as powerful when the dependency graph is sparse, while still provably controlling FDR. Each of the first three simulation settings uses a family of dependency structures which gets sparser as the number of hypotheses $m$ grows. 

Let $\cH_1 = [m] \setminus \cH_0$. Aside from the FDR, for a method MT we compute the true positive ratio and rejection ratio as:
\begin{gather*}
    \textup{TP Ratio} = \mathbb{E}\left[\frac{|\cH_1 \cap \mathcal{R}^\textup{MT}_\alpha(p)|}{|\cH_1 \cap \mathcal{R}^\BH_\alpha(p)|} \mid \cH_1 \cap \mathcal{R}^\BH_\alpha(p) \neq \varnothing \right] \\
    \textup{Rej. Ratio} = \mathbb{E}\left[\frac{|\mathcal{R}^\textup{MT}_\alpha(p)|}{| \mathcal{R}^\BH_\alpha(p)|} \mid \mathcal{R}^\BH_\alpha(p) \neq \varnothing \right]
\end{gather*}
Using these metrics, we compare the IndBH and $\IndBH^{(3)}$ procedures to the BH and BY procedures in each of these four settings. Further experiment runs, varying the simulation parameters and comparing additional methods, can be found in Appendix \ref{app:experiments}. 

\subsection{Block dependence with scattered signals.}\label{sec:blockscat}

In Figure \ref{fig:block_uniformly}, we consider a distribution of $p$-values which is block dependent. To specify this distribution, first we set the following simulation parameters: null proportion $\pi_0 \in [0,1]$, target power $\texttt{tarpow} \in [0,1]$, equicorrelation $\rho \in [0,1]$, and block size $b$ which evenly divides every $m$, the number of hypotheses. 

Then for $m \in \{100,500,2500,10000,50000\}$, the two-sided $p$-values are generated based on $X \sim \mathcal{N}_m(\mu, \Sigma)$, with distribution parameters determined as follows. We first choose a subset of hypotheses $\cH_1 \subset [m]$ of size $\lfloor (1 - \pi_0)m\rfloor$ uniformly at random to be non-null. Then we set the mean $\mu \in \mathbb{R}^m$ as $\mu_i = 0$ if $i \notin \cH_i$. For $i \in \cH_i$, if signal strength is \texttt{fixed}, we set $\mu_i = \mu^*$, and if signal strength is \texttt{random}, then $\mu_i \overset{iid}{\sim} \operatorname{Exp}(\mu^*)$ with the mean parameterization. In each case $\mu^* \in \mathbb{R}$ is tuned so that the BH procedure has power $\texttt{tarpow}$ on the ensuing $p$-values. 

We set the covariance matrix $\Sigma$ to have equicorrelated blocks with 
correlation $\rho$:

\[
\Sigma = 
\operatorname{diag}(\underbrace{\Sigma_\rho, \dots, \Sigma_\rho}_{m/b \text{ times}}) \in \mathbb{R}^{m \times m}
\quad \text{where} \quad
\Sigma_\rho = \begin{pmatrix}
1      & \rho  & \dots  & \rho \\
\rho   & 1     & \ddots  & \vdots \\
\vdots   & \ddots  & \ddots    & \rho \\
\rho   & \dots  & \rho  & 1
\end{pmatrix} \in \mathbb{R}^{b \times b}.
\]
Because $b$ is fixed as $m$ increases, this represents a dependency graph that gets sparser as $m$ increases. Finally, we generated two-sided $p$-values for the null hypotheses $H_i = 0$ as $p_i = 2(1 - \Phi(|X_i|))$.

\begin{figure}[htbp]
    \centering
    \includegraphics{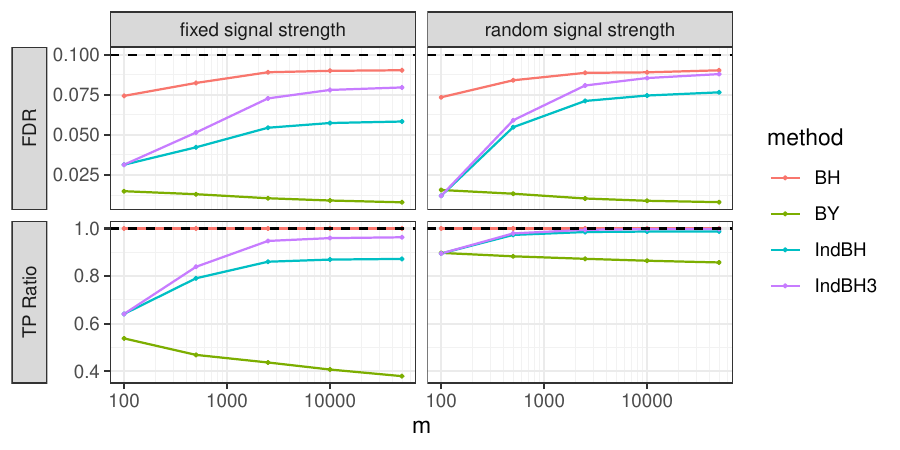}
    \caption{Results for block dependence with scattered signals. FDR control level set to $\alpha = 0.1$. Simulation parameters set to $\pi_0 = 0.9, \texttt{tarpow} = 0.6, \rho = 0.5, b = 100$.}\label{fig:block_uniformly}
\end{figure}

Figure \ref{fig:block_uniformly} shows that in the case of fixed signal strengths, 
the performance of $\IndBH$ improves  as the dependency graph becomes sparser, but does not nearly match the 
performance of BH, as $\IndBH^{(3)}$ does. But fixed signal strengths tend to exaggerate the
difference between multiple testing procedures---when the threshold is near 
$(\alpha/m) \cdot 2(1 - \Phi(\mu^*))$, slightly increasing it can lead to many more rejections. On the other hand, 
random signal strengths lead to all procedures performing more favorably compared to BH, 
and $\IndBH$ performs about as well as BH. 

\subsection{Banded dependence with clustered signals.}\label{sec:bandclust}

In Figure \ref{fig:banded_clustered}, we consider a distribution of $p$-values where the non-null signal pattern is somewhat adversarial to our method. 
Specifically, by first generating cluster centers on $\{1, \dots, m\}$ and drawing non-nulls indices randomly near the cluster centers, we make it so that many non-nulls lie within each other's neighborhoods. Neighborblindness then prevents them from ``assisting'' each other in being rejected using $\mathbb{D}$-adapted procedures. 

The simulation parameters we set are null proportion $\pi_0 \in [0,1]$, target power $\texttt{tarpow} \in [0,1]$, equicorrelation $\rho \in [0,1]$, band size $b'$, average number of points per cluster $\lambda_0 \in [0,1]$, and cluster tightness $\tau > 0$. 

Again, for $m \in \{100,500,2500,10000,50000\}$, the $p$-values are generated based on $X \sim \mathcal{N}_m(\mu, \Sigma)$, with their distribution determined as follows. First, given $\cH_1$, we specify $\mu$ in the same way as in Section \ref{sec:blockscat}, tuning an additional parameter $\mu^*$ so that BH has power $\texttt{tarpow}$. Next, we take $\Sigma$ to be a banded Toeplitz covariance matrix: specifically, $\Sigma_{ij} = \rho^{|i - j|}$ if $|i - j| \leq \lfloor (b' - 1)/2 \rfloor$, and zero otherwise.

Finally, the choice of $\cH_1 \subset [m]$ is now more complicated. We draw this set from the following point process on $\mathbb{Z}$. Let $\operatorname{PP}(\lambda)$ denote a Poisson point process on $[m]$, that takes values as multi-sets of points. Then draw the realized cluster centers $\mathbf{j}$ and daughters $\mathbf{h}$ from
\begin{gather}
    \mathbf{j} \sim \operatorname{PP}(\eta), \quad \eta(i) := \eta_0 1\{i \in [m]\} 
     \\
     \mathbf{h} \sim \operatorname{PP}(\lambda), \quad \lambda(i) := \lambda_0 \sum_{j \in \mathbf{j}} f_\sigma(i, j), \text{ where}\\
     f_\sigma(i, j) := \exp\left(-\frac{1}{2\sigma^2}(i-j)^2 \right) \Big/ \sum_{i \in \mathbb{Z}}\exp\left(-\frac{1}{2\tau^2}(i-j)^2 \right).\label{eq:discretegauss}
\end{gather}
Finally, we get $\cH_1$ from $\mathbf{h}$ by retaining the unique points, while discarding any points not in $\{1, \dots, m\}$. The daughters $\mathbf{h}$ are made up of $n_j \sim \operatorname{Pois}(\lambda_0)$ points in $\mathbb{Z}$ assigned to each point $j$, with locations distributed according to $f_\sigma(\cdot, j)$. Therefore, the parameter $\eta_0$ controls the average number of clusters, $\tau$ the tightness of the cluster, and $\lambda_0$ the average number of daughter points per cluster. Given $\lambda_0$, we set $\eta_0 = (1 - \pi_0) m/\lambda_0$ so that there are roughly $(1 - \pi_0)m$ non-nulls on average. 

Figure \ref{fig:thomasrug} illustrates a few realizations of $\cH_1$. We remark that the similar Thomas point process is used in the ecological sciences to model the locations of trees \citep{wiegandHandbookSpatialPointPattern2013}.

\begin{figure}[htbp]
    \centering
    \includegraphics{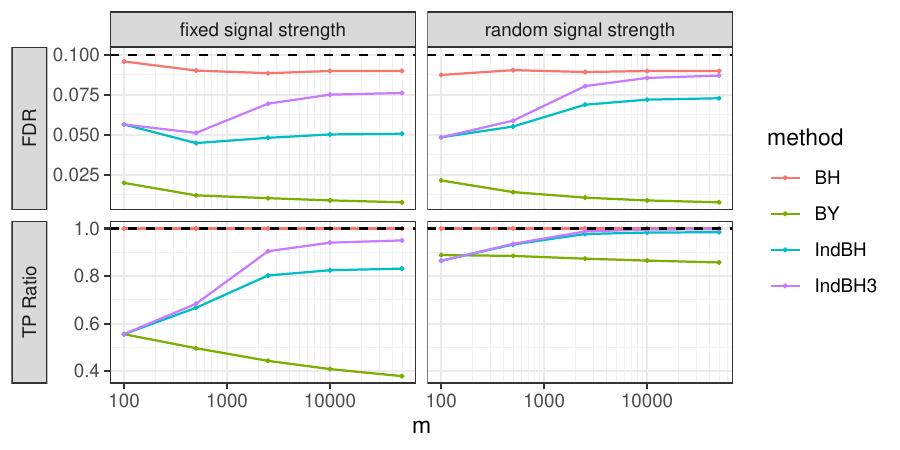}
    \caption{Results for banded dependence with clustered signals. FDR control level set to $\alpha = 0.1$. Simulation parameters set to $\pi_0 = 0.9, \texttt{tarpow} = 0.6, \rho = 0.5, b' = 100, \lambda_0 = 20, \tau = 6$.}\label{fig:banded_clustered}
\end{figure}

\begin{figure}[ht]
    \centering
    \begin{minipage}{0.98\textwidth}
        \includegraphics[width=\linewidth]{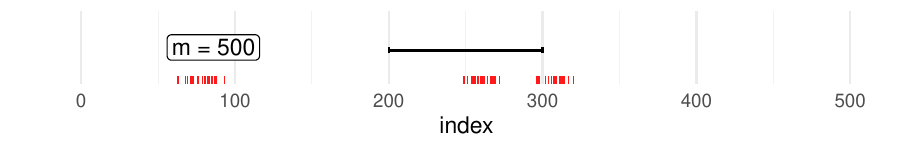}
    \end{minipage}
    \hfill
    \begin{minipage}{0.98\textwidth}
        \includegraphics[width=\linewidth]{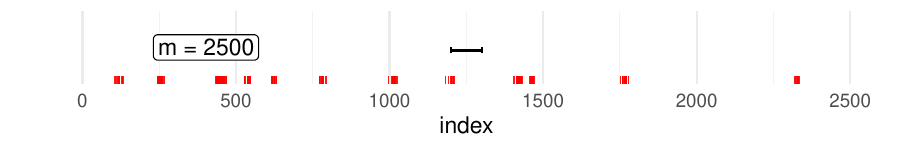}
    \end{minipage}
    \hfill
    \begin{minipage}{0.98\textwidth}
        \includegraphics[width=\linewidth]{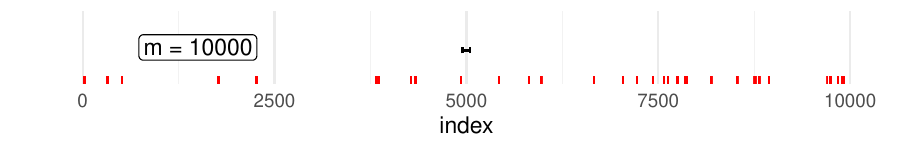}
    \end{minipage}
    \caption{A rug plot of one generation of the non-nulls $\cH_1$ in the simulation setting of Figure \ref{fig:banded_clustered}. A ruler of length $b' = 100$ is also shown, representing the size of the typical 
    neighborhood. 
    }
    \label{fig:thomasrug}
\end{figure}

Due the somewhat adversarial simulation setting, when the signal strength is fixed, both IndBH and $\IndBH^{(3)}$ perform more poorly compared to BH up to $m = 500$, but this graph is quite dense when $b' = 100$ is the size of an average neighborhood. The performance recovers starting at $m = 2500$.

\subsection{Ensuring FDR control.}\label{sec:fdrexpr}

In Figure \ref{fig:inflating_blocks}, we return to block dependence, but now with distributions that significantly inflate the FDR of the BH procedure. The distributions we chose are rather contrived: an ``adversarial'' distribution of $p$-values, defined in Appendix \ref{app:bydlower}, as well as one-sided $p$-values from a negatively correlated Gaussian. In this section, we take the null proportion $\pi_0 = 1$, so the remaining relevant parameters are the block size $b$ and, for the Gaussian, the equicorrelation $\rho$, which must be negative to inflate the FDR. 

We checked the FDR and rejection ratio for $m \in \{3, 6, 9, 18, 27\}$. In all these cases, IndBH and $\IndBH^{(3)}$ seem to behave the same.

\begin{figure}[htbp]
    \centering
    \includegraphics{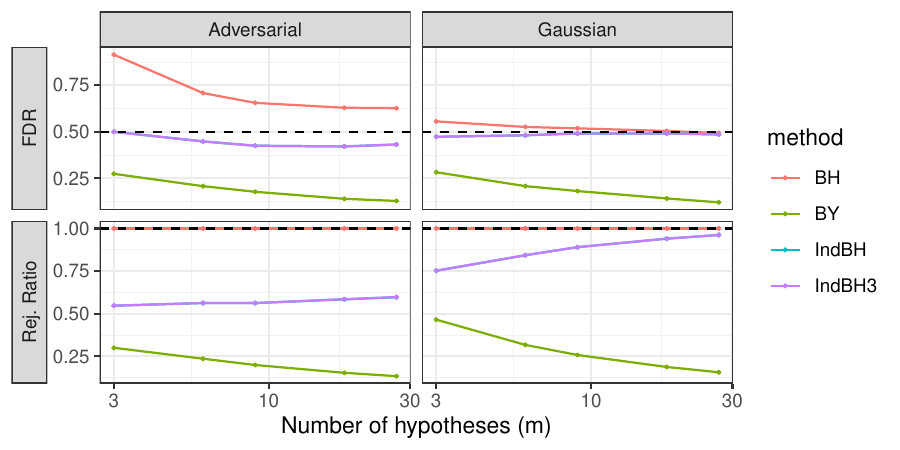}
    \caption{Results with distributions designed to inflate FDR. FDR control level set to $\alpha = 0.5$. Simulation parameters set to $b = 3$ and $\rho = -0.354$.}\label{fig:inflating_blocks}
\end{figure}

\subsection{A real data example.}\label{sec:realexpr}

In this section, we work through an analysis using our methods on genome-wide association study (GWAS) data. Specifically, we use $p$-values from the schizophrenia GWAS of \citet{ripkeBiologicalInsights1082014}. 
Each individual $p$-value encodes associations between an individual single-nucleotide polymorphism (SNP) with schizophrenia. 
In our analysis, we mimicked the approach of 
\citet{yurkoSelectiveInferenceApproach2020a} to subset to only those SNPs which 
appeared as eQTLs in the BrainVar study of \citet{werlingWholeGenomeRNASequencing2020}, 
i.e. those SNPs which were associated with gene expression in the developing brain. The raw dataset contained over 8 million $p$-values, each representing a SNP, but after 
filtering for $\text{INFO score} > 0.6$ and applying the BrainVar filter, we were left 
with about 140,000. 

GWAS $p$-values are often not independent due to linkage disequilibrium (LD), which is 
the tendency for some SNPs to be inherited together. For a given population, this is often measured by $r^2$: the correlation between number of alleles at two SNP locations 
\citep{slatkinLinkageDisequilibriumUnderstanding2008}. Such dependence formally 
invalidates the guarantees of the BH procedure.

\citet{purcellPLINKToolSet2007} propose using the PLINK software to find SNPs which 
are in low LD with each other and
``prune to a reduced subset of approximately independent SNPs'', a popular technique. 
But instead of performing this LD pruning, we instead used LD to generate a dependency graph between the SNP $p$-values---we drew an edge between two SNPs if and only if they are on the same chromosome 
\textit{and} their LD as measured by $r^2$ is greater than $0.2$, which is the PLINK default. As a reference to compute $r^2$, we used the European Ancestry data from 1000 Genomes, Phase 3 \citep{fairleyInternationalGenomeSample2020}. Results are shown in Figure \ref{fig:real_data}.

\begin{figure}[htbp]
    \centering
    \hspace*{-1.5cm} 
    \includegraphics[width=1.2\textwidth]{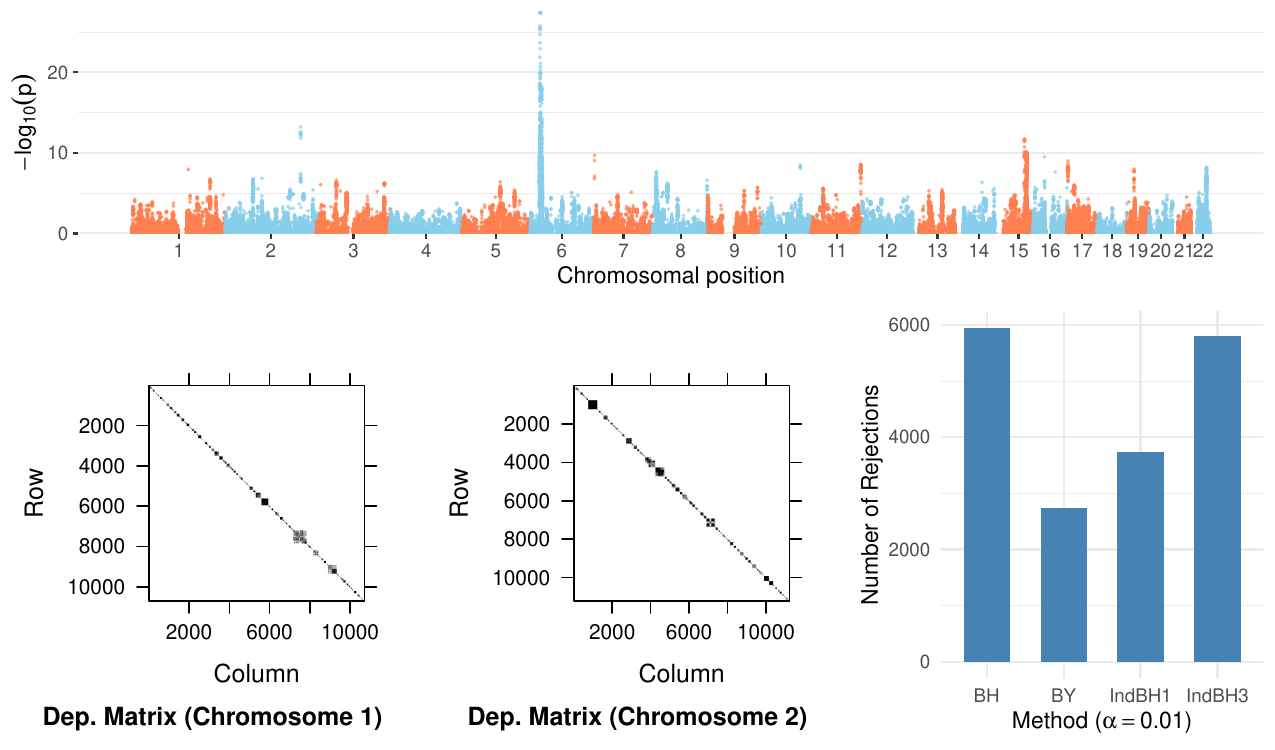}
    \caption{\emph{Top}: A Manhattan plot of the SNP $p$-values from the schizophrenia GWAS of \citet{ripkeBiologicalInsights1082014}, after filtering to about 140,000 $p$-values. \emph{Bottom left, middle}: Dependency matrices for SNP $p$-values within a single chromosome. (SNPs on different chromosomes are assumed independent). \emph{Bottom right}: Results for each method.}\label{fig:real_data}
\end{figure}

As will often be the case, the dependency matrix in this application is sparse, so that $\IndBH^{(3)}$ retains most of the power of BH. In fact, there seems to be a block dependence structure, not only between chromosomes but even within a \emph{single} chromosome. This is likely a manifestation of haplotype blocks in the human genome \citep{wallHaplotypeBlocksLinkage2003a}.



\section{Discussion}\label{sec:discussion}

In this work, we have developed procedures for FDR control in multiple testing 
under a natural dependence constraint: a dependency graph on $p$-values. We now 
discuss connections to the literature and compare to the
the BY and the BH procedure. 



\subsection{Connections to existing frameworks}\label{sec:connections}

Let us now understand existing theoretical frameworks for achieving
finite-sample FDR control. We mostly focus on those of
\citet{blanchardTwoSimpleSufficient2008b} and 
\citet{fithianConditionalCalibrationFalse2020b}, though related results were also derived by 
\citet{tamhaneGeneralizedStepupdownMultiple1998}, \citet{sarkarResultsFalseDiscovery2002a}, and 
\citet{finnerDependencyFalseDiscovery2007}. 


\subsubsection{Two simple sufficient conditions for FDR control}

Consider the following notions originally proposed by \citet{blanchardTwoSimpleSufficient2008b}. 
Given a ``shape function'' $s: \mathbb{R} \to \mathbb{R}$, a procedure $\cR_\alpha$ is 
$s$-self-consistent, or $s$-SC, if $i \in \cR_\alpha(p) \Rightarrow p_i \leq \alpha s\big(|\cR_\alpha(p)|\big)/m$. 
Also, the pair of random variables $(p_i, r^*)$ satisfies $s$-dependency control, or $s$-DC, if
\[
    \mathbb{E}\left[\frac{1\{p_i \leq c s(r^*)\}}{r^*}\right] \leq c.
\]
Henceforth, we assume $s(r) := r$, and no longer mention the shape function $s$. Property 
SC then becomes exactly the condition \ref{item:SC}. 

\citet{blanchardTwoSimpleSufficient2008b} show that whenever $\cR_\alpha$ is SC 
and $(p_i, |\cR(p)|)$ is DC for all $i \in \cH_0$, then $\cR_\alpha$ controls the FDR. 
Here, our work departs from theirs, as 
our $\mathbb{D}$-adaptivity condition does not necessarily imply that $(p_i, |\cR_\alpha(p)|)$ satisfies DC. 
It does, however, imply that $(p_i, |\cR_\alpha(\bone^{\Nio} p)|)$ satisfies DC, as seen 
in the proof of Theorem \ref{thm:pprdcontrol}. 

The self-consistency gap \eqref{eq:rejlowerbound}, which makes procedures conservative,
is not relevant under independence or PRD, because BH has no self-consistency gap 
and is easy to compute, while being the most liberal self-consistent procedure. 
In our setting, the most liberal $\mathbb{D}$-adapted procedure is $\SU_\mathbb{D}$, 
but it did not seem easy to compute. We instead 
started from $\IndBH_\mathbb{D}$, an efficiently computable procedure with 
reasonable performance, and applied the gap chasing iterates to improve it. 

\subsubsection{Conditional calibration}

The gap chasing iteration \eqref{eq:gapchasingiterate} can be viewed 
as an instantiation of the conditional
calibration framework of \citet{fithianConditionalCalibrationFalse2020b}, 
which we now describe. 

We begin with general data $X \in \cX$ from which $p$-values $p_1, \dots, 
p_m$ are computed. For each $i$, write down three ingredients: a calibratable, data-dependent threshold
$\tau_i(c; X)$ which is non-decreasing in $c$ for all $X$, satisfying $\tau_i(0; X) = 0$; 
a conditioning statistic $S_i$ on which $p_i$ is superuniform conditional on $S_i$; 
and any function $\beta_i: \cX \to \mathbb{R}$ at all. Then, suppose $\hat c(S_i)$ is chosen such that
\begin{equation}\label{eq:calibeq}
    \mathbb{E} \left[ \frac{1\{p_i \leq \tau_i(\hat c(S_i); X)\}}{\beta_i(X)} \mid S_i \right] \leq \alpha/m \quad \text{for all $i$.}
\end{equation}
Consider the procedure $\cR^\textup{CC}_\alpha(X) := \big\{i : p_i \leq \tau_i(\hat c(S_i); X) \big\}$. 
\citet{fithianConditionalCalibrationFalse2020b} show that $\cR^\textup{CC}_\alpha(X)$
``almost'' controls FDR in finite samples---an auxiliary randomization step is also required. 
But $\cR^\textup{CC}_\alpha(X)$ does control FDR in finite samples if $\beta_i$ 
was chosen such that
\begin{equation}\label{eq:safecondcal}
    i \in \cR^\textup{CC}_\alpha(X) \Rightarrow \beta_i(X) \leq |\cR^\textup{CC}_\alpha(X)| \quad \text{almost surely.}
\end{equation}
This says that each $\beta_i$ corresponds to a rejection lower bound, but
not necessarily that $\cR^\textup{CC}_\alpha(X)$ is self-consistent in general.

In our setting, let $X = p$ be our dataset, and $\cR_\alpha^\textup{init}(X)$ be any initial procedure. Consider the threshold $\tau_i(c; X) = c$ and conditioning statistic $S_i = p_{N_i^\mathsf{c}}$. 
It can be shown that calibrating based on
\[
\beta_i(X) = \inf_{\{p_i: p_i \leq \alpha |\{i\} \cup \cR^\textup{init}_\alpha(\bone^{\Nio} p)|/m\}} \big|\{i\} \cup \cR^\textup{init}_\alpha(\bone^{\Nio} p)\big|
\]
is equivalent to applying the gap-chasing iterations \eqref{eq:gapchasingiterate} on $\cR^\textup{init}_\alpha$.
The resulting $\cR^\textup{CC}_\alpha(X)$ satisfies \eqref{eq:safecondcal} 
when $\cR_\alpha^\textup{init}$ is $\mathbb{D}$-adapted,
again showing the FDR control of $\IndBH^{(k)}$ for $k > 1$. 

A different choice, namely 
$\beta_i(X) =  |\{i\} \cup \cR_\alpha^\textup{init}(\bone^{\Nio} p)|$, leads
to a larger $\hat c(S_i)$, and hence a more powerful procedure than ours. 
But this comes at much greater computational cost. The condition
in \eqref{eq:calibeq} becomes
\begin{equation}\label{eq:bettercondcal}
    \int_0^1 \frac{1\{p_i \leq \hat c (S_i) \}}
    {|\{i\} \cup \cR_\alpha^\textup{init}(\bone^{\Nio} p^{i \gets t})|} \, dt
    \leq \alpha/m \quad \text{for all $i$,}
\end{equation}
where $p^{i \gets t})$ is equal to $p$ except that $p_i$ is set to $t$. We apparently must compute $\cR_\alpha^\textup{init}(\bone^{\Nio} p^{i \gets t})$ for all $t \in [0, 1]$, 
whereas gap chasing only 
requires a single call to $\cR_\alpha^\textup{init}(\bone^{\Nio} p)$ for each $i$.

We also remark that this choice does \emph{not} lead to a self-consistent 
procedure as defined in \ref{item:SC}, demonstrating the suboptimality of self-consistency 
as an algorithm design principle. 
(\citet{solariMinimallyAdaptiveBH2017} also discovered a procedure
which is not self-consistent, 
controls FDR under PRDS, and dominates the BH procedure.)

\subsection{Comparisons and Conclusion}

To conclude, we will make a few practical observations
on the two most well-known 
FDR controlling methods in the literature, the BY and the BH procedure, 
by comparing them to our methods. 

\subsubsection{Versus BY}

As discussed in the Introduction, the correction of the $\BY(\alpha)$ procedure
might seem unnecessarily severe, but in practice it is often better than running $\Bonf(\alpha)$.
In our simulations, however, $\BY(\alpha)$ is often dominated by $\IndBH(\alpha)$, including
in the case of the complete graph, when $\IndBH(\alpha)$ is equivalent to $\Bonf(\alpha)$. 
Let us now explain this briefly. 

By monotonicity, both $\BY(\alpha)$ and $\IndBH(\alpha)$ reject hypothesis $H_i$ if 
$p_i \leq \alpha \beta_{i, \alpha}(p_{-i})/m$, for some function 
$\beta_{i,\alpha} : [0,1]^{m - 1} \to [0,m]$. But where $\BY$ lowers 
FDR to its nominal level by decreasing $\alpha$, $\IndBH$ directly 
limits the sensitivity of $\beta_{i, \alpha}(p_{-i})$ to dependent $p$-values in its argument, 
as shown explicitly in the proof of Theorem \ref{thm:control}.

If the dependent neighborhood $N_i$ is small and the non-null signal pattern is 
reasonably sparse---both true in typical problems---then it is unlikely 
many small $p$-values are in $N_i$, so they 
can still assist in the rejection of $H_i$, and IndBH makes more rejections. But when the $N_i$ become very large, or there are many non-nulls, or both, then true signal $p$-values could get masked and $\BY$ can
make more rejections. 

In the case of the 
complete graph, $\IndBH$ reduces to $\Bonf(\alpha)$, which wins if $\BY(\alpha)$ makes less than $\approx \log(m)$ rejections. This only occurs when the signals are very sparse, but when $m$ is small, even 
a non-null proportion of $1 - \pi_0 = 0.1$ can be sparse enough, which is what occurs in our simulations.

Ultimately, in typical multiple testing problems, 
both $\mathbb{D}$ and the signal pattern are at least moderately sparse, 
which seems to favor the $\IndBH$ procedure over $\BY$. 
But in these problems $\IndBH$  
performs about as well as BH, so why not just run BH? 

\subsubsection{Versus BH}

Note that IndBH, and all other $\mathbb{D}$-adapted procedures, are always less powerful than BH. But a belief among many researchers---both theoretical and applied---is that the uncorrected BH procedure can generally be used in practice without inflating the FDR much, see e.g. \citet{goemanMultipleHypothesisTesting2014}. If this is true, then these $\mathbb{D}$-adapted procedures may be less appealing methodologically. 

However, our methods perform similarly to BH in practice with a provable guarantee, while 
the theoretical literature has not
fully confirmed the robustness 
of BH in finite samples. Some progress is being made---see prior work, in particular \citet{chi2022multiple}---but only sparsely. A particularly important unproved conjecture in multiple testing is the \emph{exact} FDR control for two-sided testing of multivariate Gaussian means \citep{reiner-benaimFDRControlBH2007b, 
rouxInferenceGraphesPar2018, sarkarControllingFalseDiscovery2023a}, which is a basic 
setting often assumed in practice. 

For now, one practical role for our methods could be as a robustness check. Given a dependency graph $\mathbb{D}$, if the output of 
$\IndBH^{(3)}$ and BH are very different, then the user could pause and ask 
whether there might be an unfavorable dependence structure that 
$\IndBH^{(3)}$ is protecting against (that is, besides the multivariate Gaussian).

Finally, it bears reminding that even when BH controls FDR, 
the FDP may have high variance under dependence, making the FDR control misleading: 
see for example \citet{klugerCentralLimitTheorem2024a}. 
We look forward to future work that will further illuminate the safety of the BH procedure under dependence.

\section*{Software and Reproducibility}

Our \texttt{R} package \texttt{depgraphFDR} is 
available at 
\begin{center}
  \url{https://github.com/drewtnguyen/depgraphFDR}
\end{center}
Code and instructions to reproduce the plots in this paper is available at 
\begin{center}
  \url{https://github.com/drewtnguyen/depgraphFDRpaper}
\end{center}

\section*{Acknowledgments}

D.T.N. thanks Etienne Roquain and Rina Barber for helpful conversations.

\bibliographystyle{plainnat}
\bibliography{graphbh.bib}

\appendix


\section{Proofs}\label{app:proofs}

For the proofs here and elsewhere in the appendix, we define
\[
    p^{i \gets t} := (p_1, \dots, p_{i-1}, t, p_{i+1}, \dots, p_m).
\]

\pprd*

For the proof, we will use the following ``superuniformity lemma''. Originally shown by \citet{blanchardTwoSimpleSufficient2008b}, the following statement is modified from 
Lemma 1(b) of \citet{ramdasUnifiedTreatmentMultiple2019a}. 

\begin{lemma}\label{lem:supuni}
    For any coordinate-wise non-increasing function $f: [0,1]^m \to [0, \infty)$, if $p$ is PRD on a set $A$, then whenever $i \in A$ and $p_i$ is superuniform,
    \[
        \mathbb{E}\left[\frac{1\{p_i \leq f(p) \}}{f(p)} \right] \leq 1
    \]
    where we use the convention that $0/0 = 1$.
\end{lemma}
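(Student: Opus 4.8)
The plan is to reduce the statement to a ``super-uniformity at a random threshold'' inequality and then use the positive dependence to control the coupling between $p_i$ and that threshold. First I would dispose of the degenerate case: super-uniformity gives $\PP(p_i = 0) = 0$, so the convention $0/0 = 1$ matters only on a null event and does not affect the bound. The key structural observation is a \emph{crossing-threshold} reduction. For fixed $p_{-i}$ the map $t \mapsto f(t, p_{-i})$ is non-increasing while $t \mapsto t$ is increasing, so the event $\{p_i \le f(p)\}$ equals $\{p_i \le T(p_{-i})\}$ for a single crossing point $T(p_{-i})$ depending on $p_{-i}$ alone. On that event $f(p) = f(p_i, p_{-i}) \ge f(T(p_{-i}), p_{-i}) \ge T(p_{-i})$, so
\[
\frac{1\{p_i \le f(p)\}}{f(p)} \le \frac{1\{p_i \le T(p_{-i})\}}{T(p_{-i})}.
\]
Since $f$ is coordinatewise non-increasing, so is $T$ as a function of $p_{-i}$; hence every superlevel set $\{p : T(p_{-i}) \ge v\}$ is a decreasing (downward-closed) set not involving $p_i$. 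It therefore suffices to show $\EE\big[1\{p_i \le T\}/T\big] \le 1$ for such a $T$.

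Under independence this is immediate: conditioning on $p_{-i}$ fixes $T$, and super-uniformity gives $\PP(p_i \le T \mid p_{-i}) \le T$, so the conditional expectation is at most $1$. All the work is in the dependent case, where $T$ is correlated with $p_i$, and this is the step I expect to be the main obstacle.

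For that step I would decompose over the levels of $T$ and apply summation by parts. In the FDR application $f$, and hence $T$, takes only finitely many values (e.g. $f = \alpha R/m$ with $R$ a rejection count), so this is a finite sum; the general continuous case follows by a Riemann--Stieltjes or approximation argument. Writing the expectation as a sum of terms $\PP(p_i \le v,\, T \ge v)$ over levels $v$, I would combine two monotonicities: super-uniformity, $\PP(p_i \le v) \le v$, and the consequence of PRD that, because $\{T \ge v\}$ is a decreasing set and $i \in A$, the map $c \mapsto \PP(T \ge v \mid p_i \le c)$ is non-increasing. The delicate part is the Abel-summation bookkeeping that turns these two facts into a telescoping bound evaluating to at most $1$; this is precisely the Benjamini--Yekutieli PRDS computation, adapted to the present setting, and it is where positive (rather than arbitrary) dependence is essential.
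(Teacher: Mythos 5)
First, a point of reference: the paper does \emph{not} prove this lemma at all---it is imported, with attribution, from \citet{blanchardTwoSimpleSufficient2008b} and Lemma 1(b) of \citet{ramdasUnifiedTreatmentMultiple2019a}---so your attempt can only be compared with the standard literature argument, which is indeed the level-set decomposition plus telescoping that you outline. Your reconstruction is essentially the right one, and you correctly isolate the two ingredients (superuniformity and the PRD consequence that $c \mapsto \PP(T \ge v \mid p_i \le c)$ is non-increasing for decreasing sets $\{T \ge v\}$). One structural remark: your crossing-point reduction to $T(p_{-i})$ is dispensable under the paper's definition of PRD, since there the increasing sets $K \subseteq \RR^m$ are allowed to involve coordinate $i$; consequently $\{f(p) \ge v\}$ is already a decreasing set to which PRD applies, and the whole argument can be run on $f(p)$ directly. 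The reduction would earn its keep only under a weaker PRDS notion restricted to sets not involving $p_i$ (closer to the paper's PPRD).

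Two concrete flaws. (i) The chain ``$f(p) \ge f(T(p_{-i}),p_{-i}) \ge T(p_{-i})$'' is not valid as written: if $f(t,p_{-i}) = c$ for $t < c$ and $f(t,p_{-i}) = 0$ for $t \ge c$, then $T(p_{-i}) = c$ but $f(T(p_{-i}),p_{-i}) = 0$; likewise the two events need not be equal, only $\{p_i \le f(p)\} \subseteq \{p_i \le T(p_{-i})\}$ holds. The conclusion you need---that $f(p) \ge T(p_{-i})$ on the event $\{p_i \le f(p)\}$---is still true, but requires a different argument: if $t \le f(t,p_{-i}) < T(p_{-i})$, pick $s$ with $s \le f(s,p_{-i})$ and $s > f(t,p_{-i}) \ge t$; then $s \le f(s,p_{-i}) \le f(t,p_{-i}) < s$, a contradiction. (ii) More seriously, the ``delicate Abel-summation bookkeeping'' that you defer to Benjamini--Yekutieli \emph{is} the content of this lemma: the lemma is precisely the abstract form of that computation, so invoking it is close to assuming what is to be proved, and a complete proof must execute it. For the record, it does close. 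If $T$ takes values $v_1 < \dots < v_k$, set $G_j(c) := \PP(T \ge v_j \mid p_i \le c)$ with $G_{k+1} \equiv 0$; then
\[
\EE\left[\frac{1\{p_i \le T\}}{T}\right]
= \sum_{j=1}^{k} \frac{\PP(p_i \le v_j)}{v_j}\left(G_j(v_j) - G_{j+1}(v_j)\right)
\le \sum_{j=1}^{k} \left(G_j(v_j) - G_{j+1}(v_j)\right)
= G_1(v_1) + \sum_{j=2}^{k}\left(G_j(v_j) - G_j(v_{j-1})\right)
\le 1,
\]
where the first inequality uses superuniformity termwise (valid because nestedness makes each term nonnegative), and the final bound uses that each $G_j$ is non-increasing, so every summand in the shifted sum is $\le 0$. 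The general case (arbitrary $f$, hence $T$ with uncountably many possible values) then needs an explicit limiting argument---e.g.\ dyadic round-up $T^n \downarrow T$ plus Fatou---which you gesture at but should also write out, since the lemma as stated is not restricted to finitely many levels.
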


\begin{proof}[Proof of Theorem \ref{thm:pprdcontrol}]
    Following the proof of Theorem \ref{thm:control}, it is again sufficient to show that 
    \begin{equation*}
            \EE\left[
        \frac{1\{i \in \cR(p)\}}{|\cR(p)|}
            \right]
            \leq \alpha/m
    \end{equation*}
    for $i \in \cH_0$. Let $q = \oneNio p$, the $p$-value vector with node $i$'s neighborhood masked (excepting $i$ itself). Since $p$ is PPRD on $\cH_0$ w.r.t to $\mathbb{D}$, we have that  $q$ is PRD on $\cH_0$, and so
\begin{align*}
     \EE\left[
    \frac{1\{i \in \cR(p)\}}{|\cR(p)|}
        \right]
    &\leq \EE\left[
    \frac{1 \{i \in \cR(q)\}}{|\cR(q)| } 
            \right] &&\text{\ref{item:Mon} and \ref{item:NB}}\\ 
    &= \EE\left[
    \frac{1 \{q_i \leq \alpha |\cR(q)| \big/ m\}}{|\cR(q)| } 
        \right] &&\text{\ref{item:SC}} \\
    &\leq \alpha/m &&\text{by \ref{item:Mon} and Lemma \ref{lem:supuni}.} 
\end{align*}
In this, we only used that $p \mapsto |\cR_\alpha(p)|$ is 
coordinate-wise nonincreasing, 
which is weaker than
Assumption \ref{item:Mon}. 
\end{proof}

\pprdtstat*

\begin{proof}[Proof of Proposition \ref{prop:pprdtstat}]
    \citet{fithianConditionalCalibrationFalse2020b} demonstrate the formula
    \begin{equation}\label{eq:tstatrelation}
        T_j^2 = U_{i,j}^2 \left(
            \frac{\nu + T_i^2}{\Psi_{j,j} V_i} 
        \right) + 
        \frac{\Psi_{i,j} T_i U_{i,j}}{\Psi_{i,i}}  \sqrt{
            \frac{\nu + T_j^2}{\Psi_{j,j} V_i} 
        }
        + \left(\frac{\Psi_{i,j}}{\Psi_{i,i}} T_i\right)^2
    \end{equation}
    for any pair $i,j$, where $U_i = Z_{-i} - \Psi_{-i,i} \Psi^{-1}_{i,i} Z_i$ and 
    $V_i = \nu \hat \sigma^2 + Z_i^2/\Psi_{i,i}$.

    Now for any $i \in [m]$ and increasing 
    set $K_i \subset \mathbb{R}^{|N_i^\mathsf{c}|}$, define
    \begin{align*}
        f(s) := \mathbb{P}(T \in K_i \mid T_i^2 = s) 
        &= \mathbb{E} \left[\mathbb{E}\Big[ 1\{T_{N_i^\mathsf{c}} \in K_i \mid U_i, V_i, T_i^2 = s\}\Big]\right],
    \end{align*}
    Because $T_j^2$ is increasing in $T_i^2$ whenever $\Psi_{i,j} = 0$ in 
    \eqref{eq:tstatrelation}, it follows that the indicator is $1$ whenever $s$ is greater
    than some threshold depending on $U_i, V_i$, and $K_i$, so that $f(s)$ is
    increasing in $s$ and $T^2$ is PPRD with respect to $\mathbb{D}$.

\end{proof}

\indbhrep*

\begin{proof}[Proof of Proposition \ref{prop:indbhrep}]
    First, if 
    $i \in \cR^\IndBH(p)$, there exists an independent set $J$ containing $i$ 
    such that $p_j \leq \alpha |J|/m$ for all $j \in J$. Then
    $|J| \leq  \max \{r :  |I_i(r)| \geq r \}$, since 
    $J' = I_i(|J|)$ has size at least $|J|$, because
    \[
        J \in \Big\{\{i\} \cup B : B \in \Ind(\mathbb{D}[Q_{-i}(|J|)])\Big\},
    \]
    and $J'$ is the largest member of this set, by definition. Then $|J| \leq \beta_{\alpha,i}^\IndBH(p_{-i})$, so that $p_i \leq \alpha \beta_{\alpha,i}^\IndBH(p_{-i})/m$. 

    On the other hand, let $\cB_i =  I_i(\beta_{\alpha,i}^\IndBH(p_{-i}))$, which satisfies 
    $|\cB_i| = \beta_{\alpha,i}^\IndBH(p_{-i})$. We can hence write 
    $\{p_i \leq \alpha \betaai^\IndBH(p_{-i})/m\} \Leftrightarrow \{p_i \leq \alpha |\cB_i|/m\}$. 
    The set $\cB_i$ is an independent set which contains $i$
    and where $p_j \leq \alpha |\cB_i|/m$ for all $j \in \cB_i$. But this is 
    a certificate set as defined in Section \ref{sec:indbh}, so we must have $i \in \cR^\IndBH_\alpha(p)$. 
\end{proof}

\ignorebh*

To justify this proposition, we use a lemma from which it quickly follows. To 
avoid excessive notation, we will simply define 
$\cR^{\IndBH^{(k)}}_{\alpha, -i}(p_{-i}) \subset [m] \setminus \{i\} $ to be the result of running $\IndBH$ on the 
the subvector $p_{-i}$ and subgraph $\mathbb{D}[[m] \setminus \{i\}]$, but returning an index set corresponding
to that of the original vector $p$. We define the quantity $\cR^\BH_{\alpha, -i}(p_{-i})$ 
similarly, but without reference to the graph.
\begin{lemma}\label{lem:settoone}
    For any $k \geq 1$, on the event that $i \notin \cR_\alpha^{\IndBH^{(k)}}(p)$, we have the identity
    \[
        \cR_\alpha^{\IndBH^{(k)}}(p) = \cR_\alpha^{\IndBH^{(k)}}(\bone^{\{i\}} p) = 
        \cR_{\alpha',-i}^{\IndBH^{(k)}}(p_{-i}),
    \]
    where $\alpha' = \alpha (m - 1)/m.$
\end{lemma}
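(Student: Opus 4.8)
The plan is to prove both equalities by induction on $k$, handling the base case $k=1$ through the certificate-set characterization of $\IndBH$ from Section~\ref{sec:indbh} and the inductive step through the recursive definition~\eqref{eq:indbhkdef}. The single arithmetic fact underlying everything is that the level rescaling makes thresholds coincide: $\alpha r/m = \alpha' r/(m-1)$ for every $r$, so a count-$r$ rejection threshold on $[m]$ at level $\alpha$ is literally the same number as the corresponding threshold on $[m]\setminus\{i\}$ at level $\alpha'$. I would also record at the outset the ``inert node'' observation: since $\alpha<1$, every local threshold of any $\IndBH^{(k)}$ is at most $\alpha<1$, so an index whose $p$-value equals $1$ is never rejected.

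For the first equality $\cR_\alpha^{\IndBH^{(k)}}(p)=\cR_\alpha^{\IndBH^{(k)}}(\bone^{\{i\}}p)$ on the event $i\notin\cR_\alpha^{\IndBH^{(k)}}(p)$, one inclusion is immediate from monotonicity (Propositions~\ref{prop:IndBH1} and~\ref{prop:IndBHk}): since $\bone^{\{i\}}p\succeq p$, we get $\cR_\alpha^{\IndBH^{(k)}}(\bone^{\{i\}}p)\subseteq\cR_\alpha^{\IndBH^{(k)}}(p)$. For the reverse, the base case uses that if $j$ is rejected by $\IndBH$ then some certificate set $C\ni j$ witnesses it; because $i$ is unrejected it lies in no certificate set, so $C\subseteq[m]\setminus\{i\}$, and since the certificate only constrains the $p$-values in $C$, which $\bone^{\{i\}}$ leaves untouched, $C$ still certifies $j$ under $\bone^{\{i\}}p$. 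In the inductive step at level $k+1$, I would fix $j\neq i$ and compare $j$'s thresholds under $p$ and $p'=\bone^{\{i\}}p$, which reduces to comparing $\cR_\alpha^{\IndBH^{(k)}}(\bone^{\Njo}p)$ with $\cR_\alpha^{\IndBH^{(k)}}(\bone^{\Njo}p')$. If $i\in\Njo$ these masked vectors are identical; if $i\notin\Njo$ then $\bone^{\Njo}p'=\bone^{\{i\}}(\bone^{\Njo}p)$, and I would apply the induction hypothesis (first equality at level $k$) to the vector $\bone^{\Njo}p$, whose hypothesis $i\notin\cR_\alpha^{\IndBH^{(k)}}(\bone^{\Njo}p)$ follows from $i\notin\cR_\alpha^{\IndBH^{(k+1)}}(p)$ via the nesting $\cR^{(k)}\subseteq\cR^{(k+1)}$ and monotonicity applied to $\bone^{\Njo}p\succeq p$. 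For $j\neq i$ this equalizes the two thresholds, and since $i$ belongs to neither side (unrejected under $p$ by hypothesis, inert under $p'$), the two rejection sets coincide.

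For the second equality $\cR_\alpha^{\IndBH^{(k)}}(\bone^{\{i\}}p)=\cR_{\alpha',-i}^{\IndBH^{(k)}}(p_{-i})$, which holds unconditionally, the key point is that masking $p_i\gets 1$ makes node $i$ inert, so it can be deleted provided the level is rescaled. In the base case, any certificate set for $\IndBH$ on $\bone^{\{i\}}p$ avoids $i$ (as $p_i=1$ exceeds every threshold), hence is an independent set of the induced subgraph $\mathbb{D}[[m]\setminus\{i\}]$, and its condition $p_l\le\alpha|C|/m=\alpha'|C|/(m-1)$ is exactly the certificate condition for $\IndBH$ at level $\alpha'$ on $p_{-i}$; this correspondence is a bijection. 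For the inductive step I would fix $j\neq i$ and note that in the subgraph the punctured neighborhood of $j$ is $\Njo\setminus\{i\}$, so the mask driving $j$'s reduced threshold, $\bone^{\Njo\setminus\{i\}}p_{-i}$, equals $(\bone^{\{i\}}\bone^{\Njo}p)_{-i}$. Applying the induction hypothesis (second equality at level $k$) to the vector $\bone^{\Njo}p$ identifies $\cR_\alpha^{\IndBH^{(k)}}(\bone^{\{i\}}\bone^{\Njo}p)$, which is what the full procedure's threshold for $j$ on input $\bone^{\{i\}}p$ uses, with $\cR_{\alpha',-i}^{\IndBH^{(k)}}(\bone^{\Njo\setminus\{i\}}p_{-i})$; the threshold identity $\alpha/m=\alpha'/(m-1)$ then matches $j$'s thresholds on the two sides, and since $p_j$ is unchanged, $j$'s rejection status agrees.

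The main obstacle is the bookkeeping in the two inductive steps: tracking how the mask $\bone^{\Njo}$ composes with $\bone^{\{i\}}$, how punctured neighborhoods shrink from $\Njo$ to $\Njo\setminus\{i\}$ upon passing to the induced subgraph, and how the rescaling keeps every threshold numerically identical across the two distinct levels $\alpha$ and $\alpha'$. None of this is deep, but aligning the index sets and the two level parameters at each recursion depth is exactly where the argument could slip, so I would isolate the threshold identity and the inert-node observation as explicit preliminaries and invoke them uniformly.
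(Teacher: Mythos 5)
Your proof is correct and follows essentially the same route as the paper's: an induction on $k$ whose base case reduces to standard facts about BH (you phrase them via certificate sets, the paper via the union-of-BH-rejection-sets definition \eqref{eq:indbhdef}), and whose inductive step propagates the non-rejection hypothesis through the masked vectors $\bone^{\Njo}p$ by nesting plus monotonicity and then matches thresholds via $\alpha/m=\alpha'/(m-1)$, exactly as in the paper. The only organizational difference---splitting the chain into a conditional masking identity and an unconditional deletion identity---is cosmetic, though note that your unconditional second equality genuinely requires the standing assumption $\alpha<1$ (your ``inert node'' observation), which is harmless here since the lemma's hypothesis is vacuous when $\alpha=1$.
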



\begin{proof}[Proof of Lemma \ref{lem:settoone}.]
    We will prove this by induction. For $k = 1$, recall the definition of IndBH from Equation \eqref{eq:indbhdef}:
    \[
        \cR_\alpha^{\IndBH^{(1)}}(p) = \bigcup_{I \in \Ind(\mathbb{D})} \cR_\alpha^{\BH}(\bone^{I^\mathsf{c}} p)
    \] which is just a 
    union of BH rejection sets. It is a well known fact from the theory of multiple testing that whenever $i \notin \cR_\alpha^{\BH}(p)$, we have $\cR_\alpha^{\BH}(p) = \cR_\alpha^{\BH}(\bone^{\{i\}}p) = \cR_{\alpha', -i}^{\BH}(p_{-i})$. Now suppose $i \notin \cR_\alpha^{\IndBH^{(1)}}(p)$. Then it is in none of BH rejection sets that make up the union of Equation \eqref{eq:indbhdef}. So using the well-known fact, we have
    \[
        \bigcup_{I \in \Ind(\mathbb{D})} \cR_\alpha^{\BH}(\bone^{I^\mathsf{c}} p)
        = \bigcup_{I \in \Ind(\mathbb{D})} \cR_\alpha^{\BH}(\bone^{\{i\}} \bone^{I^\mathsf{c}} p)
        = \bigcup_{I \in \Ind(\mathbb{D})} \cR_{\alpha',-i}^{\BH}((\bone^{I^\mathsf{c}} p)_{-i})
    \]
    which shows the claim for $k = 1$, using the definition of IndBH. Now suppose it holds for some $k \geq 1$. Suppose $j \notin \cR_\alpha^{\IndBH^{(k+1)}}(p)$. Then by Proposition \ref{prop:IndBHk} and monotonicity, we also have $j \notin \cR_\alpha^{\IndBH^{(k)}}(\bone^{\Nio} p)$ for any $i$. We then have
    \begin{align*}
        \cR_\alpha^{\IndBH^{(k+1)}}(p) 
            &= \left\{i : p_i \leq \frac{\alpha |\{i\} \cup  \cR^{\IndBH^{(k)}}_\alpha(\bone^{\Nio} p)|}{m} \right\} \\
            &= \left\{i \neq j : p_i \leq \frac{\alpha |\{i\} \cup  \cR^{\IndBH^{(k)}}_\alpha(\bone^{\Nio} p)|}{m} \right\} \\
            &= \left\{i \neq j : p_i  \leq \frac{\alpha |\{i\} \cup  \cR^{\IndBH^{(k)}}_\alpha(\bone^{\{j\}} \bone^{\Nio} p)|}{m} \right\} \\
            &= \left\{i \neq j : p_i  \leq \frac{\alpha |\{i\} \cup  \cR^{\IndBH^{(k)}}_{\alpha', -j}((\bone^{\Nio} p)_{-j})|}{m} \right\} \\
            &= \left\{i \neq j : p_i  \leq \frac{\alpha' |\{i\} \cup  \cR^{\IndBH^{(k)}}_{\alpha', -j}((\bone^{\Nio} p)_{-j})|}{m - 1} \right\}.
    \end{align*}
    Here we used the induction hypothesis in the third and fourth equalities. The third expression on the right-hand side
    is precisely $\cR_\alpha^{\IndBH^{(k+1)}}(\bone^{\{i\}} p)$, and the last is $\cR_{\alpha',-i}^{\IndBH^{(k+1)}}(p_{-i})$, so we are done.
\end{proof}

\begin{proof}[Proof of Proposition \ref{prop:ignorebh}]
    Use Lemma $\ref{lem:settoone}$ repeatedly for every $i \notin Q(\bar r)$ to show the 
    claim. The reason why it holds for $\bar r = |\cR^\BH_\alpha(p)|$ is because 
    if $p_i > \alpha \bar r/m$, then $H_i$ cannot be rejected by BH. Because BH
    contains every $\mathbb{D}$-adapted procedure, then $H_i$ cannot be rejected by $\IndBH^{(k)}$ 
    for any $k$. 
\end{proof}

\indbhclique*

\begin{proof}[Proof of Proposition \ref{prop:indbhclique}]
    Let $\cR = \cR^{\IndBH_\mathbb{D}}_\alpha(p)$ and 
    \[
    \cR' = \{i : p_i \leq \alpha |\cR^\BH_\alpha(\bone^{M^\mathsf{c}}p)|/m\} = \{i : p_i \leq \alpha |C^*| /m\}
    \]
    where $C^* = |\cR^\BH_\alpha(\bone^{M^\mathsf{c}}p)|$. We will need to show that $\cR = \cR'$. 

    For the forward inclusion, recall from the discussion in Section \ref{sec:indbh} that $i \in \cR$ if there 
    exists a certificate set $C$ such that $i \in C$, and hence $p_i \leq \alpha |C|/m$ 
    by definition of certificate sets. In this setting, the certificate set with the largest cardinality
    is precisely $C^*$, so it follows that $p_i \leq \alpha |C^*|/m$ and $\cR \subset \cR'$. 

    For the backward inclusion, suppose that $p_i \leq \alpha |C^*| /m$. 
    Let $\kappa[i]$ be the index for the block that contains $i$. Then
    $C_i = (C^* \setminus \{j^*_{\kappa[i]} \}) \cup \{i\}$ has the same cardinality 
    as $C^*$, but remains a certificate set. Because it contains $i$, we know $i \in \cR$, 
    and because $i$ was arbitrary, $\cR' \subset \cR$.
\end{proof}


\section{Computational details}\label{app:computation}

Here we present our computational strategy for $\IndBH^{(k)}$ in full detail. The main algorithm is presented as Algorithm \ref{alg:IndbhK}, but we describe it here at a high level. 

After subsetting to the BH rejection set, we precompute a certain table $\mathbf{V}$, whose entries record the independence numbers of various different subgraphs, to be used as a reference when $\IndBH$ is eventually called. Then each time $\IndBH^{(\ell)}$ is called to determine the rejection status of each hypothesis $H_i$, we first run $\IndBH^{(\ell-1)}$ and then perform inexpensive inclusion/exclusion checks based on its rejection set. When these are inconclusive for a hypothesis $i$, we call $\IndBH^{(k-1)}$, but on the masked vector $\bone^{\Nio} p$. 

This continues until finally $\IndBH$ is called at the bottom of the recursion. In this case, we cheaply perturb the table $\mathbf{V}$ into the table $\widetilde{\mathbf{V}}$ to account for the masking, and again perform inclusion/exclusion checks for the rejection status of each $H_i$. These checks depend only on $\widetilde{\mathbf{V}}$. When they are inconclusive, we must exactly compute the IndBH rejection set using an expensive fallback. 

\begin{algorithm}[htbp]
    \DontPrintSemicolon
    \caption{Pseudocode for $\cR^{\IndBH^{(k)}_\mathbb{D}}_\alpha(p)$.}\label{alg:IndbhK}
    
    \SetKwProg{Def}{def}{:}{}
    
    \tcc{SETUP}
    \tcc{Initial filtering: Section \ref{app:filtering}}
    $p \gets p_{[\cR^\BH_\alpha(p)]}$\;
    $\mathbb{D} \gets \mathbb{D}[\cR^\BH_\alpha(p)]$\;
    $\alpha \gets \alpha |\cR^\BH_\alpha(p)|/m$\;
    \tcc{Precomputation: Section \ref{app:precomputation}}
    $\mathbf{V} \gets \texttt{precompute}(p, \mathbb{D})$
    
    \tcc{DEFINE MEMOIZED PROCEDURES}
    \Def{$\cR^{\textup{base}}(q)$}{ \tcp{Base case: an implementation of IndBH}
        $\widetilde{\mathbf{V}} \gets \texttt{update}(\mathbf{V})$ \tcp{Updating V: Section \ref{app:updateV}}
        \tcc{Inclusion exclusion checks: Section \ref{app:incluexcluIndBH}}
        $\cC^{(1)} \gets \texttt{check}^{(1)-}(q,\widetilde{\mathbf{V}})$\;
        $\cD^{(1)} \gets \texttt{check}^{(1)+}(q, \widetilde{\mathbf{V}})$\;
        $\cR \gets \varnothing$\;
        \For{$i \gets 1$ \KwTo $m$}{
            \If{$i \in \cC^{(1)}$}{
                $\cR = \cR \cup \{i\}$
            }
            \ElseIf{$i \notin \cD^{(1)}$}{
                \tcp{do nothing}
            }
            \Else{
                $\beta_i^{(1)} \gets \texttt{beta}(q, \widetilde{\mathbf{V}})$ \tcp{Expensive fallback: Section \ref{app:IndBHbeta}}
                \If{$p_i \leq \alpha \beta^{(1)}_i/m$}{
                    $\cR = \cR \cup \{i\}$
                }    
            }
        }
        \Return $\cR$\;
    }
    
    \Def{$\cR^{{(\ell)}}(q)$}{
        \If{$\ell =1$}{
            \Return $\cR^{\textup{base}}(q)$ \tcp{Run base case}
        }
        \tcc{Inclusion exclusion checks: Section \ref{app:incluexcluIndBHk}}
        $\cC^{(\ell)} \gets \texttt{check}^{(\ell)-}(q)$\;
        $\cD^{(\ell)} \gets \texttt{check}^{(\ell)+}(q)$\;
        $\cR \gets \varnothing$\;
        \For{$i \gets 1$ \KwTo $m$}{
            \If{$i \in \cC^{(\ell)}$}{
                $\cR = \cR \cup \{i\}$
            }
            \ElseIf{$i \notin \cD^{(\ell)}$}{
                \tcp{do nothing}
            }
            \Else{
                $\beta_i^{(\ell)} \gets \cR^{^{(\ell-1)}}(\bone^{N_i} q)$ \tcp{We mask $N_i$, not $\Nio$: Section \ref{app:maskingdetail}}
                \If{$p_i \leq \alpha \beta^{(\ell)}_i/m$}{
                    $\cR = \cR \cup \{i\}$
                }
            }    
        }
        \Return $\cR$\;
    }    
    
    \tcc{RETURN REJECTION SET}
    \Return $\cR^{{(k)}}(p)$
    
    \end{algorithm}

Detailed explanations of these steps now ensue. We adopt the notation from Section \ref{sec:computation}. 

\subsection{Initial filtering}\label{app:filtering}

Section \ref{sec:graphreduce} and Proposition \ref{prop:ignorebh} in the main text already explain 
why this filtering step does not hamper the correctness
of the algorithm. Though it is the most important step, the sections that follow 
do not technically depend on it for their correctness. 

This step---discarding all $p$-values which exceed $\alpha \bar r / m$---
could potentially be improved by applying Proposition \ref{prop:ignorebh}
with a smaller choice of $\bar r$. For example, any upper bound to the size of $\SU_\mathbb{D}(\alpha)$
would work---such bounds were demonstrated in the proof of Theorem \ref{thm:gapchase}. 
We leave any possible further improvements
to be addressed in the package documentation of \texttt{depgraphFDR}.

\subsection{Precomputing the table $\mathbf{V}$}\label{app:precomputation}

Recall from Proposition \ref{prop:indbhrep} that
\[
    \beta_{\alpha,i}^\IndBH(p_{-i}) := \max \{r :  |I_i(r)| \geq r \},
\]
where we have from Section \ref{sec:caching} that
\begin{align*}
    |I_i(r)| 
        &= 1 +  \mathtt{IndNum}(\mathbb{D}_{\kappa[i]}[Q_{-i}(r)]) + \sum_{k \neq \kappa[i]} \mathtt{IndNum}(\mathbb{D}_k[Q(r)]) \\
        &= 1 +  \mathtt{IndNum}(\mathbb{D}_{\kappa[i]}[Q_{-i}(r)]) + \sum_{k \neq \kappa[i]} \mathbf{V}_{k, r}, \label{eq:appdixindbhexact}
\end{align*}
where we define the table $\mathbf{V} \in \mathbb{R}^{\bar k \times |\cR^\BH_\alpha(p)|}$ by $\mathbf{V}_{k, r} := \mathtt{IndNum}\big(\mathbb{D}_k[Q(r)]\big)$. 
This table is useful as its elements get reused for every $i$. To compute $\mathbf{V}$, our strategy is as follows:
\begin{enumerate}[label=Step~\arabic*., ref=Step~\arabic*]
    \item \label{Vstep1} Compute and save the connected components $\mathbb{D}_1, \dots, \mathbb{D}_{\bar k}$. Initialize $\mathbf{V}$ as the zero matrix in $\mathbb{R}^{\bar k \times |\cR^\BH_\alpha(p)|}$. 
    \item \label{Vstep2} For components $k = 1, \dots, \bar k$, 
    \begin{itemize}
        \item Compute and save the maximal independent sets (MIS) $I_{k,1}, \dots, I_{k, \bar \ell}$ of $\mathbb{D}_k$. 
        \item For sets $\ell = 1, \dots, \bar \ell$ and $r = 1, \dots, |\mathcal{R}_\alpha^\BH(p)|$,
        \begin{itemize}
            \item Compute $u_{k, r, \ell} := \# \{i \in I_{k, \ell} : p_i \leq \alpha r/m \}$. (For this we use \texttt{R}'s \texttt{table} function on $\lceil m p_{I_{k, \ell}}/\alpha \rceil$.)
        \end{itemize}
        \item Take $\mathbf{V}_{k, r} = \max_\ell u_{k, r, \ell}$. 
    \end{itemize}
\end{enumerate}

To perform the graph-theoretic calculations, we used the \texttt{igraph} package
in \texttt{R} \citep{csardiIgraphInterfaceIgraph2025}. In particular, the function \texttt{igraph::maximal\_ivs} is an off-the-shelf implementation of the algorithm of \citet{tsukiyamaNewAlgorithmGenerating1977} for finding maximal independent sets. More modern algorithms for the maximal independent set problem---in its equivalent form of the maximal clique problem---are reviewed in \citet{wuReviewAlgorithmsMaximum2015} and \citet{marinoShortReviewNovel2024}.

We can also optimize further by reducing the number of columns of 
$\mathbf{V}$. Let 
\begin{equation}\label{eq:indbhbetaUB}
    \bar r^{(1)} := \max \left\{r \leq |\cR_\alpha^\BH(p)| : \sum_{k = 1}^{\bar{k}}\mathbf{V}_{k,r} \geq r \right\}.
\end{equation}
Because $\sum_{k = 1}^{\bar k} \mathbf{V}_{k,r} \geq |I_i(r) |$, comparing to
Proposition \ref{prop:indbhrep} we have $\beta^\IndBH_{\alpha,i}(p_{-i}) \leq \bar r^{(1)}$. (We can also see this by recognizing $\bar r^{(1)}$ as the size of the largest certificate set). Computing IndBH
only requires columns $\mathbf{V}_{\cdot, r}$ such that $r \leq \bar r^{(1)}$, 
where $\bar r^{(1)}$ is \emph{any} upper bound to $\beta^\IndBH_{\alpha,i}(p_{-i})$, 
so the remaining columns can be safely dropped. 

\subsection{Updating the table $\mathbf{V}$}\label{app:updateV}

Working through the recursion of Section \ref{sec:indbhk}, it can be seen that the eventual calls to $\IndBH$ from $\IndBH^{(k)}$ take the form $\cR_\alpha^\IndBH(\bone^{N_{i_{k-1}}^\circ} \dots \bone^{N_{i_2}^\circ} \bone^{N_{i_1}^\circ} p)$ for some sequence $i_1, \dots, i_{k-1}$. Let us generically denote this modified vector as 
\[
\tilde p := \bone^{N_{i_{k-1}}^\circ} \dots \bone^{N_{i_2}^\circ} \bone^{N_{i_1}^\circ} p,
\]
leaving the sequence of masked indicies $N_{i_{k-1}}^\circ, \dots , {N_{i_2}^\circ}, {N_{i_1}^\circ}$ implicit. 

(Note that Algorithm \ref{alg:IndbhK} actually takes $\tilde p := \bone^{N_{i_{k-1}}} \dots \bone^{N_{i_2}} \bone^{N_{i_1}} p$. This is somewhat computationally convenient, and still correct. The explanation is deferred to Section \ref{app:maskingdetail}.)

Before we can use the table $\mathbf{V}$ to help us run $\IndBH$ on $\tilde p$, we need to adjust 
it to account for the masking, because $\mathbf{V}$ was computed using the unmodified $p$. 
We hence repeat \ref{Vstep2} from Appendix \ref{app:precomputation}, but only for those connected components overlapping with 
$ N_{i_1}^\circ \cup \dots \cup N_{i_{k-1}}^\circ$. Denote the resulting table as $\widetilde{\mathbf{V}}$. 
We call this subroutine \texttt{update} in Algorithm \ref{alg:IndbhK}.

\subsection{Inclusion and exclusion: IndBH}\label{app:incluexcluIndBH}

Write $\beta^*_i = \beta^\IndBH_{\alpha,i}(\tilde p_{-i})$. 
We have
\[
i \in \cR^\BH_\alpha(\tilde p) \Leftrightarrow \tilde p_i \leq \alpha \beta^*_i/m \Leftrightarrow r_i \leq \beta^*_i,
\]
where $r_i = \lceil m\tilde p_i/\alpha \rceil$. Let
\begin{gather*}
    \beta^+ = \max\left\{r \leq \bar r^{(1)}: \sum_{k = 1}^{\bar{k}} \widetilde{\mathbf{V}}_{k,r} \geq r\right\}, \\
    \beta^- = \max\left\{r \leq \bar r^{(1)}: \sum_{k = 1}^{\bar k}\widetilde{\mathbf{V}}_{k,r} - \max_{k'}\widetilde{\mathbf{V}}_{k',r} + 1 \geq r\right\}, \\
    \beta^-_i = \max\left\{r \leq \bar r^{(1)}: \sum_{k = 1}^{\bar k}\widetilde{\mathbf{V}}_{k,r} - \widetilde{\mathbf{V}}_{\kappa[i],r} + 1 \geq r\right\}.
\end{gather*}
These quantities can be computed entirely from $\widetilde{\mathbf{V}}$ and satisfy
\[
    \beta^- \leq \beta^-_i \leq \beta^*_i \leq \beta^+
\]
for all $i$, justifying the following shortcuts. 
\begin{enumerate}
    \item $\texttt{check}^{(1)-}$. Whenever $r_i \leq \beta^-$ or $r_i \leq \beta^-_i$, declare $i \in \cR^{\IndBH}_\alpha(\tilde p)$. We represent $\cC^{(1)} = \{i: r_i \leq \beta^-_i \}$ in Algorithm \ref{alg:IndbhK}.
    \item $\texttt{check}^{(1)+}$. Whenever $r_i > \beta^+$, declare $i \notin \cR^{\IndBH}_\alpha(\tilde p)$. We represent $\cD^{(1)} = \{i: r_i \leq \beta^+ \}$ in Algorithm \ref{alg:IndbhK}.
\end{enumerate}
These checks typically take care of most of the hypotheses. 

The checks based on $\beta^+$ and $\beta^-$ can be performed for all hypotheses at once, and 
the one based on $\beta^-_i$ can be performed for all hypotheses in a connected component at once. In our implementation, we perform the $\beta^+$ and $\beta^-$ checks first. 

\subsection{Fallback: computing IndBH exactly}\label{app:IndBHbeta}

When all the cheap checks prove inconclusive, we must run IndBH exactly on the modified vector $\tilde p$. 
We do this by computing
\[
    \beta_{\alpha,i}^\IndBH(\tilde p_{-i}) := \max \{r :  |\widetilde{I}_i(r)| \geq r \},
\]
which we call \texttt{beta} in Algorithm \ref{alg:IndbhK}. We adapt 
the definitions from Section \ref{app:precomputation} to add a tilde symbol
to mean ``computed with $\tilde p$ instead of $p$''. That is, we let
\begin{align*}
    |\widetilde{I}_i(r)| 
        &= 1 +  \mathtt{IndNum}(\mathbb{D}_{\kappa[i]}[\widetilde{Q}_{-i}(r)]) + \sum_{k \neq \kappa[i]} \mathtt{IndNum}(\mathbb{D}_k[\widetilde{Q}(r)]) \\
        &= 1 +  \mathtt{IndNum}(\mathbb{D}_{\kappa[i]}[\widetilde{Q}_{-i}(r)]) + \sum_{k \neq \kappa[i]} \widetilde{\mathbf{V}}_{k, r},
\end{align*}
where $\widetilde{Q}(r) = \{i : \tilde p_i \leq \alpha r /m\}$. 

Now given $\widetilde{\mathbf{V}}$ from Section \ref{app:updateV}, it only 
remains to compute $\mathtt{IndNum}\Big(\mathbb{D}_{\kappa[i]}[\widetilde{Q}_{-i}(r)]\Big)$. We again use \ref{Vstep2} from Section \ref{app:precomputation}, but only for the connected component $\mathbb{D}_{\kappa[i]}$ which contains $i$, and restricting its independent sets to also contain $i$. 

\subsection{Inclusion and exclusion: $\text{IndBH}^{(\ell)}$}\label{app:incluexcluIndBHk}

For completeness, we translate the checks from Section \ref{sec:considerindbh2} into Algorithm \ref{alg:IndbhK}'s notation. Additionally, we use an auxiliary procedure $\cR^{(\ell)*}$ whose definition is deferred to Section \ref{app:blockindbhk}---it is a quickly computed subset of $\cR^{(\ell)}$ that is the whole set when the connected components are cliques.
\begin{enumerate}
    \item $\texttt{check}^{(\ell)-}$. Whenever $i \in \cR^{(\ell - 1)}(\tilde p)$ or $i \in \cR^{(\ell)*}(\tilde p)$, declare $i \in \cR^{(\ell)}(\tilde p)$. We represent $\cC^{(\ell)} = \cR^{(\ell - 1)}(\tilde p) \cup \cR^{(\ell)*}(\tilde p)$ in Algorithm \ref{alg:IndbhK}.
    \item $\texttt{check}^{(\ell)+}$. Whenever $p_i > \alpha |\{i\} \cup \cR^{(\ell - 1)}(\tilde p)|/m$, declare $i \notin \cR^{(\ell)}(\tilde p)$. We represent $\cD^{(\ell)} = \{i: r_i \leq |\{i\} \cup \cR^{(\ell - 1)}(\tilde p)|\}$ in Algorithm \ref{alg:IndbhK}.

\end{enumerate}

Both of these checks can be performed on all hypotheses at once.

Note that we always perform $\texttt{check}^{(\ell)-}$ first, 
and any $i$ that was not rejected by $\texttt{check}^{(\ell)-}$ satisfies $i \notin \cR^{(\ell - 1)}(\tilde p)$. Hence in our implementation of $\texttt{check}^{(\ell)+}$, we take $|\{i\} \cup \cR^{(\ell - 1)}(\tilde p)| = 1 + |\cR^{(\ell - 1)}(\tilde p)|$.

\subsection{A masking detail}\label{app:maskingdetail}

The main text defined $\IndBH^{(k)}$ in Equation \eqref{eq:indbhkdef}, reproduced here:
\begin{gather*}
     \cR_\alpha^{\IndBH^{(1)}}(p) := \cR_\alpha^{\IndBH}(p), \\
    \cR_\alpha^{\IndBH^{(k+1)}}(p) := \left\{i : p_i \leq \frac{\alpha |\{i\} \cup  \cR^{\IndBH^{(k)}}_\alpha(\bone^{\Nio} p)|}{m} \right\}.
\end{gather*}
The following proposition shows that we can use an alternative definition, where every element of $N_i$ is masked, not just
the punctured neighborhood $\Nio$. 
\begin{proposition}\label{prop:altindbhk}
    For $k \geq 1$, let
    \begin{gather*}
        \cA^{k+1} = \left\{i : p_i \leq \frac{\alpha |\{i\} \cup  \cR^{\IndBH^{(k)}}_\alpha(\bone^{\Nio} p)|}{m} \right\}, \\
        \cB^{k+1} = \left\{i : i \in \cR_\alpha^{\IndBH^{(k)}}(p), \text{ or } p_i \leq \frac{\alpha |\{i\} \cup  \cR^{\IndBH^{(k)}}_\alpha(\bone^{N_i} p)|}{m} \right\}. 
    \end{gather*}
    We have the equality $\cA^{k+1} = \cB^{k+1}$ for all $k \geq 1$. 
\end{proposition}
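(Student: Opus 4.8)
The plan is to reduce the identity $\cA^{k+1} = \cB^{k+1}$ to a two-case dichotomy governed by whether $\IndBH^{(k)}$ rejects $H_i$ on the unmasked vector $p$, the enabling algebraic observation being that masking the full neighborhood equals masking the punctured neighborhood and then masking $i$ itself. Concretely, I would write $q := \bone^{\Nio} p$, so that $q_i = p_i$ (as $i \notin \Nio$) and $\bone^{N_i} p = \bone^{\{i\}} q$. Setting $R_\circ := \cR^{\IndBH^{(k)}}_\alpha(q)$ and $R_N := \cR^{\IndBH^{(k)}}_\alpha(\bone^{N_i} p)$, the definitions read: $i \in \cA^{k+1}$ iff $p_i \leq \alpha|\{i\} \cup R_\circ|/m$, while $i \in \cB^{k+1}$ iff either $i \in \cR^{\IndBH^{(k)}}_\alpha(p)$ or $p_i \leq \alpha|\{i\} \cup R_N|/m$. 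Since $\IndBH^{(k)}$ is $\mathbb{D}$-adapted, hence neighbor-blind, for every $k \geq 1$ (Propositions \ref{prop:IndBH1} and \ref{prop:IndBHk}), I would record at the outset that $i \in R_\circ \Leftrightarrow i \in \cR^{\IndBH^{(k)}}_\alpha(p)$.

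First I would dispose of the case $i \in \cR^{\IndBH^{(k)}}_\alpha(p)$. Then the first disjunct defining $\cB^{k+1}$ holds, so $i \in \cB^{k+1}$; and since Proposition \ref{prop:IndBHk} gives the chain $\cR^{\IndBH^{(k)}}_\alpha(p) \subseteq \cR^{\IndBH^{(k+1)}}_\alpha(p) = \cA^{k+1}$, also $i \in \cA^{k+1}$. (Alternatively, neighbor-blindness puts $i \in R_\circ$, whence self-consistency forces $p_i = q_i \leq \alpha|R_\circ|/m = \alpha|\{i\}\cup R_\circ|/m$.) So both sides contain $i$.

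The content of the proof is the complementary case $i \notin \cR^{\IndBH^{(k)}}_\alpha(p)$, equivalently $i \notin R_\circ$. Here the first disjunct of $\cB^{k+1}$ fails, so it remains to show the two thresholds agree, and for that it suffices to prove $R_\circ = R_N$. This is precisely where I would invoke Lemma \ref{lem:settoone}, but applied to the input vector $q$ in place of $p$: its hypothesis $i \notin \cR^{\IndBH^{(k)}}_\alpha(q) = R_\circ$ is exactly the case assumption, so the lemma yields $\cR^{\IndBH^{(k)}}_\alpha(q) = \cR^{\IndBH^{(k)}}_\alpha(\bone^{\{i\}} q)$, i.e. $R_\circ = R_N$ by the identity $\bone^{\{i\}} q = \bone^{N_i} p$. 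The conditions $p_i \leq \alpha|\{i\}\cup R_\circ|/m$ and $p_i \leq \alpha|\{i\}\cup R_N|/m$ thus coincide, giving $i \in \cA^{k+1} \Leftrightarrow i \in \cB^{k+1}$ in this case as well, and the two cases together yield $\cA^{k+1} = \cB^{k+1}$.

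The main—and essentially only—obstacle is recognizing that Lemma \ref{lem:settoone}, stated for the bare $\IndBH^{(k)}$ procedure on an arbitrary length-$m$ input, is licensed to be applied to the already-masked vector $q = \bone^{\Nio} p$, and combining it with $\bone^{N_i} p = \bone^{\{i\}} q$. I would confirm that feeding $q$ into the lemma leaves the level $\alpha$, the dimension $m$, and the graph $\mathbb{D}$ intact—which it does, since $q$ is a length-$m$ vector on the same graph—after which the dichotomy above is routine.
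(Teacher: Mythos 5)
Your proof is correct and takes essentially the same route as the paper's: both hinge on the dichotomy $i \in \cR^{\IndBH^{(k)}}_\alpha(p)$ versus $i \notin \cR^{\IndBH^{(k)}}_\alpha(p)$, handling the first case via Proposition \ref{prop:IndBHk} and the second by applying Lemma \ref{lem:settoone} to the masked vector $\bone^{\Nio} p$ (licensed by neighbor-blindness) to conclude $\cR^{\IndBH^{(k)}}_\alpha(\bone^{\Nio} p) = \cR^{\IndBH^{(k)}}_\alpha(\bone^{N_i} p)$. The only difference is organizational: the paper proves the inclusion $\cB^{k+1} \subseteq \cA^{k+1}$ separately using monotonicity, whereas your pointwise case analysis yields both inclusions simultaneously and dispenses with that monotonicity step.
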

It also follows from this proposition that masking by $N_i$ instead of $\Nio$ in the call to $\cR^{(\ell-1)}$ in Algorithm \ref{alg:IndbhK} is correct, because it 
occurs \emph{after} $\texttt{check}^{(\ell)-}$, which checks whether $i \in \cR_\alpha^{\IndBH^{(k)}}(p)$, 
which is convenient for the implementation.

\begin{proof}[Proof of Proposition \ref{prop:altindbhk}.]
First, we know 
$\cA^{k+1} \supset \cR^{\IndBH^{(k)}}_\alpha(p)$ from Proposition \ref{prop:IndBHk}, and combined with the monotonicity of 
$\cR^{\IndBH^{(k)}}_\alpha$ we have $\cA^{k+1} \supset \cB^{k+1}$ for all $k \geq 1$. Next, suppose $i \in \cA^{k+1}$. If 
$i \in \cR^{\IndBH^{(k)}}_\alpha(p)$, then $i \in \cB^{k+1}$ by definition. If $i \notin \cR^{\IndBH^{(k)}}_\alpha(p)$, we can apply Lemma \ref{lem:settoone} to equate $\cR^{\IndBH^{(k)}}_\alpha(\bone^{N_i} p) = \cR^{\IndBH^{(k)}}_\alpha(\bone^{\Nio} p)$
for all $i$, so that again $i \in \cB^{k+1}$, showing that $\cA^{k+1} \subset \cB^{k+1}$. 
\end{proof}

\subsection{A fast subset of $\text{IndBH}^{(\ell)}$ based on fully connected components}\label{app:blockindbhk}

Given an input graph $\mathbb{D}$ with connected components $\mathbb{D}_1, \dots, \mathbb{D}_{\bar k}$, 
define a new graph $\mathbb{D}^*$ to be the graph 
where the nodes of each component 
form a clique. Since it is more connected than $\mathbb{D}$, 
the $\IndBH_{\mathbb{D}^*}(\alpha)$ procedure rejects a \emph{subset} of 
$\IndBH_\mathbb{D}(\alpha)$. Similarly, $\IndBH^{(\ell)}_\mathbb{D*}(\alpha)$ returns a subset of $\IndBH^{(\ell)}$. 
We use $\cR^{(\ell)*}$ to refer to the rejection set of $\IndBH^{(\ell)}_\mathbb{D*}(\alpha)$. 

Due to Proposition \ref{prop:indbhclique}, $\IndBH_{\mathbb{D}^*}$ is highly computationally 
efficient. $\IndBH^{(\ell)}_\mathbb{D*}(\alpha)$ is also more efficient 
because the same rejection threshold for $p_i$ (computed in Appendix \ref{app:maskingdetail}), 
can be used for every $p$-value
in its component. Other smaller optimizations also become possible, which we leave 
to the \texttt{depgraphFDR} documentation. 

Importantly, we base $\mathbb{D}^*$ on
the graph $\mathbb{D}$ obtained \emph{after} the Appendix \ref{app:filtering} filtering, discarding $p$-values that exceed $\alpha \bar r/m$. The rejection (sub)sets computed by
$\IndBH_{\mathbb{D}^*}(\alpha)$ and $\IndBH^{(\ell)}_{\mathbb{D}^*}(\alpha)$ 
hence depend on the choice $\bar r$---there
could be more connected components for a smaller $\bar r$. 

We simply use 
$\bar r = |\cR^\BH_\alpha(p)|$, but any $\bar r$ satisfying 
the conditions of Proposition \ref{prop:ignorebh} would work. Further details 
and possible improvements are again left to the \texttt{depgraphFDR} documentation.



\section{Randomized pruning for $\textup{SU}_\mathbb{D}$}\label{app:randomprune}

Let $u_1, \dots, u_m \sim \textup{Unif}[0,1]$ be iid external randomness. Then for any monotone procedure $\cR^\textup{init}_\alpha(p)$, define the adjusted procedure $\cR^\textup{adj}_\alpha$, and its pruned version $\cR_\alpha^\textup{rand}$:
\[
    \cR^\textup{adj}_\alpha(p) = \{i : p_i \leq \alpha |\{i\} \cup \cR^\textup{init}_\alpha(\bone^{\Nio} p)| / m  \}; \quad \tilde u_i = 
        \begin{cases}
            u_i |\{i\} \cup \cR^\textup{init}_\alpha(\bone^{\Nio} p)|/m &\text{if } i \in \cR^\textup{adj}_\alpha(p) \\
            \infty &\text{o.w.}

        \end{cases}
\]
and $\cR^\textup{rand}_\alpha(p; u) := \cR_1^\BH(\tilde u)$. 
\begin{proposition}
    $\cR^\textup{rand}_\alpha(p; u)$ controls FDR under a dependency graph $\mathbb{D}$. 
\end{proposition}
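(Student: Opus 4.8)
The plan is to bound the FDR by $\alpha|\cH_0|/m \le \alpha$ via the usual decomposition $\FDR = \sum_{i \in \cH_0} \EE\big[ 1\{i \in \cR^\textup{rand}_\alpha(p;u)\}/R \big]$ with $R := |\cR^\textup{rand}_\alpha(p;u)|$, showing each summand is at most $\alpha/m$. Write $b_i := |\{i\} \cup \cR^\textup{init}_\alpha(\oneNio p)|$, so that $i \in \cR^\textup{adj}_\alpha(p) \iff p_i \le \alpha b_i/m$ and $\tilde u_i = u_i b_i/m$ on that event. The key structural fact is that $\cR^\textup{rand}_\alpha(p;u) = \cR^\BH_1(\tilde u)$ is simply BH at level $1$ applied to the pruned vector $\tilde u$, so I will use the local-threshold representation of BH, namely $\cR^\BH_1(\tilde u) = \{i : \tilde u_i \le |\cR^\BH_1(\tilde u^{i \gets 0})|/m\}$, which also yields the standard ``set-to-zero'' invariance $R = R^{(i)}$ on the event $\{i \in \cR^\textup{rand}\}$, where $R^{(i)} := |\cR^\BH_1(\tilde u^{i \gets 0})|$.

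Combining these, the rejection event factorizes as $1\{i \in \cR^\textup{rand}\} = 1\{i \in \cR^\textup{adj}_\alpha(p)\}\,1\{u_i b_i \le R^{(i)}\}$, and on this event $R = R^{(i)}$, so the $i$th summand equals $\EE\big[ 1\{i \in \cR^\textup{adj}\}\,1\{u_i b_i \le R^{(i)}\}/R^{(i)} \big]$. The first substantive step is to integrate out the external randomness $u_i$: conditioning on $\sigma(p, (u_j)_{j \ne i})$ makes $b_i$, $R^{(i)}$, and $1\{i \in \cR^\textup{adj}\}$ all measurable and leaves $u_i \sim \textup{Unif}[0,1]$ independent, whence $\PP(u_i \le R^{(i)}/b_i \mid \cdot) = \min(R^{(i)}/b_i, 1)$ and the conditional expectation collapses to at most $1\{i \in \cR^\textup{adj}\}/b_i$. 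It then remains to show $\EE[1\{p_i \le \alpha b_i/m\}/b_i] \le \alpha/m$.

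For this last bound I will mimic the conditioning argument of Theorem \ref{thm:control}. Since $\oneNio$ masks exactly the punctured neighborhood, the masked vector $\oneNio p$, and hence $b_i$, depends on $p$ only through $(p_i, S_i)$; and since $\cR^\textup{init}_\alpha$ is monotone, $b_i$ is non-increasing in $p_i$. Conditioning on $S_i$ and using $p_i \indep S_i$ (the dependency-graph property) together with super-uniformity of $p_i$, the problem reduces to the one-dimensional claim that $\EE[1\{W \le \psi(W)\}/\psi(W)] \le 1$ for super-uniform $W$ and non-increasing $\psi > 0$; this holds because $\{W \le \psi(W)\}$ is a down-set $[0,t^*]$ on which $\psi(W) \ge t^*$, so the ratio is at most $1\{W \le t^*\}/t^*$, with expectation at most $1$. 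Applying this with $\psi(t) = \alpha b_i(t, S_i)/m$ delivers the bound $\alpha/m$ and completes the argument.

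The main obstacle is that $R^{(i)}$, and indeed $b_j$ for neighbors $j$ of $i$, can depend on $p_i$, so $R^{(i)}$ is \emph{not} $S_i$-measurable and the clean conditional-independence structure exploited in Theorem \ref{thm:control} is unavailable directly. The resolution, and the reason the auxiliary randomization is needed at all, lies in the order of operations: integrating out $u_i$ first replaces $1/R^{(i)}$ by the $S_i$-measurable surrogate $1/b_i$ \emph{before} any conditioning on $S_i$, so the non-measurability of $R^{(i)}$ never obstructs the final super-uniformity step. A secondary subtlety, absent in the self-consistent case, is that $b_i$ itself varies with $p_i$; this is precisely why the monotone version of the super-uniformity computation, rather than the constant-threshold version, is required.
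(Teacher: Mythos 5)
Your proof is correct and follows essentially the same route as the paper's: the same per-hypothesis FDR decomposition, the same BH set-to-zero identity for $\cR^\BH_1(\tilde u)$, the same conditioning on $(p, u_{-i})$ to integrate out $u_i$ and replace $1/R^{(i)}$ by the $p$-measurable surrogate $1/b_i$, and the same final conditioning on $S_i$ with a monotone super-uniformity bound. The only cosmetic difference is that you prove the one-dimensional super-uniformity step (the down-set argument, needed because $b_i$ varies with $p_i$) from scratch, whereas the paper leaves it implicit; it is exactly Lemma \ref{lem:supuni} applied conditionally on $S_i$.
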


As noted by \citet{fithianConditionalCalibrationFalse2020b}, this can also be expressed as the $\BH(1)$ procedure 
applied to the $p$-values $\big(u_i |\{i\} \cup \cR^\textup{init}_\alpha(\bone^{\Nio} p)|/ |\cR^\textup{adj}_\alpha(p)|\big)_{i \in \mathcal{R}_\alpha^\textup{adj}(p)}$. 

In this form, it becomes transparent that the pruning step is a patch for a failure of the self-consistency property, after one round of gap chasing. If $\cR^\textup{adj}_\alpha$ were actually self-consistent, then no rejections are pruned, and pruning is unlikely if $\cR^\textup{adj}_\alpha$ is almost self-consistent.

\begin{proof}
    \allowdisplaybreaks
    Similarly to Theorem \ref{thm:control} it suffices to show that 
    \[
        \mathbb{E}\left[
            \frac{
                1\{i \in \cR^\textup{rand}_\alpha(p; u)\}
            }{
                |\cR^\textup{rand}_\alpha(p; u)|
            }                
         \right]  \leq \alpha/m.
    \]
    We proceed as
    \begin{align*}
        \mathbb{E}\left[
            \frac{
                1\{i \in \cR^\textup{rand}_\alpha(p; u)\}
            }{
                |\cR^\textup{rand}_\alpha(p; u)|
            }                
         \right] 
         &= \mathbb{E}\left[
            \frac{
                1\{i \in \cR_1^\BH(\tilde u)\}
            }{
                |\cR_1^\BH(\tilde u)|
            }                
         \right]  \\
         &= \mathbb{E}\left[
            \frac{
                1\{i \in \cR_\alpha(p)\} 1\{i \in \cR_1^\BH(\tilde u)\}
            }{
                |\cR_1^\BH(\tilde u)|
            }                
         \right]  \\
         &= \mathbb{E}\left[
            \frac{
                1\{i \in \cR_\alpha(p)\} 1\{\tilde u_i \leq |\cR_1^\BH(\tilde u)|/m\}
            }{
                |\cR_1^\BH(\tilde u)|
            }                
         \right]  \\
         &= \mathbb{E}\left[
            \frac{
                1\{i \in \cR_\alpha(p)\} 1\{\tilde u_i \leq  |\cR_1^\BH(\tilde u^{i \gets 0})|/m\}
            }{
                |\cR_1^\BH(\tilde u^{i \gets 0})|
            }                
         \right]  \\
         &= \mathbb{E}\left[
            \frac{
                1\{i \in \cR_\alpha(p)\} 1\{ u_i {|\{i\} \cup \cR^\textup{init}_\alpha(\bone^{\Nio} p)|}/{m} \leq  |\cR_1^\BH(\tilde u^{i \gets 0})|/m\}
            }{
                |\cR_1^\BH(\tilde u^{i \gets 0})|
            }                
         \right]  \\
         &= \mathbb{E}\left[\mathbb{E}\left[
            \frac{
                1\{i \in \cR_\alpha(p)\} 1\{ u_i {|\{i\} \cup \cR^\textup{init}_\alpha(\bone^{\Nio} p)|}/{m} \leq  |\cR_1^\BH(\tilde u^{i \gets 0})|/m\}
            }{
                |\cR_1^\BH(\tilde u^{i \gets 0})|
            }                
         \right] \mid p, u_{-i} \right]  \\
         &= \mathbb{E}\left[
            \frac{
                1\{i \in \cR_\alpha(p)\} 
            }{
                |\{i\} \cup \cR^\textup{init}_\alpha(\bone^{\Nio} p)|
            }                
         \right]  \\
         &\leq \mathbb{E}\left[
            \frac{
                1\{p_i \leq \alpha |\{i\} \cup \cR^\textup{init}_\alpha(\bone^{\Nio} p)|/m\} 
            }{
                |\{i\} \cup \cR^\textup{init}_\alpha(\bone^{\Nio} p)|
            } \right]\\
            &= \mathbb{E} \left[\mathbb{E}\left[
                \frac{
                    1\{p_i \leq \alpha |\{i\} \cup \cR^\textup{init}_\alpha(\bone^{\Nio} p)|/m\} 
                }{
                    |\{i\} \cup \cR^\textup{init}_\alpha(\bone^{\Nio} p)|
                } \mid p_{N_i^\mathsf{c}}\right] \right]\\    
        &\leq \alpha/m.
    \end{align*}

\end{proof}

\section{BH under a dependency graph}\label{app:BYD}

Section \ref{sec:fdrctrlremarks}, Equation \eqref{eq:worstcasefdr} defined
the quantity
\[
    \gamma_\mathbb{D}(\alpha) := \sup \big\{\FDR_P(\cR^\BH_\alpha) : P \in \mathcal{P}_\mathbb{D}\big\}.
\]
In the ensuing sections, we establish the bounds
\begin{equation}\label{eq:BYDbounds}
    1 - \prod_{k = 1}^{\bar k} \left( 
        1 - \frac{\alpha |B_k|}{m}  \sum_{s = 1}^{|B_k|} 1/s \right) \leq \gamma_\mathbb{D}(\alpha) \leq  \frac{\alpha}{m} \sum_{i = 1}^m \left(|N_i| - \big({|N_i| - 1}\big){m^\frac{-1}{|N_i| - 1}} \right)
\end{equation}
where $B_1, \dots, B_{\bar k}$ represents any clique cover 
of $\mathbb{D}$, and $N_1, \dots, N_m$ represents the neighborhoods of node $i$ in $\mathbb{D}$, and $\alpha$ is such that $1 - \frac{\alpha |B_k|}{m}  \sum_{s = 1}^{|B_k|} 1/s > 0$ for every $k$. 

To better interpret this result, first, suppose the graph is fully connected. Then it is a clique, and the whole graph can be used as clique cover, while $N_i = m$ for every $i$, obtaining
\begin{equation*}
    \alpha \sum_{s = 1}^m 1/s \leq \gamma_\mathbb{D}(\alpha) \leq  \alpha \left(m - \big({m - 1}\big){m^\frac{-1}{m - 1}} \right).
\end{equation*}
This upper bound does not exactly equal the lower bound, as we know it should from \citet{benjaminiControlFalseDiscovery2001}, but it does satisfy $(m - \big({m - 1}\big){m^\frac{-1}{m - 1}}) \big / \sum_{s = 1}^m 1/s \to 1$ as $m \to \infty$, so it almost recovers the \citet{benjaminiControlFalseDiscovery2001} result. 

Next, suppose $\mathbb{D}$ is a block dependent graph where all blocks have equal size $b$, the clique cover can be chosen to be precisely those blocks. Then there are $\bar k = m/b$ blocks and the left-hand side of \eqref{eq:BYDbounds} is
\[
    1 -  \left( 
        1 - \frac{\alpha b}{m}  \sum_{s = 1}^{b} 1/s \right)^{m/b} = \alpha \sum_{s = 1}^{b} 1/s + O(\alpha^2) \approx \alpha \log(b),
\]
so in this case we must at least pay approximately an inflation of factor $\log(b)$. Unfortunately, in this case, and for general graphs, the upper and lower bounds are quite different. We conjecture that $\gamma_\mathbb{D}(\alpha)$ is much closer to our lower bound 
than our upper bound. Indeed, the upper bound is especially loose: the right side of 
\eqref{eq:BYDbounds} becomes
\[
    \alpha \left(b - \big({b - 1}\big){m^\frac{-1}{b - 1}} \right),
\]
which is $O(b)$ and often becomes vacuous even for moderately large $b$, exceeding the always-valid bound of $\alpha \sum_{s = 1}^{m} 1/s$. 

Next, we show how to obtain these bounds. 
\subsection{An FDR upper bound}

\begin{proposition}\label{prop:BYDcontrol} 
Whenever $\mathbb{D}$ is a dependency graph for $p$, the procedure $\BH(\alpha)$ controls the FDR at level $ L_{\mathbb{D}} \alpha $ on $p$, where 
\[
    L_{\mathbb{D}} := (1/m) \sum_{i = 1}^m L_{\mathbb{D}, i}, \quad \text{and }\quad  L_{\mathbb{D}, i} = |N_i| - \big({|N_i| - 1}\big){m^\frac{-1}{|N_i| - 1}}.
\]
\end{proposition}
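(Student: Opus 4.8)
The plan is to bound the false discovery rate one null hypothesis at a time and show that the contribution of each $i\in\cH_0$ is at most $\tfrac{\alpha}{m}L_{\mathbb{D},i}$. Writing $R:=|\cR^\BH_\alpha(p)|$ and using that $\FDR(\cR^\BH_\alpha)=\sum_{i\in\cH_0}\EE\big[1\{i\in\cR^\BH_\alpha(p)\}/R\big]$ together with the BH characterization $i\in\cR^\BH_\alpha(p)\iff p_i\le \alpha R/m$, it suffices to prove
\[
\EE\!\left[\frac{1\{p_i\le \alpha R/m\}}{R}\right]\le \frac{\alpha}{m}\Big(n-(n-1)\,m^{-1/(n-1)}\Big),\qquad n:=|N_i|,
\]
because summing over $\cH_0\subseteq[m]$ and using $L_{\mathbb{D},i}\ge 0$ then gives $\FDR\le \tfrac{\alpha}{m}\sum_{i=1}^m L_{\mathbb{D},i}=\alpha L_{\mathbb{D}}$. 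The case $n=1$ is a sanity check: there $p_i\indep p_{-i}$, so conditioning on $p_{-i}$ and applying the superuniformity lemma (Lemma \ref{lem:supuni}) to the nonincreasing map $p\mapsto 1/R$ recovers $L_{\mathbb{D},i}=1$; thus all the content lives in the coupling produced by the $n-1$ dependent neighbors.

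Next I would condition on $S_i=p_{N_i^\mathsf{c}}$, exploiting that $p_i$ is independent of $S_i$ and superuniform, and localize the BH threshold. At the self-consistent threshold $\hat t=\alpha R/m$ one has the exact split $R=V(\hat t)+W(\hat t)$, where $V(t):=|\{j\in N_i^\mathsf{c}:p_j\le t\}|$ is a fixed nondecreasing function of $S_i$ and $W(t):=|\{j\in N_i:p_j\le t\}|$ counts neighbors (including $i$) below the threshold, with $1\le W(\hat t)\le n$ on the rejection event. After conditioning, the only coordinates that can be adversarially correlated with $p_i$ are the $n-1$ in $\Nio$, so the task reduces to maximizing $\EE\big[1\{p_i\le\hat t\}/R\mid S_i\big]$ over couplings of $(p_i,p_{\Nio})$ and over the baseline profile $V(\cdot)$, subject to the structural constraint that the neighbors inflate the count by at most $n$. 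Using superuniformity (the integrand is nonincreasing in $p_i$), this conditional expectation is controlled by an integral $\int \tfrac{du}{r(u)}$, where $r(u)$ is the rejection count when $p_i=u$; the function $r$ is nonincreasing, satisfies $r(u)\ge mu/\alpha$ on the rejection region, and has increments constrained by the budget of $n$ coupled slots.

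The main obstacle is precisely this worst-case optimization: since $\hat t$ is the fixed point $|\{j:p_j\le\hat t\}|=m\hat t/\alpha$, the count $r(u)$ and the rejection region interact, and one must jointly optimize the neighbor coupling and the baseline $V$. I expect the clean closed form to fall out of a convexity / power-mean (AM–GM) estimate on the admissible trajectories $r$: it is convenient to rewrite the target as
\[
L_{\mathbb{D},i}=1+\int_1^m x^{-n/(n-1)}\,dx,
\]
which is how $n-(n-1)m^{-1/(n-1)}$ factors, since $\tfrac{d}{dx}\big(n-(n-1)x^{-1/(n-1)}\big)=x^{-n/(n-1)}$. The exponent $n/(n-1)$ should emerge from distributing a single unit of rejection budget across the $n$ coupled coordinates and applying AM–GM to the resulting product constraint, yielding a bound that holds uniformly over every baseline $V$ and every coupling (so that one never needs to identify the exact extremal configuration). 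I anticipate the resulting inequality to be slightly loose — consistent with the remark that for $n=m$ it only asymptotically matches the exact Benjamini–Yekutieli factor $\sum_{k=1}^m 1/k$ — and the final step is to confirm that the convexity estimate is valid over all admissible trajectories, which is what secures the stated bound without a tight worst-case characterization.
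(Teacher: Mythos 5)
Your proposal follows the same route as the paper's proof: the per-hypothesis decomposition $\FDR=\sum_{i\in\cH_0}\EE\big[1\{i\in\cR^\BH_\alpha(p)\}/|\cR^\BH_\alpha(p)|\big]$, conditioning on $S_i$, reduction of each conditional expectation to a worst-case optimization over staircase rejection-count trajectories with about $|N_i|$ levels, and a final AM--GM step whose optimizer $r_k^*=m^{(k-1)/(n-1)}$ yields exactly $n-(n-1)m^{-1/(n-1)}$; your identity $L_{\mathbb{D},i}=1+\int_1^m x^{-n/(n-1)}\,dx$ is a correct continuous counterpart of the paper's relaxed discrete optimization. However, the middle of your argument contains a genuine error. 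You assert that the trajectory $r(u)$ (the rejection count when $p_i=u$) is \emph{nonincreasing} and satisfies $r(u)\ge mu/\alpha$ on the rejection region. If both held, the rejection region would be an interval $[0,u^*]$ on which $r(u)\ge r(u^*)\ge mu^*/\alpha$, whence
\[
\int_0^{u^*}\frac{du}{r(u)}\;\le\;\frac{u^*}{r(u^*)}\;\le\;\frac{\alpha}{m},
\]
i.e.\ no inflation at all; this would ``prove'' that BH controls FDR at level $\alpha|\cH_0|/m\le\alpha$ under any dependency graph, which is false (see the lower-bound constructions of Appendix \ref{app:bydlower}). ``Nonincreasing'' describes the count with $p_{-i}$ held \emph{fixed}; under the adversarial coupling the realized count is \emph{nondecreasing} in $u$, climbing just fast enough ($r(u)\approx\lceil mu/\alpha\rceil$) to keep $H_i$ rejected with the smallest possible denominator. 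That nondecreasing staircase is what your own AM--GM computation implicitly targets, so your structural claims contradict your endgame.

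There is also a genuine gap: you never show how to pass from $\EE\big[1\{p_i\le\alpha R/m\}/R\mid S_i\big]$, in which $R$ depends on the randomly coupled neighbors $p_{\Nio}$, to a deterministic one-dimensional problem over trajectories of $u=p_i$ alone. This is the crux, and the paper's device is the family of masked configurations $\piktozero{k}$ (set $p_i$ and the first $k$ neighbors to $0$, the remaining neighbors to $1$, keep the non-neighbors at $S_i$), for which one has the pointwise domination
\[
\frac{1\{p_i\le\alpha|\cR^\BH_\alpha(p^{i\gets 0})|/m\}}{|\cR^\BH_\alpha(p^{i\gets 0})|}
\;\le\;
\max_{0\le k\le n_i}\;\frac{1\{p_i\le\alpha|\cR^\BH_\alpha(\piktozero{k})|/m\}}{|\cR^\BH_\alpha(\piktozero{k})|}
\]
for \emph{every} realization of the neighbors; indeed the left side equals the term at $k^\#=$ the number of rejected neighbors, because raising non-rejected $p$-values to $1$ and lowering rejected ones to $0$ leaves the BH rejection set unchanged. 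Since each count $|\cR^\BH_\alpha(\piktozero{k})|$ is a function of $S_i$ only, the right-hand side is a genuine function of $(p_i,S_i)$, piecewise constant and nonincreasing in $p_i$ with at most $n_i+1=|N_i|$ nonzero levels, each obeying the self-consistency constraint $p_i\le\alpha r/m$. This is the rigorous form of your ``budget of $n$ coupled slots''; without it (or an equivalent), the reduction to the staircase class---and hence the AM--GM finish---is not justified.
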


\begin{proof}[Proof of Proposition 
\ref{prop:BYDcontrol}]
    Label the elements of $\Nio = N_i \setminus \{i\}$ as $\Nio = \{j_1, \dots, j_{n_i}\}$, so that $|\Nio| = n_i$. Let $\piktozero{k}$ be defined such that
    \[
        \piktozero{k}_j = 
            \begin{cases}
                0 & \text{for $j \in \{j_1, \dots, j_{k}\} \cup \{i\}$} \\
                1 & \text{for $j \in \{j_{k+1}, \dots, j_{n^*}\}$} \\
                p_j & \text{for $j \notin N_i$.} 
            \end{cases}
    \]
    Then let $p_\textup{max}(S_i) := \sup \{p_i : \exists k \text{ with } p_i \leq \alpha \mathcal|{R}_\alpha^\BH(\piktozero{k})|/m \}$, and for $p_i \leq p_\textup{max}$, let $k^*(p_i, S_i) := \min \{k : p_i \leq \alpha \mathcal|{R}_\alpha^\BH(\piktozero{k})|/m \}$. As in the proof of Theorem \ref{thm:control}, we have 
    \begin{equation}\label{eq:appfdrdecomp}
        \FDR = \sum_{i \in \cH_0} \EE\left[
            \frac{1\{i \in \cR^\BH_\alpha(p)\}}{|\cR^\BH_\alpha(p)|}
                \right],        
    \end{equation}
    so it is sufficient to show that 
    \begin{equation}\label{eq:fdri-byd}
            \EE\left[
        \frac{1\{i \in \cR^\BH_\alpha(p)\}}{|\cR^\BH_\alpha(p)|}
            \right]
            \leq \alpha L_{\mathbb{D}, i} /m
    \end{equation}
    for an arbitrary $i \in \cH_0$. We upper bound the conditional expectation almost surely:
    \begin{align*}
        \mathbb{E}\left[ 
            \frac{1\{i \in \mathcal{R}^\BH_\alpha(p)\}}{
                |\mathcal{R}^\BH_\alpha(p)| 
            } \mid S_i
        \right]
        &=
            \mathbb{E}\left[ 
                \frac{1\{p_i \leq \alpha |\mathcal{R}^\BH_\alpha(p)|/m \}}{
                    |\mathcal{R}^\BH_\alpha(p)| 
                } \mid S_i
        \right] \\
        &=
            \mathbb{E}\left[ 
                \frac{1\{p_i \leq \alpha |\mathcal{R}^\BH_\alpha(p^{i \gets 0})|/m \}}{
                    |\mathcal{R}^\BH_\alpha(p^{i \gets 0})| 
                } \mid S_i
        \right] \\
        &\overset{(i)}{\leq}
            \mathbb{E}\left[ 
                \max_{0 \leq k \leq n_i}
                \frac{1\{p_i \leq \alpha |\mathcal{R}^\BH_\alpha(\piktozero{k})|/m \}}{
                    |\mathcal{R}^\BH_\alpha(\piktozero{k})| 
                } \mid S_i
            \right] \\
        &=
        \int_0^{p_\textup{max}} \frac{1}{|\mathcal{R}^\BH_\alpha(\piktozero{k^*(p_i, S_i)})| } \, dp_i \\
        &=:
        \int_0^1 g(p_i; S_i) \, dp_i,
    \end{align*}
where 
    \[
        g(p_i; S_i) := 
        \begin{cases}
            |\mathcal{R}^\BH_\alpha(\piktozero{k^*(p_i, S_i)})|^{-1} & \text{$p_i \in [0, p_\textup{max}(S_i)]$} \\
            0 & \text{$p_i \in [p_\textup{max}(S_i), 1]$}
        \end{cases}.
    \]
The function $g$ is necessarily a piecewise, 
decreasing function, with range $G \subset \{1, 1/2, \dots, 1/m, 0\}$ 
where $|G \setminus \{0\}| \leq n_i + 1$, and which satisfies $p_i \leq \alpha g(p_i; S_i)^{-1}/m$
whenever $g(p_i;S_i) > 0$.

For any $g$ satisfying these conditions, enumerate its range as $G = \{1/r_1, 1/r_2, \dots, 1/r_{n_i + 1}, 0\}$ for integers 
$1 \leq r_1 \leq \dots \leq r_{n_i + 1} \leq m$. Then there exists a function $\tilde g$ satisfying these conditions, with the same range, 
of the form
\[
    \tilde g(p_i) = \begin{cases}
        r_1^{-1} & p_i \in [0, \alpha r_1/m]\\ 
        r_2^{-1} & p_i \in [\alpha r_1/m, \alpha r_2/m] \\
        \ \vdots & \\
        r_{n_i + 1}^{-1} & p_i \in [\alpha r_{n_i}/m, \alpha r_{n_i + 1}/m] \\
        0 & p_i \in [\alpha r_{n_i} /m, 1]
    \end{cases}
\]
in which $\int_0^1 g(p_i; S_i) \, dp_i \leq \int_0^1 \tilde g(p_i; S_i) \, dp_i$, which evaluates to
\[
    \int_0^1 \tilde g(p_i) \, dp_i = \frac{\alpha}{m}\left(1 + 
    \left(1 - \frac{r_1}{r_2}\right) + 
    \left(1 - \frac{r_2}{r_3}\right) + 
    \dots  + 
    \left(1 - \frac{r_{n_i}}{r_{n_i + 1}}\right)
     \right).
\]
The function $g^*$ which solves $\max_g \int_0^1 g(p_i; S_i) \,dp_i$ is then specified by maximizing the above as an integer optimization problem. Relaxing to real valued $r_1, \dots, r_{n_i + 1} \in [1,m]$ instead gives the optimizer.
$r^*_k = m^\frac{k-1}{n_i}$. Plugging this in gives the upper bound 
\[
    \int_0^1 g(p_i; S_i) \, dp_i \overset{(ii)}{\leq} \int_0^1 g^*(p_i; S_i) \, dp_i \leq \frac{\alpha}{m}\left((n_i + 1) - n_i m^{-1/n_i}  \right).
\]
Recalling that $|N_i| = n_i + 1$, using the law of total expectation gives the result. 

\end{proof}

This approach is unoptimizable in the sense that there exists a distribution where the conditional expectation is equal to the marginal, and the a.s. inequalities
(i) and (ii) become equalities. But the overall bound is likely very loose, because to achieve it, a single distribution would have to saturate all of the inequalities in \eqref{eq:fdri-byd} at once. A more refined approach should probably analyze more than one term of \eqref{eq:appfdrdecomp} at a time. 


\subsection{An FDR lower bound}\label{app:bydlower}

Let $B_1, \dots, B_{\bar k} \subset [m]$ be a clique cover 
of $\mathbb{D}$: that is, a partition of the nodes of the graph such that
each $B_k$ is a clique in $\mathbb{D}$. Let $b_k := |B_k|$. Let 
$\mathbb{P}_k$ refer to the probability distribution supported on $\{0, 1, \dots, b_k\}$ such that
$\Pr(X = s) = (\alpha b_k/m) \cdot 1 /s$ for $s \geq 0$ and $\Pr(X = 0) = 1 - (\alpha b_k/m)\sum_{s = 1}^{b_k} (1/s)$. It follows that we must restrict $\alpha$ as well so that $\Pr(X = 0) \geq 0$. 

Given such a clique cover, 
consider the following hierarchical way to sample a $p$-value vector $p \in [0,1]^m$. 

\begin{itemize}
    \item Independently for each $k \in \{1, \dots, \bar k\}$, 
    \begin{itemize}
        \item Sample $s_k \sim \mathbb{P}_k$. 
        \item Sample a subset $S_k \subset B_k$ of size $s_k$ uniformly at random. 
    \end{itemize}
    \item Conditionally on $S_1, \dots, S_k$, independently for $i \in \{1, \dots, m\}$, 
    \begin{itemize}
        \item Sample 
        \[
        p_i \sim \begin{cases}
            \textup{Unif}[\alpha (s_{\kappa[i]} - 1)/m, \alpha s_{\kappa[i]}/m] & \text{if $i \in S_{\kappa[i]}$} \\
            \textup{Unif}[\alpha b_{\kappa[i]}/m,1] & \text{if $i \notin S_{\kappa[i]}$}
        \end{cases}
        \]
    \end{itemize}
\end{itemize}

\begin{proposition}
    When $p$ is sampled as above, then it has dependency graph $\mathbb{D}$, 
    satisfies $p_i \sim \textup{Unif}[0,1]$ marginally for every $i$, and under the global null, the BH procedure has
    \[
    \FDR(\cR^\BH_\alpha) \geq 1 - \prod_{k = 1}^{\bar k} \left( 
        1 - \frac{\alpha b_k}{m}  \sum_{s = 1}^{b_k} 1/s
    \right).
    \]
\end{proposition}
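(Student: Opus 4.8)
The plan is to verify the three assertions in turn, treating the dependency-graph and marginal-uniformity claims as direct calculations and reserving the real idea for the FDR bound. For the dependency graph, note that since $B_1, \dots, B_{\bar k}$ partition $[m]$ and the blocks are sampled fully independently---the pairs $(s_k, S_k)$ are drawn independently across $k$, and given the $S_k$ the coordinates $p_i$ are conditionally independent---the sub-vectors $(p_{B_k})_k$ are mutually independent. Because each $B_k$ is a clique in $\mathbb{D}$, every $i \in B_k$ has $B_k \setminus \{i\} \subseteq N_i^\circ$, so $N_i^{\mathsf c} \subseteq \bigcup_{k' \neq \kappa[i]} B_{k'}$; block independence then yields $p_i \indep (p_j)_{j \in N_i^{\mathsf c}}$, which is exactly Definition \ref{def:dependencygraph}.

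For marginal uniformity, I would fix $i \in B_k$ and condition on $s_k = s$. By exchangeability of the uniformly random size-$s$ subset $S_k$, we have $\Pr(i \in S_k \mid s_k = s) = s/b_k$, so the conditional law of $p_i$ places mass $s/b_k$ on $\textup{Unif}[\alpha(s-1)/m, \alpha s/m]$ and mass $(b_k - s)/b_k$ on $\textup{Unif}[\alpha b_k/m, 1]$. Marginalizing over $s_k \sim \mathbb{P}_k$ splits into two regimes. For $t \in (\alpha(s-1)/m, \alpha s/m)$ with $1 \le s \le b_k$, only the \emph{inside} piece from the single index $s_k = s$ contributes, giving density $\Pr(s_k = s)\cdot \tfrac{s}{b_k}\cdot\tfrac{m}{\alpha} = \tfrac{\alpha b_k}{ms}\cdot\tfrac{s}{b_k}\cdot\tfrac{m}{\alpha} = 1$. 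For $t \in (\alpha b_k/m, 1)$, only the \emph{outside} pieces contribute, summing to $\tfrac{1}{1 - \alpha b_k/m}\sum_s \Pr(s_k = s)\tfrac{b_k - s}{b_k} = \tfrac{1}{1 - \alpha b_k/m}\big(1 - \mathbb{E}[s_k]/b_k\big)$, where the key computation is $\mathbb{E}[s_k] = \sum_{s=1}^{b_k} s\cdot\tfrac{\alpha b_k}{ms} = \tfrac{\alpha b_k^2}{m}$, so $1 - \mathbb{E}[s_k]/b_k = 1 - \alpha b_k/m$ and the density again equals $1$. Hence $p_i \sim \textup{Unif}[0,1]$.

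For the FDR bound, I would first reduce the quantity. Under the global null $\cH_0 = [m]$, so every rejection is false and $\FDP(\cR^\BH_\alpha(p)) = 1\{\cR^\BH_\alpha(p) \neq \varnothing\}$ (using the convention that no rejections give $\FDP = 0$, as the decomposition in the proof of Theorem \ref{thm:control} requires); thus $\FDR(\cR^\BH_\alpha) = \Pr(\cR^\BH_\alpha(p) \neq \varnothing)$. The crux is a deterministic implication: whenever some block has $s_k \ge 1$, BH rejects at least once. Indeed, if $s_k \ge 1$ then all $s_k$ coordinates $i \in S_k$ satisfy $p_i \le \alpha s_k/m$, so at least $s_k$ of the $p$-values fall below $\alpha s_k/m$, forcing $p_{(s_k)} \le \alpha s_k/m$ and hence $r^* \ge s_k \ge 1$ in \eqref{eq:BHdef}. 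Therefore $\{\exists k : s_k \ge 1\} \subseteq \{\cR^\BH_\alpha(p) \neq \varnothing\}$, and taking probabilities with the $s_k$ independent across blocks gives $\Pr(\cR^\BH_\alpha \neq \varnothing) \ge 1 - \prod_{k=1}^{\bar k} \Pr(s_k = 0)$; by construction $\Pr(s_k = 0) = 1 - \tfrac{\alpha b_k}{m}\sum_{s=1}^{b_k} 1/s$, which is the claimed lower bound.

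The genuinely computational step is the marginal-uniformity calculation, where the $1/s$ weights of $\mathbb{P}_k$ are engineered precisely so that each inside density collapses to $1$ and $\mathbb{E}[s_k]$ comes out to $\alpha b_k^2/m$; the main obstacle is simply keeping this bookkeeping straight across the two density regimes. The conceptual step, and the one worth stating carefully, is the observation that a single nonempty selected set $S_k$ already lifts BH's threshold index $r^*$ to at least $|S_k|$, so no interaction across blocks is needed---this is exactly what makes the final probability factorize into the stated product. I anticipate the only remaining care is the edge condition $\alpha b_k/m \le 1$, which is guaranteed by the stated restriction on $\alpha$ ensuring $\Pr(s_k = 0) \ge 0$.
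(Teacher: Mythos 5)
Your proof is correct and takes essentially the same route as the paper's: block independence plus the engineered $1/s$ weights give marginal uniformity (your computation via $\mathbb{E}[s_k]$ is just a repackaging of the paper's density calculation), and the FDR bound comes from the same observation that any block with $s_k \geq 1$ forces a BH rejection, with the product arising from independence of the $s_k$ across blocks. If anything you are slightly more careful than the paper, which asserts the rejection implication without spelling out the step $p_{(s_k)} \leq \alpha s_k/m \Rightarrow r^* \geq s_k$ that you provide.
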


It follows that the worst-case FDR $\gamma_\mathbb{D}(\alpha)$ over all models with dependency graph $\mathbb{D}$ 
is lower bounded by this quantity. 

\paragraph{Block dependence.} Not every choice of clique cover gives a useful bound. For example, taking $B_1, \dots, B_m$ with 
$B_i = \{i\}$  lower bounds the FDR by $(1 - (1 - \alpha/m)^m)$, which is smaller than $\alpha$. But under block dependence, a good
choice of clique cover is the blocks themselves. This choice, with the 
above sampling distribution, was used in Section \ref{sec:fdrexpr} for exhibiting FDR inflation. 

\begin{proof}
    The random vector $p$ has dependency graph equal to the disjoint union of the clique covers. It follows that 
    it has dependency graph $\mathbb{D}$ because $\mathbb{D}$ is strictly denser than the union, in 
    the sense of having more edges. 

    Also, $p_i$ is marginally uniform, as we can compute its density $f(x)$. For $x \in [\alpha (s-1)/m, \alpha s/m]$ and any $s \in \{0, \dots, b_{\kappa[i]} \}$, we have
    \begin{align*}
        f(x) &= \lim_{\delta x\to 0} \frac{ \Pr(p_i \in [x, x + \delta x], i \in S_{\kappa[i]}, s_{\kappa[i]} = s)}{\delta x} \\
            &= \lim_{\delta x\to 0} \frac{\Pr(p_i \in [x, x + \delta x] \mid i \in S_{\kappa[i]}, s_{\kappa[i]} = s)}{\delta x} \Pr(i \in S_{\kappa[i]} \mid s_{\kappa[i]} = s) \Pr(s_{\kappa[i]} = s) \\
               &= \lim_{\delta x\to 0} (1/\delta x) \left( m \cdot \delta x/\alpha \right) \left(s/b_{\kappa[i]} \right) \left(\alpha b_{\kappa[i]}/m \cdot 1/s \right)\\
               &= \lim_{\delta x\to 0} 1 \\
               &= 1. 
    \end{align*}
    and similarly, for $x \in [\alpha b_{\kappa[i]}/m, 1]$, 
    \begin{align*}
        f(x) &= \lim_{\delta x\to 0} \frac{ \Pr(p_i \in [x, x + \delta x], i \notin S_{\kappa[i]})}{\delta x} \\
            &= \lim_{\delta x\to 0}  \frac{ \Pr(p_i \in [x, x + \delta x] \mid i \notin S_{\kappa[i]})}{\delta x} \Pr(i \notin S_{\kappa[i]}) \\
            &= \lim_{\delta x\to 0}  \frac{ \Pr(p_i \in [x, x + \delta x] \mid i \notin S_{\kappa[i]})}{\delta x} \left( 1 - \Pr(i \in S_{\kappa[i]}) \right) \\
            &= \lim_{\delta x\to 0}  \frac{ \Pr(p_i \in [x, x + \delta x] \mid i \notin S_{\kappa[i]})}{\delta x} \left(1 - \sum_{s = 0}^{b_{\kappa[i]}} \Pr(i \in S_{\kappa[i]} \mid s_{\kappa[i]} = s) \Pr(s_{\kappa[i]} = s) \right)\\
            &= \lim_{\delta x\to 0} (1/\delta x) \frac{\delta x}{(1 - \alpha b_{\kappa[i]}/m)} \left(1 - \sum_{s = 1}^{b_{\kappa[i]}} \left(s/b_{\kappa[i]} \right) \left(\alpha b_{\kappa[i]}/m \cdot (1/s) \right) \right)\\
               &= \lim_{\delta x\to 0} 1 \\
               &= 1.
    \end{align*}
    Finally, we lower bound the FDR of BH under the global null, which is equal to the probability of rejecting any hypothesis. Since BH rejects if any $s_k > 1$, and since the $s_1, \dots, s_{\bar{k}}$ are jointly independent, we can write
    \begin{align*}
        \FDR(\cR^\BH_\alpha) 
            &\geq \Pr(s_k > 0 \text{ for some $k \in \{1, \dots, \bar k\}$}) \\
            &= 1 - \Pr(s_k = 0 \text{ for all $k \in \{1, \dots, \bar k\}$}) \\
            &= 1 - \prod_{k = 1}^{\bar k} \left( 
                1 - \frac{\alpha b_k}{m}  \sum_{s = 1}^{b_k} 1/s
            \right).
    \end{align*}
    
\end{proof}


\section{Further examples}\label{app:examples}

\subsection{A naive adjustment of BH.}\label{app:naive}

    Consider the procedure 
    \[
        \cR_\alpha^{\mathrm{naiv}_\mathbb{D}}(p) = \{ i : p_i \leq \alpha |\cR^\BH_\alpha(\bone^{\Nio} p)|/m \},
    \]
    which is a simple modification of the BH procedure's thresholds which makes it neighbor-blind, and it is monotone as well. But it loses the self-consistency property and does not control FDR under dependency graph $\mathbb{D}$, which we can demonstrate with 
    $3$ $p$-values. 
    \naivefails*

    \begin{proof}

        Let $\mathbb{D}$ be the graph with nodes $\{1,2,3\}$ and, aside from self-edges, only a single edge between $\{2, 3\}$. Then apply the construction of Section \ref{app:bydlower} with the cover $\{\{1\}, \{2,3\}\}$, which has dependency graph $\mathbb{D}$. 

        When $p \sim P$ is sampled in this way, then under the global null, 
        \[
            \FDR_P(\cR_\alpha^{\mathrm{naiv}_\mathbb{D}}(p)) = \alpha \left(1 + \frac{2 \alpha^2 (1 - \alpha)}{3(3 - 2\alpha)^2} \right) > \alpha
        \]    
        by reading off of Table \ref{tab:NBcounterex}.
    \end{proof}

    To understand this failure more intuitively, it is revealing to inspect the FDR contribution terms 
    $\FDR_i(\cR_\alpha) := \mathbb{E}\big[ 1\{i \in \cR_\alpha(p)\}/|\cR_\alpha(p)| \big]$. In 
    this case, because node $1$ has no other nodes in its neighborhood, $N_1^\circ = \varnothing$ and 
    \begin{align}
        \FDR_1(\cR_\alpha) 
            &= \mathbb{E}\left[ \frac{1\{1 \in \cR^\BH_\alpha(p)\}}{|\cR_\alpha(p)|} \right] \nonumber \\
            &= \mathbb{E}\left[ \frac{1\{p_1 \leq \alpha|\cR^\BH_\alpha(p)|/m\}}{|\cR_\alpha(p)|} \right] \nonumber \\ 
            &> \alpha/m, \label{eq:fdr1counterex}
    \end{align}
    where in this case $m = 3$ $p$-values. Our main approach, used in Theorem \ref{thm:control} and \ref{thm:pprdcontrol}, would have been to show that $\FDR_i(\cR_\alpha) \leq \alpha/m$ for every $i$. 
    
    Roughly, the set $\cR^\BH_\alpha(\bone^{N_1^\circ} p) = \cR^\BH_\alpha(p)$ used in $p_1$'s local threshold does not account 
    for what the other agents can ``see'': that is, $p_2$ and $p_3$ cannot see each other's $p$-values in their thresholds, which can lead to fewer rejections such that $\cR^\BH_\alpha(\bone^{N_1^\circ} p)$ fails to be a rejection lower bound. Overall, $\cR^\BH_\alpha(\bone^{N_1^\circ} p) > |\cR_\alpha(p)|$ too often, and $H_1$ is rejected too frequently, so $\FDR_1(\cR_\alpha) > \alpha/m$. 

    In this specific example, we also have $\FDR_2(\cR_\alpha) = \FDR_3(\cR_\alpha) < \alpha/m$, but the total $\FDR(\cR_\alpha) = \sum_{i = 1}^3 \FDR_i(\cR_\alpha)$ still exceeds $\alpha$. 
        \begin{table}[tbp]
        \footnotesize
        \renewcommand{\arraystretch}{2}
        \begin{tabular}{c|cccc}
        \toprule
        {\small\textbf{Events}} & $\{p_1 \in A_1\}$ & $\{p_1 \in A_2\}$ & $\{p_1 \in A_3\}$ & $\{p_1 \in A_\times\}$ \\
        \midrule
        $\{p_2 \in A_2, p_3 \in A_2\}$ & $\dfrac{\alpha^2}{9}$, $\{1,2,3\}$ & $\dfrac{\alpha^2}{9}$, $\{1,2,3\}$ & $\dfrac{\alpha^2}{9}$, $\{1\}$ & $\dfrac{\alpha(1-\alpha)}{3}$, $\{\}$ \\
        $\{p_2 \in A_1, p_3 \in A_3\}$ & $\dfrac{\alpha^2 \theta}{9}$, $\{1,2\}$ & $\dfrac{\alpha^2 \theta}{9}$, $\{1,2\}$ & $\dfrac{\alpha^2 \theta}{9}$, $\{1,2\}$& $\dfrac{\alpha(1-\alpha)\theta}{3}$, $\{2\}$ \\
        $\{p_2 \in A_1, p_3 \in A_\times\}$ & $\dfrac{\alpha^2 (1-\theta)}{9}$, $\{1,2\}$ & $\dfrac{\alpha^2 (1-\theta)}{9}$, $\{1,2\}$ & $\dfrac{\alpha^2 (1-\theta)}{9}$, $\{2\}$ & $\dfrac{\alpha(1-\alpha)(1-\theta)}{3}$, $\{2\}$ \\
        $\{p_2 \in A_3, p_3 \in A_1\}$ & $\dfrac{\alpha^2 \theta}{9}$, $\{1,3\}$ & $\dfrac{\alpha^2 \theta}{9}$, $\{1,3\}$ & $\dfrac{\alpha^2 \theta}{9}$, $\{1,3\}$ & $\dfrac{\alpha(1-\alpha)\theta}{3}$, $\{3\}$ \\
        $\{p_2 \in A_\times, p_3 \in A_1\}$ & $\dfrac{\alpha^2 (1-\theta)}{9}$, $\{1,3\}$ & $\dfrac{\alpha^2 (1-\theta)}{9}$, $\{1,3\}$ & $\dfrac{\alpha^2 (1-\theta)}{9}$, $\{3\}$ & $\dfrac{\alpha(1-\alpha)(1-\theta)}{3}$, $\{3\}$ \\
        $\{p_2 \in A_3, p_3 \in A_3\}$ & $\dfrac{\alpha(1-\alpha)\theta^2}{3}$, $\{1\}$ & $\dfrac{\alpha(1-\alpha)\theta^2}{3}$, $\{1\}$ & $\dfrac{\alpha(1-\alpha)\theta^2}{3}$, $\{1\}$ & $(1-\alpha)^2 \theta^2$, $\{\}$ \\
        $\{p_2 \in A_3, p_3 \in A_\times\}$ & $\dfrac{\alpha(1-\alpha)\theta(1-\theta)}{3}$, $\{1\}$ & $\dfrac{\alpha(1-\alpha)\theta(1-\theta)}{3}$, $\{\}$ & $\dfrac{\alpha(1-\alpha)\theta(1-\theta)}{3}$, $\{\}$ & $(1-\alpha)^2 \theta(1-\theta)$, $\{\}$ \\
        $\{p_2 \in A_\times, p_3 \in A_3\}$ & $\dfrac{\alpha(1-\alpha)\theta(1-\theta)}{3}$, $\{1\}$ & $\dfrac{\alpha(1-\alpha)\theta(1-\theta)}{3}$, $\{\}$ & $\dfrac{\alpha(1-\alpha)\theta(1-\theta)}{3}$, $\{\}$ & $(1-\alpha)^2 \theta(1-\theta)$, $\{\}$ \\
        $\{p_2 \in A_\times, p_3 \in A_\times\}$ & $\dfrac{\alpha(1-\alpha)(1-\theta)^2}{3}$, $\{1\}$ & $\dfrac{\alpha(1-\alpha)(1-\theta)^2}{3}$, $\{\}$ & $\dfrac{\alpha(1-\alpha)(1-\theta)^2}{3}$, $\{\}$ & $(1-\alpha)^2 (1-\theta)^2$, $\{\}$ \\
        \bottomrule
        \end{tabular}
        \caption{Table of joint probabilities and rejection sets for the events in the row and column headers, given the joint distribution and multiple testing procedure of Example XX. We set $A_k = [\alpha(k-1)/3, \alpha k/3]$, $A_\times = [\alpha,1]$, and $\theta = \frac{\alpha/3}{1 - 2\alpha/3}$.}
        \label{tab:NBcounterex}
    \end{table}

\subsection{The sufficient conditions are not necessary.}\label{app:notnecessary}

\begin{example}[Neighbor-blindness is not necessary]
    The BY procedure is not neighbor-blind, but controls FDR under any dependency graph $\mathbb{D}$. 
\end{example}

\begin{example}[Monotonicity is not necessary]
    The procedure that rejects $H_i$ if and only if $p_i \in [\alpha/m, 2\alpha/m]$
    where $m$ is the number of hypotheses is not monotone, but controls FDR under any dependency graph $\mathbb{D}$. 
\end{example}

\begin{example}[Self-consistency is not necessary]
    Applying conditional calibration as described by the discussion 
    surrounding Equation \eqref{eq:bettercondcal} controls FDR under a dependency graph $\mathbb{D}$, 
    but is not self-consistent in general. 
\end{example}

\subsection{$\text{IndBH}^{(\infty)}$ is not optimal.}\label{app:indbh-notoptimal}

Figure \ref{fig:IndBH_not_optimal} demonstrates that $\IndBH^{(\infty)}(\alpha)$ is 
different from $\SU(\alpha)$ in general. 

\begin{figure}[ht]
  \centering

  \begin{minipage}[t]{0.96\textwidth}
  \centering
  \begin{tikzpicture}[scale = 1.5, every node/.style={circle, draw, minimum size=10mm, font = \Large}]
    \node (1) at (1.5,0.8) {1};
    \node (2) at (1.5,-0.8) {2};
    \node (3) at (3,0) {3};
    \node (4) at (4.5,0.8) {4};
    \node (5) at (4.5,-0.8) {5};

    \draw (3) -- (1) -- (2) -- (3);
    \draw (3) -- (4);
    \draw (3) -- (5);

    \node[above=-10pt of 1, draw=none, inner sep=0pt, font=\normalsize] {$p_1 = 0.00$};
    \node[below=-10pt of 2, draw=none, inner sep=0pt, font=\normalsize] {$p_2 = 0.00$};
    \node[below=-10pt of 3, draw=none, inner sep=0pt, font=\normalsize] {$p_3 = 0.00$};
    \node[above=-10pt of 4, draw=none, inner sep=0pt, font=\normalsize] {$p_4 = 0.04$};
    \node[below=-10pt of 5, draw=none, inner sep=0pt, font=\normalsize] {$p_5 = 0.04$};  
  \end{tikzpicture}
  \end{minipage}
  \caption{When $\alpha = 0.05$, the $\IndBH^{(\infty)}(\alpha)$ procedure gives the same rejection set as 
  $\IndBH^{(2)}(\alpha)$---the fixed point is reached after only one iteration. This rejection set is $H_3, H_4, H_5$, but $\SU(\alpha)$ rejects everything.}
  \label{fig:IndBH_not_optimal}
\end{figure}
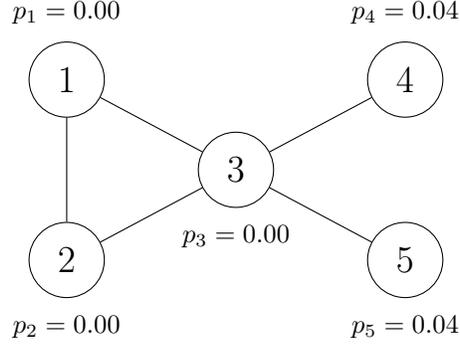

\section{Closure properties for $\mathbb{D}$-adapted procedures}\label{app:closure}

Given an initial procedure $\cR_\alpha$ and an independent set $I \in \Ind(\mathbb{D})$, 
define $N_I^\circ = \bigcup_{i \in I} N^\circ_i$, and
\[
    \cR^I_\alpha(p) = 
        \begin{cases}
            I \cup \cR_\alpha(p)  & \text{if } \forall i \in I,\ p_i \leq \alpha |I \cup \cR_\alpha(\bone^{N_I^\circ} p)|/m \\
            \cR_\alpha(p) & \text{otherwise}
        \end{cases}
\]
This can be related to the gap chasing update of Theorem \ref{thm:gapchase}. Let
\[
    \cR_\alpha^+(p) = \left\{i: p_i 
    \leq \frac{\alpha |\{i\} \cup \cR_\alpha(\bone^{\Nio} p)|}{m}  \right\},
\]
so $\cR_\alpha^+$ is the result of the update applied to $\cR_\alpha$.  We remark that if $\cR_\alpha$ is $\mathbb{D}$-adapted, then the formula
\[
    \cR_\alpha^+(p) = \bigcup_{i = 1}^m \cR^{\{i\}}_\alpha(p)
\]
holds---the inclusion 
$\cR_\alpha^+(p) \subset \bigcup_{i = 1}^m \cR^{\{i\}}_\alpha(p)$ can be 
seen directly, while the reverse inclusion 
$\cR_\alpha^+(p) \supset \bigcup_{i = 1}^m \cR^{\{i\}}_\alpha(p)$ can also 
be seen directly once we recall from the proof of Theorem \ref{thm:gapchase}\ref{item:gapchaseD}
that $\cR_\alpha^+(p) \supset \cR_\alpha(p)$ whenever $\cR_\alpha$ is $\mathbb{D}$-adapted. 

The next two propositions, combined with the above observation, 
give an alternate proof for Theorem \ref{thm:gapchase}\ref{item:gapchaseD}. However, we
frame them now as \emph{closure properties} of the family of $\mathbb{D}$-adapted procedures. 
Each defines an operation on $\mathbb{D}$-adapted procedures that returns a new 
such procedure. 

\begin{proposition}\label{prop:unionrule}
    Let $\cR^{(1)}_\alpha$ and $\cR^{(2)}_\alpha$ are $\mathbb{D}$-adapted, and define 
    $\cR^\textup{un}_\alpha(p) = \cR^{(1)}_\alpha(p) \cup \cR^{(2)}_\alpha(p)$. Then $\cR^\textup{un}_\alpha$ 
    is $\mathbb{D}$-adapted. 
\end{proposition}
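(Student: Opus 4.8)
The plan is to verify each of the three defining properties \ref{item:SC}--\ref{item:NB} in turn for $\cR^\textup{un}_\alpha$, with self-consistency being the only one that requires any care. Monotonicity and neighbor-blindness pass through the union operation transparently, so I would dispatch them first.

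For monotonicity \ref{item:Mon}, if $p' \preceq p$ then $\cR^{(j)}_\alpha(p) \subseteq \cR^{(j)}_\alpha(p')$ for each $j \in \{1,2\}$; taking unions on both sides gives $\cR^\textup{un}_\alpha(p) \subseteq \cR^\textup{un}_\alpha(p')$. For neighbor-blindness \ref{item:NB}, the equivalence $i \in \cR^{(j)}_\alpha(p) \iff i \in \cR^{(j)}_\alpha(\oneNio p)$ holds for each $j$, and since $i$ lies in the union if and only if it lies in at least one of the two component sets, the equivalence is inherited by $\cR^\textup{un}_\alpha$.

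The crux is self-consistency \ref{item:SC}. Suppose $i \in \cR^\textup{un}_\alpha(p)$; then $i \in \cR^{(j)}_\alpha(p)$ for at least one $j$, and self-consistency of that procedure gives $p_i \leq \alpha |\cR^{(j)}_\alpha(p)|/m$. Since $|\cR^{(j)}_\alpha(p)| \leq \max\{|\cR^{(1)}_\alpha(p)|, |\cR^{(2)}_\alpha(p)|\}$, and since each inclusion $\cR^{(j)}_\alpha(p) \subseteq \cR^\textup{un}_\alpha(p)$ forces $\max\{|\cR^{(1)}_\alpha(p)|, |\cR^{(2)}_\alpha(p)|\} \leq |\cR^\textup{un}_\alpha(p)|$, I would chain these to conclude $p_i \leq \alpha |\cR^\textup{un}_\alpha(p)|/m$, which is exactly \ref{item:SC}.

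The only subtlety -- and it is minor -- is that self-consistency of the component procedure bounds $p_i$ only by the size of whichever single set rejected $i$, not by the size of the union. The observation that rescues this is that forming the union can only enlarge the rejection set, so the bound is preserved after passing to the larger set; this is precisely the two-step implication already recorded in the proof of Proposition \ref{prop:IndBH1}. Having checked all three properties, I would conclude that $\cR^\textup{un}_\alpha$ is $\mathbb{D}$-adapted.
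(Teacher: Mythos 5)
Your proposal is correct and follows essentially the same route as the paper, which proves this claim inside the proof of Proposition \ref{prop:IndBH1}: self-consistency via $p_i \leq \frac{\alpha}{m}\max\{|\cR^{(1)}_\alpha(p)|, |\cR^{(2)}_\alpha(p)|\} \leq \frac{\alpha}{m}|\cR^\textup{un}_\alpha(p)|$, with monotonicity and neighbor-blindness passing through the union directly. The only difference is that you spell out the two easy properties that the paper dismisses as immediate.
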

\begin{proof}
    Proven as part of the proof of Proposition \ref{prop:IndBH1}. 
\end{proof}

\begin{proposition}\label{prop:extensionrule}
    If $\cR_\alpha$ is $\mathbb{D}$-adapted, then for any $I \in \Ind(\mathbb{D})$, $\cR^I_\alpha$ is $\mathbb{D}$-adapted. 
\end{proposition}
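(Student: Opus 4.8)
The plan is to verify directly the three defining properties of a $\mathbb{D}$-adapted procedure—self-consistency \ref{item:SC}, monotonicity \ref{item:Mon}, and neighbor-blindness \ref{item:NB}—for $\cR^I_\alpha$, leaning throughout on the same properties for $\cR_\alpha$. It is convenient to name the \emph{trigger event} $\mathcal{T}(p) := \{\forall i \in I,\ p_i \leq \alpha|I \cup \cR_\alpha(\bone^{\NIo} p)|/m\}$, so that $\cR^I_\alpha(p) = I \cup \cR_\alpha(p)$ on $\mathcal{T}(p)$ and $\cR^I_\alpha(p) = \cR_\alpha(p)$ otherwise. Two elementary observations get used repeatedly: since masking only raises $p$-values, monotonicity of $\cR_\alpha$ gives $\cR_\alpha(\bone^{\NIo}p) \subseteq \cR_\alpha(p)$; and regardless of the trigger, $\cR_\alpha(p) \subseteq \cR^I_\alpha(p) \subseteq I \cup \cR_\alpha(p)$.

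For self-consistency I would take $j \in \cR^I_\alpha(p)$ and split on $\mathcal{T}(p)$. When the trigger fails the claim is immediate from self-consistency of $\cR_\alpha$, since then $\cR^I_\alpha(p) = \cR_\alpha(p)$. When it holds, a rejection $j \in \cR_\alpha(p)$ again reduces to $\cR_\alpha$ because $\cR_\alpha(p) \subseteq \cR^I_\alpha(p)$, while for $j \in I$ the trigger gives $p_j \leq \alpha|I \cup \cR_\alpha(\bone^{\NIo}p)|/m$, and the first observation upgrades this via $|I \cup \cR_\alpha(\bone^{\NIo}p)| \leq |I \cup \cR_\alpha(p)| = |\cR^I_\alpha(p)|$. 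For monotonicity I would first check that $\mathcal{T}$ is itself monotone: if $p' \preceq p$ then $p'_i \leq p_i$ and $\bone^{\NIo}p' \preceq \bone^{\NIo}p$, so $\cR_\alpha(\bone^{\NIo}p) \subseteq \cR_\alpha(\bone^{\NIo}p')$ and every trigger inequality only loosens, whence $\mathcal{T}(p) \Rightarrow \mathcal{T}(p')$. Combining this with $\cR_\alpha(p) \subseteq \cR_\alpha(p')$ and the sandwich above yields $\cR^I_\alpha(p) \subseteq \cR^I_\alpha(p')$ in both trigger cases.

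The main obstacle is neighbor-blindness, namely $j \in \cR^I_\alpha(p) \iff j \in \cR^I_\alpha(\bone^{\Njo}p)$ for each $j$, since the trigger couples all coordinates of $I$ at once. The clean route is to separate $j \notin I$ from $j \in I$. When $j \notin I$, adjoining $I$ never contributes $j$, so $j \in \cR^I_\alpha(p) \iff j \in \cR_\alpha(p)$ irrespective of the trigger, and neighbor-blindness of $\cR_\alpha$ closes this case. When $j \in I$, the key point is that masking $\Njo$ leaves $\mathcal{T}$ unchanged: because $I$ is independent no other node of $I$ lies in $\Njo$, so $(\bone^{\Njo}p)_i = p_i$ for every $i \in I$; and because $\Njo \subseteq \NIo$, the masking is absorbed, $\bone^{\NIo}\bone^{\Njo}p = \bone^{\NIo}p$, so the set entering the trigger is unaffected. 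Hence $\mathcal{T}(\bone^{\Njo}p) = \mathcal{T}(p)$, after which on the trigger both $\cR^I_\alpha(p)$ and $\cR^I_\alpha(\bone^{\Njo}p)$ contain $j \in I$, and off the trigger both reduce to $\cR_\alpha$, where neighbor-blindness of $\cR_\alpha$ again applies. Assembling the three verified properties gives that $\cR^I_\alpha$ is $\mathbb{D}$-adapted.
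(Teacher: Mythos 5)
Your proof is correct and follows essentially the same route as the paper's: verify \ref{item:SC}, \ref{item:Mon}, and \ref{item:NB} directly, using monotonicity of $\cR_\alpha$ to bound $|I \cup \cR_\alpha(\bone^{\NIo}p)| \leq |I \cup \cR_\alpha(p)|$ for self-consistency, and splitting neighbor-blindness into $j \notin I$ (where NB of $\cR_\alpha$ applies) versus $j \in I$ (where independence of $I$ and $\Njo \subseteq \NIo$ make the trigger event invariant under masking $\Njo$). If anything, your write-up is slightly more complete than the paper's, which asserts monotonicity of $\cR^I_\alpha$ without spelling out that the trigger event is itself monotone, a detail you verify explicitly.
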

\begin{proof}
    Self-consistency holds because if $i \in \cR_\alpha(p)$, then 
    \[
        p_i \leq \alpha |\cR_\alpha(p)|/m \leq |\cR^I_\alpha(p)|/m
    \]
    by self-consistency of $\cR_\alpha$, and if $i \in \cR^I_\alpha(p)$ but
    not in $\cR_\alpha(p)$, we must have $I \cup \cR_\alpha(p) = \cR^I_\alpha(p)$ and
    \[
        p_i \leq \alpha |I \cup \cR_\alpha(\bone^{N_I^\circ} p)|/m \leq \alpha |I \cup \cR_\alpha(p)|/m \leq \alpha |\cR^I_\alpha(p)|/m.
    \]
    Monotonicity holds because $\cR_\alpha$ is monotone. Finally, neighbor-blindness holds because, if $i \notin I$, 
    then $i \in \cR^I_\alpha(\bone^{\Nio} p) \Rightarrow i \in \cR^I_\alpha(p)$ by the neighbor-blindness of $\cR_\alpha$, and if $i \notin I$, the event $\{\forall i \in I,\ p_i \leq \alpha |I \cup \cR_\alpha(\bone^{N_I^\circ} p)|/m\}$ does not depend on any $p$-values in $\Nio$, so again
    $i \in \cR^I_\alpha(\bone^{\Nio} p) \Rightarrow i \in \cR^I_\alpha(p)$. Monotonicity gives $i \in \cR^I_\alpha(\bone^{\Nio} p) \Leftarrow i \in \cR^I_\alpha(p)$. 
\end{proof}

Given this, a natural object would be the family of procedures \emph{generated} by these 
operations, given a starting set of procedures. But even if the initial procedure is trivial, 
we can derive non-trivial procedures. For example, 
if $\cR_\alpha$ always returns the empty set, applying Proposition \ref{prop:extensionrule}
with any independent set $I$ to it gives the set 
\[
    \cR^I_\alpha(p) = \begin{cases}
        I & \text{if } I = \cR^\BH_\alpha(\bone^{I^\mathsf{c}} p) \\
        \varnothing & \text{o.w.}
    \end{cases}. 
\]
We can repeat this for all $I' \subset I$ 
and use Proposition \ref{prop:unionrule} to union them all together, to get the set 
$\cR^\BH_\alpha(\bone^{I^\mathsf{c}} p) = \cup_{I' \subset I} \cR^I_\alpha(p)$. Then using
Proposition \ref{prop:unionrule} for all $I \in \Ind(\mathbb{D})$ 
gets $\cR^{\IndBH_\mathbb{D}}_\alpha(p)$. 

For a given $\alpha$, let $\mathbf{clo}_\mathbb{D}(\alpha)$ be the \emph{$\mathbb{D}$-adapted closure}---the family
of $\mathbb{D}$-adapted procedures that can be generated using Propositions \ref{prop:unionrule}
and \ref{prop:extensionrule}, starting from the set $\{\varnothing\}$, where
$\varnothing$ is the $\mathbb{D}$-adapted ``procedure'' that always returns the empty set. We have seen that for any $I \in \Ind(\mathbb{D})$, 
both $\cR^\BH_\alpha(\bone^{I^\mathsf{c}} (\cdot) )$ and $\cR^\IndBH_\alpha$ are in $\mathbf{clo}_\mathbb{D}(\alpha)$. However,           it contains more. 

\begin{proposition}
    For $\alpha \in [0,1)$, $\cR^{\SU_\mathbb{D}}_\alpha$ is contained inside $\mathbf{clo}_\mathbb{D}(\alpha)$. 
\end{proposition}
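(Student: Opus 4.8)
The plan is to realize $\cR^{\SU_\mathbb{D}}_\alpha$ as the greatest element of $\mathbf{clo}_\mathbb{D}(\alpha)$, obtained by iterating a single closure operation to a fixed point. Define the operator $T$ on procedures by $T(\cR)_\alpha(p) := \bigcup_{I \in \Ind(\mathbb{D})} \cR^I_\alpha(p)$, a finite union since $\mathbb{D}$ has finitely many independent sets. By Proposition \ref{prop:extensionrule}, each $\cR^I_\alpha$ is $\mathbb{D}$-adapted and lies in $\mathbf{clo}_\mathbb{D}(\alpha)$ whenever $\cR_\alpha$ does, and by Proposition \ref{prop:unionrule} the same holds for their union, so $T$ maps $\mathbf{clo}_\mathbb{D}(\alpha)$ into itself. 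Moreover $T$ is monotone: if $\cR^{(1)}_\alpha(p) \subseteq \cR^{(2)}_\alpha(p)$ for all $p$, then masking can only enlarge the extension threshold, so $\cR^{(1),I}_\alpha \subseteq \cR^{(2),I}_\alpha$ and hence $T(\cR^{(1)})_\alpha \subseteq T(\cR^{(2)})_\alpha$. Finally, letting $I$ range over singletons shows $T(\cR)_\alpha \supseteq \cR_\alpha^+ = \bigcup_{i=1}^m \cR^{\{i\}}_\alpha$, the gap-chasing update, so $T(\cR)_\alpha \supseteq \cR_\alpha$ for every $\mathbb{D}$-adapted $\cR_\alpha$ by Theorem \ref{thm:gapchase}\ref{item:gapchaseD}.

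Next I would iterate $T$ starting from the trivial procedure $\varnothing$. The sequence $\varnothing, T(\varnothing), T^2(\varnothing), \dots$ is pointwise non-decreasing and stays inside $\mathbf{clo}_\mathbb{D}(\alpha)$; since each term is $\mathbb{D}$-adapted, each is contained in $\cR^{\SU_\mathbb{D}}_\alpha$ by the optimality in Proposition \ref{prop:SUopt}. Because the rejection sets are subsets of $[m]$ and each strict increase adds at least one index, the sequence stabilizes after at most $m$ steps, uniformly in $p$; write $\cR^\infty_\alpha := T^m(\varnothing)$ for the resulting fixed point. Then $\cR^\infty_\alpha \in \mathbf{clo}_\mathbb{D}(\alpha)$, it is $\mathbb{D}$-adapted, and $\cR^\infty_\alpha(p) \subseteq \cR^{\SU_\mathbb{D}}_\alpha(p)$; restricting the identity $T(\cR^\infty)_\alpha = \cR^\infty_\alpha$ to singleton independent sets moreover shows that $\cR^\infty_\alpha$ is itself a fixed point of the gap-chasing update.

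It remains to prove the reverse inclusion $\cR^\infty_\alpha(p) \supseteq \cR^{\SU_\mathbb{D}}_\alpha(p)$, which is where the real work lies. I would argue by contradiction at a fixed $p$: if $R := \cR^{\SU_\mathbb{D}}_\alpha(p)$ strictly contains $R^\infty := \cR^\infty_\alpha(p)$, choose $I$ to be a maximal independent subset of $R \setminus R^\infty$ and aim to show that $I$ satisfies the extension condition $p_i \leq \alpha |I \cup \cR^\infty_\alpha(\bone^{\NIo} p)|/m$ for every $i \in I$; applying Proposition \ref{prop:extensionrule} with this $I$ to $\cR^\infty_\alpha$ would then produce $I \cup R^\infty \supsetneq R^\infty$, contradicting $T(\cR^\infty)_\alpha = \cR^\infty_\alpha$. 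The crux, and the main obstacle, is verifying this threshold: masking by $\bone^{\NIo}$ raises the masked $p$-values and hence, by monotonicity, can only shrink $\cR^\infty_\alpha$, so one must show that enough of $R^\infty$ survives the masking—together with $I$—to certify the same local threshold that $\cR^{\SU_\mathbb{D}}_\alpha$ meets on $\bone^{\Nio} p$ through its step-up identity $i \in \cR^{\SU_\mathbb{D}}_\alpha(p) \iff p_i \leq \alpha |\{i\} \cup \cR^{\SU_\mathbb{D}}_\alpha(\bone^{\Nio} p)|/m$. I expect this to require relating $\cR^\infty_\alpha$ and $\cR^{\SU_\mathbb{D}}_\alpha$ on masked inputs, most naturally by an induction on $|[m] \setminus R^\infty|$ or on the gap-chasing depth, so that the \emph{coordinated} joint rejection of the whole set $I$—which no singleton extension could achieve, precisely because $\cR^\infty_\alpha$ is already gap-chasing-stable—is exactly what the look-ahead masking of the extension rule supplies. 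Once this is established we conclude $\cR^\infty_\alpha = \cR^{\SU_\mathbb{D}}_\alpha$, and since $\cR^\infty_\alpha \in \mathbf{clo}_\mathbb{D}(\alpha)$ the proof is complete.
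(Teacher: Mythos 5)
Your setup is sound as far as it goes: the operator $T(\cR)_\alpha := \bigcup_{I \in \Ind(\mathbb{D})} \cR^I_\alpha$ does map $\mathbf{clo}_\mathbb{D}(\alpha)$ into itself, is monotone, and dominates the gap-chasing update, so iterating from $\varnothing$ yields a $\mathbb{D}$-adapted procedure $\cR^\infty_\alpha \in \mathbf{clo}_\mathbb{D}(\alpha)$ with $\cR^\infty_\alpha(p) \subseteq \cR^{\SU_\mathbb{D}}_\alpha(p)$ by Proposition \ref{prop:SUopt}. Even here one repair is needed: the claim that the iteration "stabilizes after at most $m$ steps, uniformly in $p$" does not follow, because $T^{k+1}(\varnothing)(p)$ depends on $T^k(\varnothing)$ evaluated at the masked vectors $\bone^{\NIo} p$, so at a fixed $p$ the sequence can plateau and later grow again. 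A correct argument tracks the values of $T^k(\varnothing)$ on the finitely many vectors $\{\bone^A p : A \subseteq [m]\}$, whose evolution is deterministic and componentwise non-decreasing; this still yields a finite stabilization bound uniform in $p$, just not $m$.

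The genuine gap is the reverse inclusion $\cR^{\SU_\mathbb{D}}_\alpha(p) \subseteq \cR^\infty_\alpha(p)$, which is the entire content of the proposition and which your proposal explicitly leaves as a plan ("I expect this to require\dots") rather than a proof. The step you defer is not a technicality. Your contradiction argument needs, for a maximal independent $I \subseteq R \setminus R^\infty$, the bound $p_i \leq \alpha\,|I \cup \cR^\infty_\alpha(\bone^{\NIo}p)|/m$ for all $i \in I$, whereas the step-up identity only gives $p_i \leq \alpha\,|\{i\} \cup \cR^{\SU_\mathbb{D}}_\alpha(\bone^{\Nio}p)|/m$; passing from the latter to the former loses in two directions at once, since $\cR^\infty_\alpha$ is a priori smaller than $\cR^{\SU_\mathbb{D}}_\alpha$ and $\bone^{\NIo}$ masks strictly more than $\bone^{\Nio}$. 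Monotonicity plus stability under the singleton (gap-chasing) update cannot close this gap: Appendix \ref{app:indbh-notoptimal} exhibits a $\mathbb{D}$-adapted, gap-chasing-stable procedure, $\IndBH^{(\infty)}$, that is strictly inside $\SU_\mathbb{D}$. So your argument must exploit stability under the coordinated extensions $\cR \mapsto \cR^I$ quantitatively, by relating $\cR^\infty_\alpha$ and $\cR^{\SU_\mathbb{D}}_\alpha$ on masked inputs---exactly the comparison you name ("induction on $|[m]\setminus R^\infty|$ or on the gap-chasing depth") but never carry out. This is what the paper's proof supplies, via a downward induction over index sets $K \subseteq [m]$: it constructs, for every $K$, a member of $\mathbf{clo}_\mathbb{D}(\alpha)$ dominating $\cR_\alpha(\bone^{\NKo}p)$ for all $p$, the inductive step using neighbor-blindness to decompose the rejection set into calls at further-masked vectors (covered by the inductive hypothesis) together with a set of $\mathbb{D}$-singletons folded in by the extension rule. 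Until an analogue of that induction is proved, your argument establishes only that $\mathbf{clo}_\mathbb{D}(\alpha)$ has a greatest element contained in $\cR^{\SU_\mathbb{D}}_\alpha$, not that this element equals $\cR^{\SU_\mathbb{D}}_\alpha$.
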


Because $\SU_\mathbb{D}$ is optimal, this suggests that the operations 
of Propositions \ref{prop:unionrule}
and \ref{prop:extensionrule}
in some sense maximally explore the space of $\mathbb{D}$-adapted procedures.


\begin{proof}
    Let $\cR_\alpha$ be $\mathbb{D}$-adaptive. Due to the optimality of $\SU_\mathbb{D}$, it is enough to show that 
    $\cR_\alpha(p) \subset \cR'_\alpha(p)$ for some $\cR'_\alpha \in \mathbf{clo}_\mathbb{D}(\alpha)$ for all $p$. 
    Our strategy will to be induct over subsets $K \subset [m]$ in decreasing cardinality, showing that for every $K$ there exists a procedure $\cR'_\alpha \in \mathbf{clo}_\mathbb{D}(\alpha)$ satisfying 
    \begin{equation}\label{eq:induct}
        \cR_\alpha(\bone^{\NKo} p) \subset \cR'_\alpha(\bone^{\NKo} p) \quad \text{for all $p$}. 
    \end{equation}
   The result follows by taking $K = \varnothing$. 
   
   For what follows, let $K^* \subset K$ denote the $\mathbb{D}$-\emph{singletons} in $K$---
    elements of $K$ which, seen as nodes in $\mathbb{D}$, are disconnected from all other nodes in $K$. (Note that $K^*$ is always an independent set.) Then for any $K \subset [m]$,
    \begin{equation}\label{eq:Nc2singleton}
        (\NKo)^\mathsf{c} \subset K^* \cup K^\mathsf{c}.
    \end{equation}
    First, let $A = [m]$, which will serve as an induction base case. Then
    \[
    \cR_\alpha(\bone^{\NAo} p) \subset \cR^\BH_\alpha(\bone^{\NAo} p) = \cR^\BH_\alpha(\bone^{{A^*}^\mathsf{c}}p),
    \]
    but we saw, in the discussion around the definition of $\mathbf{clo}_\mathbb{D}(\alpha)$, that if $\cR'(p) := \cR^\BH_\alpha(\bone^{{A^*}^\mathsf{c}}p) = \cR'(\bone^{{A^*}^\mathsf{c}}p)$, we know $\cR'$ is in $\mathbf{clo}_\mathbb{D}(\alpha)$, because $A^*$ is an independent set. 

    Next, assume that the induction hypothesis \eqref{eq:induct} holds for all $K$ of cardinality $k + 1$, and suppose that $A \subset [m]$ has cardinality $k$. Because $\alpha < 1$, self-consistency gives $i \in \cR_\alpha(p) \Rightarrow p_i < 1$, and we have the implications 
    \[
    i \in \cR_\alpha(\bone^{\NAo}p) \Rightarrow i \in (\NAo)^\mathsf{c} \Rightarrow i \in A^*  \text{ or } i \notin A,
    \]
    by \eqref{eq:Nc2singleton}. Then by neighbor-blindness of $\cR_\alpha$,
    \[
    \cR_\alpha(\bone^{\NAo}p) = \bigcup_{j \in \cR_\alpha(\bone^{\NAo}p)} \cR_\alpha(\bone^{\Njo}\bone^{\NAo}p) \subset A^* \cup \bigcup_{j \notin A} \cR_\alpha(\bone^{\Njo} \bone^{\NAo} p) \subset A^* \cup \cQ_\alpha(\bone^{\NAo} p)
    \]
    where $\cQ_\alpha(p) := \bigcup_{j \notin A} \cR^j_\alpha(p)$
    for some procedures $\cR_\alpha^j \in \mathbf{clo}_\mathbb{D}(\alpha)$ satisfying 
    $\cR^j_\alpha(\bone^{\Njo} \bone^{\NAo} p) \supset \cR_\alpha(\bone^{\Njo} \bone^{\NAo} p)$,
    which exist by induction. Because it takes unions, we have $\cQ_\alpha(p) \in \mathbf{clo}_\mathbb{D}(\alpha)$. 
    
    For a set $T \subset [m]$, let 
    \[
         \operatorname{SC}(T) = \bigcup \{S \subset T : \forall i \in S,\ p_i \leq \alpha |S|/m\}. 
    \]
    Whenever $T_1 \subset T_2$, it follows that $\operatorname{SC}(T_1) \subset \operatorname{SC}(T_2)$. Using this fact, 
    along with the self-consistency of $\cR_\alpha$, as well as of $\cQ_\alpha$ from Propositions \ref{prop:unionrule} and \ref{prop:extensionrule}, we have
    \[
        \cR_\alpha(\bone^{\NAo}p) = \operatorname{SC}(\cR_\alpha(\bone^{\NAo}p))  \subset 
        \operatorname{SC}(\cR_\alpha(\bone^{\NAo}p)) = B^* \cup \cQ_\alpha(\bone^{\NAo} p)
    \]
    for some subset $B^* \subset A^*$, such that for all $i \in B^*$, we have 
    $p_i \leq \alpha |\cQ_\alpha(\bone^{\NAo} p)|/m$.

    But then it follows that, because $\bone^{\NAo}\bone^{N_{B^*}^\circ} = \bone^{\NAo}$, and defining 
    \[
        \cQ_\alpha^{B^*}(p) := 
            \begin{cases}
                B^* \cup \cQ_\alpha(p) & \forall i \in B^*, p_i \leq \alpha|\cQ_\alpha(\bone^{N_{B^*}^\circ} p) |/m \\
                \cQ_\alpha(p) &\text{o.w.}
            \end{cases}
    \]
    we must have $\cQ_\alpha^{B^*}(\bone^{\NAo} p) = B^* \cup \cQ_\alpha(\bone^{\NAo} p)$, and hence $\cR_\alpha(\bone^{\NAo}p) \subset \cQ^{B^*}_\alpha(\bone^{\NAo}p)$.  This choice of $B^*$ works for a specific $p$, 
    but we can always take
    \[
        \cR'_\alpha(p) = \bigcup_{B^* \subset A^*} \cQ_\alpha^{B^*}(p)
    \]
    so that $\cR_\alpha(\bone^{\NAo} p) \subset \cR'_\alpha(\bone^{\NAo} p)$ for all $p$. Since we only used the operations 
    of Proposition \ref{prop:unionrule} and \ref{prop:extensionrule}, 
    $\cR'_\alpha \in \mathbf{clo}_\mathbb{D}(\alpha)$, we showed the inductive step as needed. 
\end{proof}

\section{More simulation results}\label{app:experiments}

Here we reproduce several more simulation settings by varying the parameters in Figures \ref{fig:block_uniformly} 
and \ref{fig:banded_clustered}. We also compare against two additional methods 
at level $\alpha = 0.1$, called  
\texttt{BYgraph} and \texttt{eBH} in the figures. 

\texttt{BYgraph} refers to the $\BH(\alpha')$ procedure where
\[
    \alpha' = \sup\left\{a : 1 - \left(1 - \frac{a b}{m} \sum_{s = 1}^b 1/s \right)^{m/b} \leq \alpha\right\},
\]
with $b = 100$ throughout to match the underlying simulation settings. This is based on the FDR lower bound derived in Appendix \ref{app:BYD}, 
specifically the version for block dependence where all blocks have size $b$. We do not show any formal FDR guarantees for this procedure. However, when the graph $\mathbb{D}$ does encode equally $b$-sized blocks, the power of \texttt{BYgraph} upper bounds the power of $\BH(\gamma^{-1}_\mathbb{D}(\alpha))$, where $\gamma_\mathbb{D}(\cdot)$ was defined in \eqref{eq:worstcasefdr}.

\texttt{eBH} refers to a way to use the eBH procedure of \citet{wangFalseDiscoveryRate2021}, modeled after their Example 2. We obtained e-values $e_i = \lambda p_i^{\lambda - 1}$ with $\lambda = 1/2$, 
and ran eBH with boosting factor $(2/\alpha)^{1/2}$. 
Effectively, this is just running the BH procedure with modified level 
$\alpha' = \sqrt{2 \alpha}$ and modified $p$-values $p'_i = 2 \sqrt{p_i}$. 

\begin{figure}[htbp]
    \centering
    \includegraphics[height = 0.3\textheight]{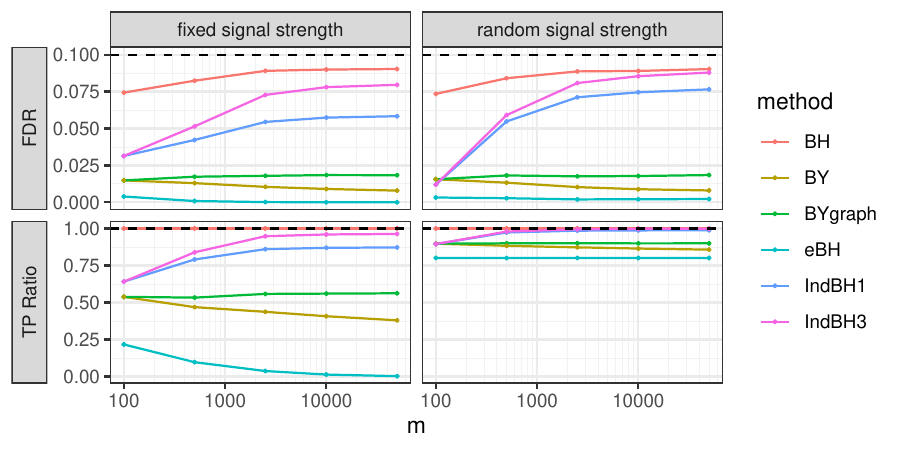}
    \caption{Results for block dependence with scattered signals. FDR control level set to $\alpha = 0.1$. Simulation parameters set to $\pi_0 = 0.9, \texttt{tarpow} = 0.6, \rho = 0.5, b = 100$. (This is the same setting as Figure \ref{fig:block_uniformly}). }\label{fig:block_uniformly_variant1}
\end{figure}

\begin{figure}[htbp]
    \centering
    \includegraphics[height = 0.3\textheight]{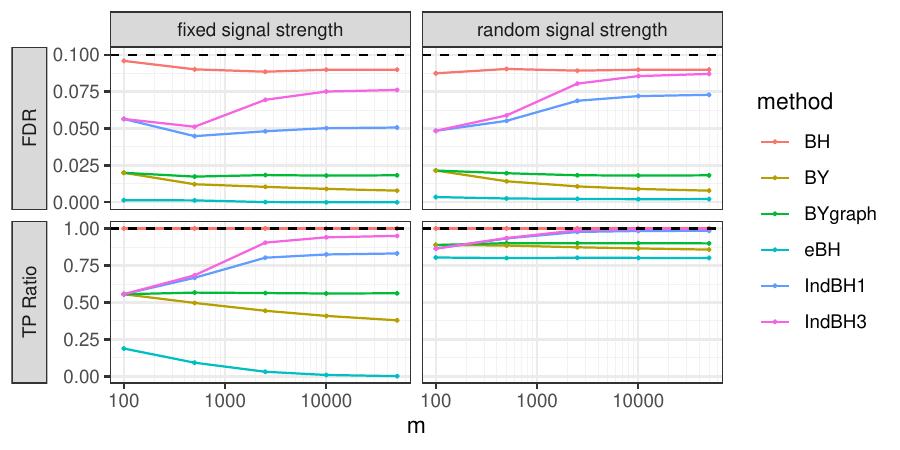}
    \caption{Results for banded dependence with clustered signals. FDR control level set to $\alpha = 0.1$. Simulation parameters set to $\pi_0 = 0.9, \texttt{tarpow} = 0.6, \rho = 0.5, b' = 100, \lambda_0 = 20, \tau = 6$. (This is the same setting as Figure \ref{fig:banded_clustered}.)}\label{fig:banded_clustered_variant1}
\end{figure}

\begin{figure}[htbp]
    \centering
    \includegraphics[height = 0.3\textheight]{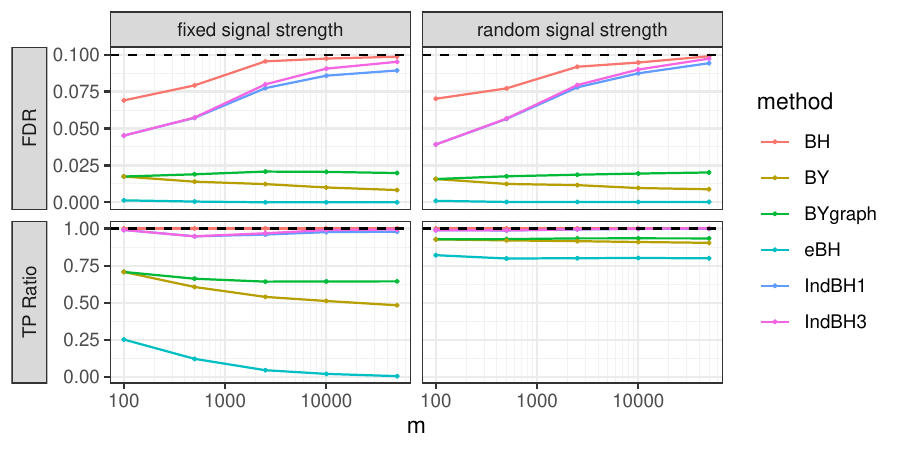}
    \caption{Results for block dependence with scattered signals. FDR control level set to $\alpha = 0.1$. Simulation parameters set to $\pi_0 = 0.99, \texttt{tarpow} = 0.6, \rho = 0.5, b = 100$. }\label{fig:block_uniformly_variant2}
\end{figure}

\begin{figure}[htbp]
    \centering
    \includegraphics[height = 0.3\textheight]{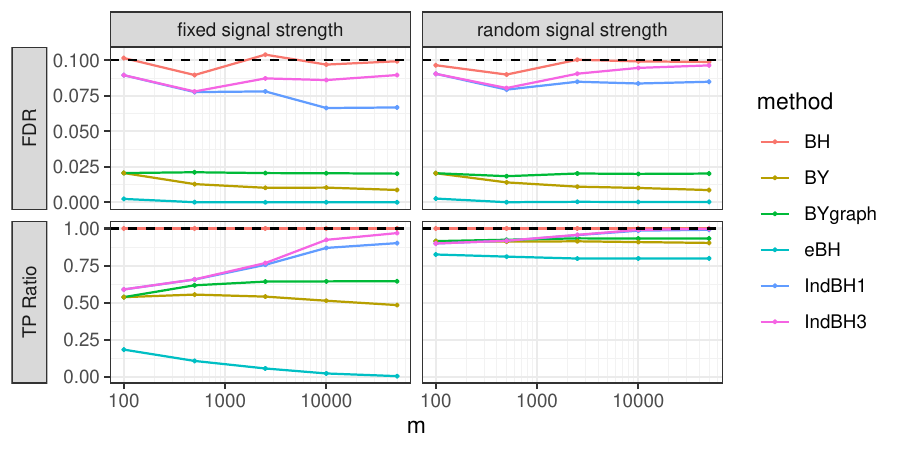}
    \caption{Results for banded dependence with clustered signals. FDR control level set to $\alpha = 0.1$. Simulation parameters set to $\pi_0 = 0.99, \texttt{tarpow} = 0.6, \rho = 0.5, b' = 100, \lambda_0 = 20, \tau = 6$. }\label{fig:banded_clustered_variant2}
\end{figure}

\begin{figure}[htbp]
    \centering
    \includegraphics[height = 0.3\textheight]{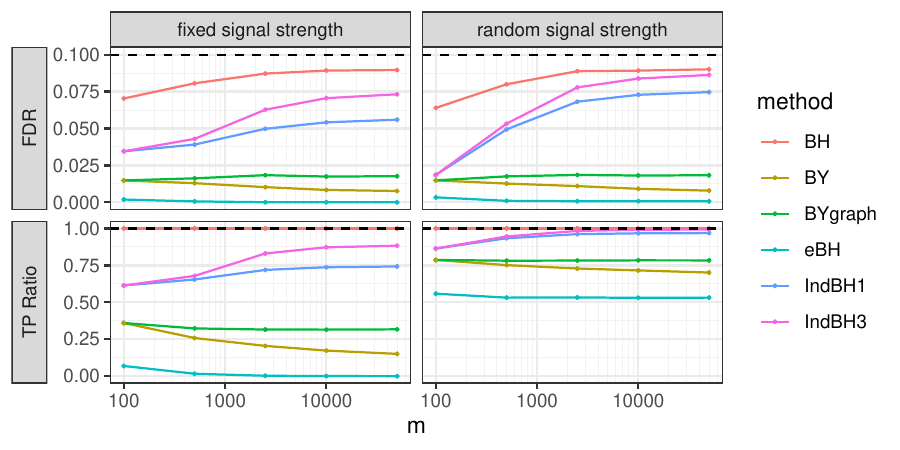}
    \caption{Results for block dependence with scattered signals. FDR control level set to $\alpha = 0.1$. Simulation parameters set to $\pi_0 = 0.9, \texttt{tarpow} = 0.3, \rho = 0.5, b = 100$.  }\label{fig:block_uniformly_variant3}
\end{figure}

\begin{figure}[htbp]
    \centering
    \includegraphics[height = 0.3\textheight]{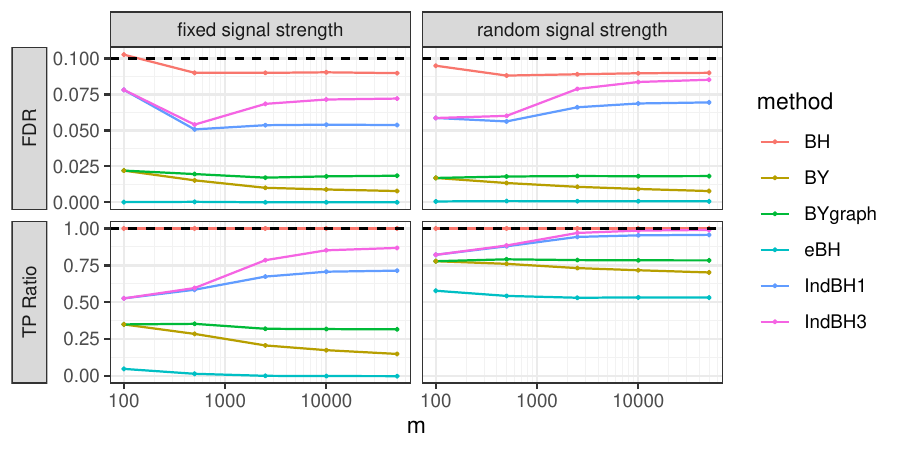}
    \caption{Results for banded dependence with clustered signals. FDR control level set to $\alpha = 0.1$. Simulation parameters set to $\pi_0 = 0.9, \texttt{tarpow} = 0.3, \rho = 0.5, b' = 100, \lambda_0 = 20, \tau = 6$. }\label{fig:banded_clustered_variant3}
\end{figure}

\begin{figure}[htbp]
    \centering
    \includegraphics[height = 0.3\textheight]{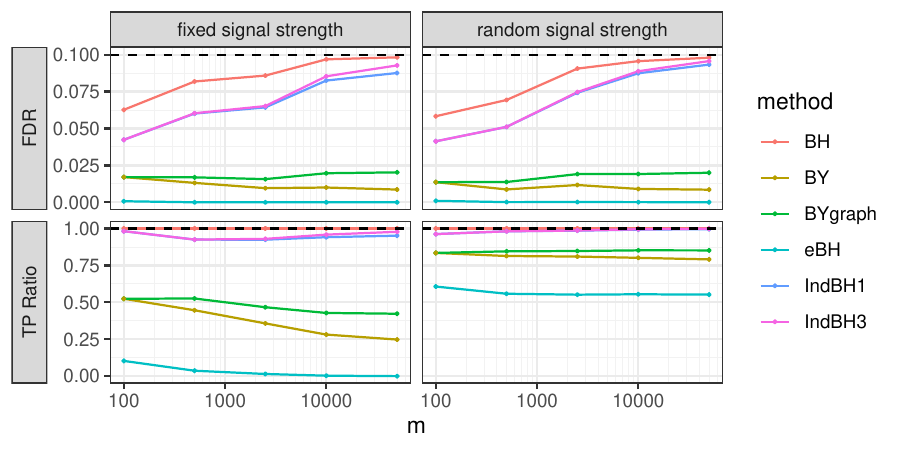}
    \caption{Results for block dependence with scattered signals. FDR control level set to $\alpha = 0.1$. Simulation parameters set to $\pi_0 = 0.99, \texttt{tarpow} = 0.3, \rho = 0.5, b = 100$. }\label{fig:block_uniformly_variant4}
\end{figure}

\begin{figure}[htbp]
    \centering
    \includegraphics[height = 0.3\textheight]{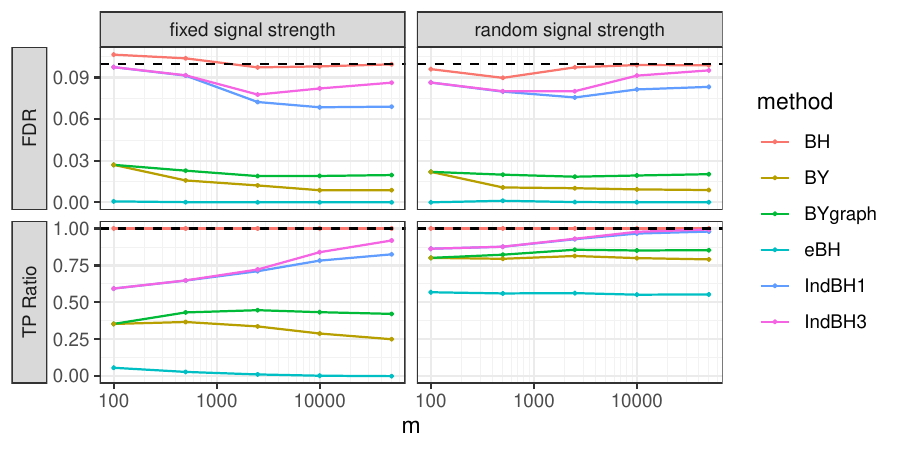}
    \caption{Results for banded dependence with clustered signals. FDR control level set to $\alpha = 0.1$. Simulation parameters set to $\pi_0 = 0.99, \texttt{tarpow} = 0.3, \rho = 0.5, b' = 100, \lambda_0 = 20, \tau = 6$.}\label{fig:banded_clustered_variant4}
\end{figure}

\end{document}